\title{Twin-width and polynomial kernels}
\titlerunning{Twin-width and polynomial kernels}
\author{\'{E}douard Bonnet}{Univ Lyon, CNRS, ENS de Lyon, Université Claude Bernard Lyon 1, LIP UMR5668, France \and \url{http://perso.ens-lyon.fr/edouard.bonnet/}}{edouard.bonnet@ens-lyon.fr}{https://orcid.org/0000-0002-1653-5822}{}
\author{Eun Jung Kim}{Universit\'{e} Paris-Dauphine, PSL University, CNRS UMR7243, LAMSADE, Paris, France}{eun-jung.kim@dauphine.fr}{https://orcid.org/0000-0002-6824-0516}{}
\author{Amadeus Reinald}{Univ Lyon, CNRS, ENS de Lyon, Université Claude Bernard Lyon 1, LIP UMR5668, France}{amadeus.reinald@ens-lyon.fr}{}{}
\author{St\'{e}phan Thomass\'{e}}{Univ Lyon, CNRS, ENS de Lyon, Universit\'{e} Claude Bernard Lyon 1, LIP UMR5668, France}{stephan.thomasse@ens-lyon.fr}{}{}
\author{R\'{e}mi Watrigant}{Univ Lyon, CNRS, ENS de Lyon, Universit\'{e} Claude Bernard Lyon 1, LIP UMR5668, France}{remi.watrigant@univ-lyon1.fr}{https://orcid.org/0000-0002-6243-5910}{}
\authorrunning{\'E. Bonnet, E. J. Kim, A. Reinald, S. Thomassé, R. Watrigant}
\keywords{Twin-width, kernelization, lower bounds, Dominating Set}
\newtheorem*{rep@theorem}{\rep@title}
\newcommand{\newreptheorem}[2]{%
\newenvironment{rep#1}[1]{%
 \def\rep@title{#2 \ref{##1}}%
 \begin{rep@theorem}}%
 {\end{rep@theorem}}}
\tikzset{draw half paths/.style 2 args={%
  decoration={show path construction,
    lineto code={
      \draw [#1] (\tikzinputsegmentfirst) -- 
         ($(\tikzinputsegmentfirst)!0.5!(\tikzinputsegmentlast)$);
      \draw [#2] ($(\tikzinputsegmentfirst)!0.5!(\tikzinputsegmentlast)$)
        -- (\tikzinputsegmentlast);
    }
  }, decorate
}}
\renewcommand{\geq}{\geqslant}
\renewcommand{\leq}{\leqslant}
\newcommand{\kds}{\textsc{$k$-Dominating Set}\xspace}
\newcommand{\ds}{\textsc{Dominating Set}\xspace}
\newcommand{\kis}{\textsc{$k$-Independent Set}\xspace}
\newcommand{\kconvc}{\textsc{Connected $k$-Vertex Cover}\xspace}
\newcommand{\kcapvc}{\textsc{Capacitated $k$-Vertex Cover}\xspace}
\newcommand{\lk}{LK\xspace}
\newcommand{\pk}{PK\xspace}
\newcommand{\defparproblem}[4]{
 \vspace{1mm}
\noindent\fbox{
 \begin{minipage}{0.96\textwidth}
 \begin{tabular*}{\textwidth}{@{\extracolsep{\fill}}lr} #1 & {\bf{Parameter:}} #3 \\ \end{tabular*}
 {\bf{Input:}} #2 \\
 {\bf{Question:}} #4
 \end{minipage}
 }
 \vspace{1mm}
}
\theoremstyle{definition}
\newtheorem{reduction}{Reduction Rule}
\newenvironment{proofofclaim}{\noindent \textsc{Proof of the Claim:}}{\hfill$\Diamond$\medskip}
\newcommand{\tww}{tww}
\newcommand{\cR}{\mathcal{R}}
\newcommand{\cQ}{\mathcal{Q}}
\newcommand{\cB}{\mathcal{B}}
\renewcommand{\cL}{\mathcal{L}} 
\colorlet{npink}{red!30!pink}
\newcommand\abs[1]{\lvert #1\rvert}
\begin{document}

\maketitle

\begin{abstract}
  We study the existence of polynomial kernels, for parameterized problems without a polynomial kernel on general graphs, when restricted to graphs of bounded twin-width.
  Our main result is that a polynomial kernel for \textsc{$k$-Dominating Set} on graphs of twin-width at most 4 would contradict a standard complexity-theoretic assumption.
  The reduction is quite involved, especially to get the twin-width upper bound down to 4, and can be tweaked to work for \textsc{Connected $k$-Dominating Set} and \textsc{Total $k$-Dominating Set} (albeit with a worse upper bound on the twin-width).
  The \textsc{$k$-Independent Set} problem admits the same lower bound by a much simpler argument, previously observed [ICALP '21], which extends to \textsc{$k$-Independent Dominating Set}, \textsc{$k$-Path}, \textsc{$k$-Induced Path}, \textsc{$k$-Induced Matching}, etc.
  
  On the positive side, we obtain a simple quadratic vertex kernel for \textsc{Connected $k$-Vertex Cover} and \textsc{Capacitated $k$-Vertex Cover} on graphs of bounded twin-width.
  Interestingly the kernel applies to graphs of Vapnik-Chervonenkis density~1, and does not require a witness sequence.
  We also present a more intricate $O(k^{1.5})$ vertex kernel for~\textsc{Connected $k$-Vertex Cover}.
  Finally we show that deciding if a graph has twin-width at most~1 can be done in polynomial time, and observe that most optimization/decision graph problems can be solved in polynomial time on graphs of twin-width at most~1.
\end{abstract}
\maketitle

\section{Introduction}\label{sec:intro}

The \emph{twin-width} of a graph can be defined in the following way.
A~\emph{partition sequence} of an $n$-vertex graph $G$, is a sequence $\mathcal P_n, \ldots, \mathcal P_1$ of partitions of its vertex set $V(G)$, such that $\mathcal P_n$ is the set of singletons $\{\{v\}~:~v \in V(G)\}$, $\mathcal P_1$ is the singleton set $\{V(G)\}$, and for every $2 \leqslant i \leqslant n$, $\mathcal P_{i-1}$ is obtained from $\mathcal P_i$ by merging two of its parts into one.
Two parts $P, P'$ of a same partition $\mathcal P$ of $V(G)$ are said \emph{homogeneous} if either every pair of vertices $u \in P, v \in P'$ are non-adjacent, or every pair of vertices $u \in P, v \in P'$ are adjacent.
Finally the twin-width of $G$ is the least integer $d$ such that there is partition sequence $\mathcal P_n, \ldots, \mathcal P_1$ of $G$ with every part of every $\mathcal P_i$ ($1 \leqslant i \leqslant n$) being homogeneous to every other parts of $\mathcal P_i$ but at most~$d$.
We call such a partition sequence a~\emph{$d$-sequence}.

On the one hand, a surprisingly wide variety of graphs have low twin-width.
Graph classes with bounded twin-width include classes with bounded treewidth, or even rank-width, proper minor-closed classes, every hereditary proper subclass of permutation graphs, bounded-degree string graphs~\cite{twin-width1}, classes with bounded queue or stack number, some expander families~\cite{twin-width2}.
Furthermore on those particular classes, we can find (non necessarily optimum) $O(1)$-sequences in polynomial time.
We observe that such an approximation algorithm is still missing in general graphs, but exists for \emph{ordered} binary structures~\cite{twin-width4}.

On the other hand, bounded twin-width classes have interesting algorithmic and structural properties.
Remarkably, given a partition sequence witnessing that an $n$-vertex graph $G$ has twin-width at most~$d$, and a first-order sentence $\varphi$, one can decide if $\varphi$ holds in $G$ in time $f(|\varphi|,d)\,n$~for a computable, but non-elementary, function~$f$~\cite{twin-width1}.
That general framework is called \emph{first-order model checking}, and generalizes problems like \kis with $\varphi = \exists x_1 \ldots \exists x_k \bigwedge_{1 \leqslant i < j \leqslant k} \neg (x_i = x_j \vee E(x_i,x_j))$ and \kds with $\varphi = \exists x_1 \ldots \exists x_k \forall x \bigvee_{1 \leqslant i \leqslant k} (x = x_i \vee E(x,x_i))$.
For these two particular problems, though, a~much better running time of $2^{O_d(k)}\,n$ is possible~\cite{twin-width3}.
In contrast, an algorithm running in time $f(k)n^{o(k)}$ for either of these problems on \emph{general} graphs, with $f$ being any computable function, would imply the improbable (or at least breakthrough) result that \textsc{$3$-SAT} can be solved in subexponential time~\cite{Chen06}.

Now we know that \kis and \kds are fixed-parameter tractable (\FPT), i.e., solvable in time $f(k)\,n^{O(1)}$, on graphs of bounded twin-width given with an $O(1)$-sequence, one can then ask whether polynomial kernels exist.
A \emph{kernel} is a polytime algorithm that produces, given an instance of a parameterized problem $\Pi$, an equivalent instance of $\Pi$ (i.e., the output is a YES-instance if and only if the input is a YES-instance) of size only function of the parameter.
A~\emph{polynomial kernel} is a kernel for which the latter function is polynomial.
Any decidable problem with a kernel is \FPT, and any {\FPT} problem admits a kernel.
However not every {\FPT} problem is believed to have a polynomial kernel.
And indeed such an outcome would imply an unlikely collapse of complexity classes.

We already observed that there is a constant $d$ such that \kis is highly unlikely to have a polynomial kernel on graphs with twin-width at most~$d$~\cite{twin-width3}.
The OR-composition\footnote{See~\cref{sec:prelim} for the relevant background on how to rule out a polynomial kernel} is straightforward from the following facts: (1) cliques have twin-width~0 and planar graphs have bounded twin-width~\cite{twin-width1}, (2) the twin-width of every graph is the maximum twin-width of its modules and quotient graph (see~\cref{lem:modular}), and (3) \textsc{Maximum Independent Set} is NP-hard in (subcubic) planar graphs~\cite{Garey76}.
Then one can blow every vertex of a clique $K_t$ into a distinct graph among $t$ planar \textsc{Maximum Independent Set}-instances.
Facts (1) and (2) imply that the constructed graph has bounded twin-width, while the correctness of the OR-composition is easy to check.
Incidentally the exact same reduction rules a polynomial kernel out for~\textsc{$k$-Independent Dominating Set}.
Furthermore \textsc{Minimum Independent Dominating Set} is NP-hard in grid graphs~\cite{Clark90}, and \textsc{Maximum Independent Set} is NP-hard in subdivisions of grid graphs (since these coincide with planar graphs of degree at most~4).
Since these graphs have twin-width at most 4 (see~\cref{lem:subdivided-grid}), no polynomial kernel is likely to exist for both problems (even when a 4-sequence is given in the input).
It should be noted that this simple reduction fails for \kds: one can dominate the constructed graph by picking only two vertices (from two distinct instances).

The parameterized complexity ({\FPT} algorithms and kernels) of \kds\footnote{All the subsequent results also hold for \kis.} on ``sparse''\footnote{\emph{Sparse} is an overloaded term; here we use it as \emph{not containing arbitrarily large bicliques as subgraphs}.} classes has a rich and interesting history.
Subexponential {\FPT} algorithms with running time $2^{O(\sqrt k)}\,n^{O(1)}$ are known in planar graphs~\cite{Fomin06,Dorn06}, bounded-genus graphs and more generally classes excluding a fixed minor~\cite{Demaine05,Fomin18,Gutner09}, and an {\FPT} algorithm with running time $2^{O(k)}\,n$ exists in classes excluding a fixed topological minor~\cite{AlonG09}.
On these classes the mere existence of an {\FPT} algorithm (but not the particular, enhanced running time) is subsumed by an algorithmic meta-theorem of Grohe, Kreutzer, and Siebertz~\cite{Grohe17} that says that first-order model checking is {\FPT} in any nowhere dense class.\footnote{The definition of nowhere denseness being technical and unnecessary to the current paper, we refer the interested reader to~\cite{sparsity}.
Let us just mention that bounded-degree graphs, planar graphs, and proper (topological) minor-closed classes are all nowhere dense.}
More general than nowhere dense classes are bounded-degeneracy graphs, or further, $K_{t,t}$-free classes, i.e., excluding the biclique $K_{t,t}$ as a subgraph.
Alon and Gutner~\cite{AlonG09} give an {\FPT} algorithm in $d$-degenerate graphs running in time $k^{O(dk)}\,n$.
And Philip, Raman, and Sikdar~\cite{Philip12} extend the fixed-parameter tractability of \kds to any $K_{t,t}$-free class (for a fixed $t$).
Telle and Villanger~\cite{Telle19} further show that \kds on $K_{t,t}$-free graphs is {\FPT} for the combined parameter $k+t$.

In parallel to these algorithms, the existence of polynomial, or even linear, kernels have been thoroughly investigated.
In 2004, Alber, Fellows, and Niedermeier~\cite{Alber04} presented a linear kernel for \kds on planar graphs that triggered a series of works.
Linear kernels are known on planar graphs~\cite{Alber04,Chen07}, bounded-genus graphs~\cite{Fomin04}, apex-minor-free graphs~\cite{bidim-kernel}, but more generally in any class excluding a fixed topological minor~\cite{Fomin18}.
\kds admits a polynomial kernel on graphs of girth 5 (that is, excluding the triangle and the biclique $K_{2,2}$ as a subgraph)~\cite{Raman08}.
A polynomial kernel of size $O(k^{(t+1)^2})$ is obtained for $K_{t,t}$-free graphs~\cite{Philip12}, the most general ``sparse'' class.
Contrary to the {\FPT} algorithm, a polynomial kernel in the parameter $k+t$ is highly unlikely~\cite{Dom14}.
More precisely, for any $\varepsilon > 0$, a kernel of size $k^{(t-1)(t-3)-\varepsilon}$ would imply that {\coNP} $\subseteq$ \NP/poly~\cite{CyganGH17}.
On classes of bounded expansion\footnote{We will not need a definition of \emph{expansion} here. Bounded expansion classes are more general than topological-minor-free classes and less general than nowhere dense classes.} \kds has a linear kernel, while the seemingly closely related \textsc{Connected $k$-Dominating Set} has no polynomial kernel~\cite{Drange16}.
The latter result refines a reduction showing the same lower bound on 2-degenerate graphs~\cite{CyganPPW12}.

Beyond sparse classes, for which most answers turn out positive, the parameterized complexity of \kds seems to conceal many surprises, some of which recently unraveled. 
We already mentioned that \kds is {\FPT} on bounded twin-width graphs given with an $O(1)$-sequence.
Let us also mention that the same problem is actually \W$[1]$-hard (hence unlikely {\FPT}) on circle graphs~\cite{Bousquet14}.
This is somewhat unexpected since \textsc{Dominating Set} is polytime solvable on permutation graphs~\cite{Farber85}, a large subclass of circle graphs.
On the positive side, \kds admits a polynomial kernel on so-called \emph{$c$-closed graphs}~\cite{Koana20}, a far-reaching dense generalization of bounded $d$-degenerate graphs. 

\paragraph*{Our results.}
We are back to wondering whether \kds admits a polynomial kernel on graphs given with an $O(1)$-sequence.
On the one hand, a polynomial kernel would ``fit all the data points'' considering that the examples of bounded twin-width classes previously given 
are either $K_{t,t}$-free (and one concludes with~\cite{Philip12}) or are dense classes on which \textsc{Minimum Dominating Set} is polytime solvable, like bounded rank-width graphs~\cite{Courcelle00}, and (subclasses of) permutation graphs~\cite{Farber85}.
On the other hand, the same could be said of \kis for which we already ruled out such a kernel.
Yet we will see in \cref{sec:outline} that the above OR-composition not working for \kds is part of a more general obstacle toward establishing its incompressibility.   
In the same section we lay down our plan to overcome that obstacle and show the following.

\begin{theorem}\label{thm:main}
Unless {\coNP} $\subseteq$ \NP/poly, \kds on graphs of twin-width at most 4 does not admit a polynomial kernel, even if a 4-sequence of the graph is given.
\end{theorem}

We mentioned that the same statement holds much more directly for \kis and \textsc{$k$-Independent Dominating Set}.
With analogous arguments, we can add \textsc{$k$-Path}, \textsc{$k$-Induced Path}, \textsc{$k$-Induced Matching} to the list. 
Local gadget modifications of the proof of~\cref{thm:main} yield the same kernel lower bound for variants of \kds such as \textsc{Connected $k$-Dominating Set} and \textsc{Total $k$-Dominating Set}, on graphs of bounded twin-width.
More work would be necessary to get the lower bound for twin-width at most~4. 

On the positive side, \kconvc and \kcapvc admit polynomial kernels on graphs of bounded twin-width, while such kernels are unlikely on general graphs~\cite{Dom14}.
Interestingly, our kernelization algorithm does not require an $O(1)$-sequence.
\begin{theorem}\label{thm:cnvc-cpvc}
  \kconvc and \kcapvc admit a kernel with $O(k^2)$ vertices on any class of bounded twin-width.
\end{theorem}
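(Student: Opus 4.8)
The plan is a Buss-type kernelization refined by a twin-reduction whose correctness hinges on the host class having Vapnik--Chervonenkis density~$1$ (equivalently, linear neighbourhood complexity), a property that bounded twin-width is known to imply. Fix an instance $(G,k)$, together with a capacity function for \kcapvc. A \emph{solution} is a connected, respectively capacitated, vertex cover of size at most~$k$; crucially, every such object is in particular a vertex cover of size~$\le k$, which is the only feature the reduction rules use.

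First I would extract the coarse structure. Let $A = \{v \in V(G) : \deg_G(v) > k\}$. A vertex lying outside a size-$\le k$ vertex cover forces all of its $>k$ neighbours into the cover, so $A$ is contained in every solution; if $|A| > k$ we output a trivial NO-instance. On $B := V(G) \setminus A$ every degree is at most~$k$. Split $B$ into $B_1$, the vertices having at least one neighbour in $B$, and $B_2 := B \setminus B_1$, the ``satellites'' whose neighbourhood lies entirely inside $A$. In a YES-instance the $\le k$ solution vertices inside $B$ form a vertex cover of $G[B]$, each covering $\le k$ edges, so $G[B]$ has $\le k^2$ edges and hence $|B_1| \le 2k^2$; if $|B_1| > 2k^2$ we again answer NO. One also removes isolated vertices and, for \kconvc, rejects disconnected graphs, modulo trivial corner cases.

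The core is the reduction on $B_2$. Group the satellites by their (entire) neighbourhood $N \subseteq A$. Since $|A| \le k$ and the class has VC density~$1$, there are only $O(|A|) = O(k)$ distinct such sets $N$. For each group, if it has more than $k+1$ members we delete all but $k+1$ of them. Deleting satellites can never create a solution (a solution of the reduced graph, with the dropped incidences removed, was already a solution of that graph), so the content is that deletion does not destroy a solution and that a solution of the reduced graph lifts back. The key point making both directions work is that keeping $\min(|\text{group}|, k+1)$ representatives of every type guarantees that each $x \in A$ still has degree $>k$ in the reduced graph — either it retains all its neighbours, or one of its oversized satellite-types already contributes $k+1 > k$ of them — so $A$ remains forced into any size-$\le k$ solution of the reduced graph; the edges at a deleted satellite, whose endpoints all lie in $A$, are therefore still covered, and since deleted vertices never belong to a solution, the subgraph induced on the solution, hence its connectivity, is unchanged. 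For \kconvc one additionally notes that a single retained representative per type can play whatever connectivity role a deleted satellite might have had, all satellites of a type being interchangeable and each adjacent to the whole of its neighbourhood in $A$. For \kcapvc one adds a few routine rules dealing with the capacities (in particular detecting infeasibility when a high-degree vertex has too small a capacity) and re-routes the $O(k)$ deleted-satellite edges per retained vertex using the slack guaranteed by having $k+1 > |S|$ representatives of each type; this capacity bookkeeping is the fiddliest part of the argument.

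Putting the bounds together, the reduced graph has at most $|A| + |B_1| + (k+1)\cdot O(k) = k + 2k^2 + O(k^2) = O(k^2)$ vertices, and every rule runs in polynomial time. Observe that only degrees and neighbourhoods of the input are inspected, and the VC-density bound enters solely to bound the number of satellite types in the size analysis; in particular no $O(1)$-sequence witnessing the twin-width is required. I expect the main obstacle to be precisely this twofold point: making the twin-reduction on $B_2$ provably sound — cleanly for \kconvc and with the extra capacity arguments for \kcapvc — and invoking the linear neighbourhood complexity (VC density~$1$) of bounded twin-width classes, which is what keeps the number of satellite types, and hence the kernel, polynomial.
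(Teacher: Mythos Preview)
Your route for \kconvc is correct but more elaborate than the paper's. Rather than splitting $V\setminus A$ into $B_1$ and $B_2$ and bounding $|B_1|$ by edge-counting, the paper simply takes a $2$-approximate vertex cover $X$ (so $|X|\le 2k$ or output NO), notes that $V(G)\setminus X$ is independent, and applies the VC-density bound directly to $X$: there are $O_t(k)$ distinct neighbourhoods in $X$, and keeping at most $k+1$ false twins per class is safe by the usual swap argument. This sidesteps the Buss step and the $B_1$/$B_2$ distinction entirely; what your detour buys is only a marginally better constant (VC density invoked on $|A|\le k$ instead of $|X|\le 2k$).

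For \kcapvc, however, there is a real gap. Deleting excess satellites with no further change is not safe: a capacitated cover $T$ of the reduced graph $G'$ need not lift to $G$, since the $A$-endpoints of the deleted edges may have their capacity fully spent already in $G'$. Your ``slack from $k+1>|T|$ representatives'' only locates a retained twin outside $T$ whose own edges to $A$ are \emph{also} charged to $A$; this frees nothing for the deleted edges. The paper's fix is to couple each deletion with a capacity adjustment: remove a satellite $s$ of \emph{minimum} capacity in its twin class and decrement $c(v)$ by one for every $v\in N(s)$. The decrement reserves exactly the unit needed at each neighbour to cover $sv$ when lifting $T$ back to $G$; the minimum-capacity choice is what makes the forward swap (replacing $s\in T$ by a retained twin $s'\notin T$ with $c(s')\ge c(s)$) go through. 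Without this coupled capacity decrease the reduction rule is unsound as you have stated it.
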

A linear kernel (in the number of vertices) is known for apex-minor-free classes~\cite{bidim-kernel} via the generic framework of bidimensionality, and even for topological-minor-free classes~\cite{Kim16}.
Another powerful meta-theorem by Gajarsk\'y et al.~\cite{Gajarsky17} says that every problem with the so-called~\emph{finite integer index} (intuitively, that its boundaried graphs provide finitely many distinct contexts) has a linear kernel on bounded expansion classes when parameterized by the vertex cover number (and more generally by the size of a smallest vertex subset whose deletion leaves the graph with bounded treedepth).
In particular this yields a linear kernel for \kconvc, further extending the two previous results.
Besides \kconvc has a polynomial kernel on $K_{t,t}$-free graphs~\cite{CyganPPW12}.


\cref{thm:cnvc-cpvc} is based on the following useful lemma stating that, in graphs of bounded twin-width, the number of distinct neighborhood traces inside a subset of vertices is at most linear in the size of the subset. 
\begin{lemma}\label{lem:vc-density-1}
  There is a function $f$ such that for every graph $G$ of twin-width~$d$ and $X \subseteq V(G)$, the number of distinct neighborhoods in $X$, $|\{N(v) \cap X~:~v \in V(G)\}|$, is at most $f(d)|X|$.
\end{lemma}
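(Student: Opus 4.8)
The plan is to exploit the known fact that bounded twin-width classes have bounded Vapnik--Chervonenkis (VC) dimension, and then upgrade ``bounded VC dimension'' to the stronger quantitative statement ``VC density at most~1'' using the $d$-sequence. More precisely, recall that a class has VC density at most~$1$ if the number of distinct traces $|\{N(v)\cap X : v\in V(G)\}|$ grows at most linearly in $|X|$; the Sauer--Shelah lemma only gives the weaker bound $O(|X|^{c})$ where $c$ is the VC dimension, so we genuinely need to use the sequence, not just the fact that twin-width bounds the VC dimension.

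The key tool is a $d$-sequence $\mathcal P_n,\ldots,\mathcal P_1$ of $G$. Fix $X\subseteq V(G)$ and run the sequence \emph{restricted to $X$}: that is, track the partition $\mathcal P_i \cap X := \{P\cap X : P\in\mathcal P_i,\ P\cap X\neq\emptyset\}$ of $X$. First I would observe that the number of \emph{error pairs} accumulated along the sequence is $O(d\cdot n)$ in total, but that the pairs relevant to $X$ — namely pairs of parts not homogeneous towards each other, where at least one part meets $X$ — number only $O(d|X|)$: each part meeting $X$ is involved in at most $d$ error pairs in each partition it survives, and there are at most $2|X|$ merges among $X$-meeting parts over the whole sequence, so a careful accounting bounds the ``cumulative red degree seen from $X$'' by something like $O(d|X|)$. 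Next, for a vertex $v\notin X$, consider the first moment $i(v)$ in the sequence (reading from $\mathcal P_n$ down) at which the part containing $v$ is \emph{not} homogeneous to some part $P$ with $P\cap X\neq\emptyset$; before that moment, $v$ behaves uniformly towards every $X$-meeting part, so $N(v)\cap X$ is a union of the $X$-meeting parts of $\mathcal P_{i(v)+1}$. I would then argue that $N(v)\cap X$ is determined by (i) the partition $\mathcal P_{i(v)+1}\cap X$ of $X$, which can be one of at most $|X|$ partitions arising along the restricted sequence, together with (ii) the single error pair that ``fires'' at step $i(v)$ and which $X$-meeting parts are on the adjacent side of the other (homogeneous) $X$-meeting parts — a choice bounded in terms of $d$ and the local structure. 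Charging each $v$ to this $(\text{restricted partition},\text{firing error pair})$ data, and using that there are only $O(d|X|)$ error pairs touching $X$ and only $O(|X|)$ restricted partitions, gives $|\{N(v)\cap X\}| \le f(d)|X|$.

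The main obstacle I anticipate is the bookkeeping in the step that says ``$N(v)\cap X$ is determined by a small amount of data at the critical time $i(v)$'': a priori, after $v$'s part becomes non-homogeneous to one $X$-meeting part, subsequent merges could still refine the picture, and $v$ could discriminate among the vertices lumped together inside that inhomogeneous part. The fix is to not stop at the \emph{first} inhomogeneity but to track, for each part $P$ meeting $X$, whether $v$'s part is homogeneous to it, and to note that the set of parts to which $v$'s part is inhomogeneous is exactly an ``active red neighbourhood'' of size $\le d$ at every step; since merging can only decrease the number of such parts over time (parts disappear by being merged), the total number of distinct inhomogeneous parts $v$ ever sees, across the entire run, telescopes to $O(d)$ per vertex — but summed against the $O(|X|)$ events where the $X$-side partition changes, this is still linear. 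Making that telescoping argument precise, and in particular pinning down exactly which ``labelled'' data (which parts, which side of each homogeneous relation, which subset of the $\le d$ inhomogeneous parts) a neighbourhood trace depends on, is where the real work lies; everything else is the Sauer--Shelah-style counting combined with the linear bound on error pairs seen from $X$.
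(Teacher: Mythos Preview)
Your approach is entirely different from the paper's, which does not track the $d$-sequence for this lemma at all. The paper argues via matrices: by \cref{thm:gridtheorem} (from~\cite{twin-width1}), if $\tww(G)\le d$ then $G$ admits a vertex ordering whose adjacency matrix has no $(2d+2)$-mixed minor; the paper then proves in \cref{thm:linearneigh}, using the Marcus--Tardos theorem, that any $0,1$-matrix with $s$ rows and more than $2^{4c_{2d+2}}s$ pairwise distinct columns must contain a $(2d+2)$-mixed minor. Applying this to the $|X|$-row submatrix of the adjacency matrix immediately gives $|\{N(v)\cap X : v\in V(G)\}|\le 2^{4c_{2d+2}}|X|$.

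Your sequence-chasing plan, by contrast, has a genuine gap at precisely the step you flag as ``where the real work lies''. First, the assertion that ``the total number of distinct inhomogeneous parts $v$ ever sees, across the entire run, telescopes to $O(d)$'' is false: take $v$ adjacent to $a_1,a_3,a_5,\dots$ and non-adjacent to $a_2,a_4,\dots$, with no other edges, and run the $2$-sequence that first merges each pair $(a_{2i-1},a_{2i})$ and then coalesces the resulting blocks; $v$ acquires $\Theta(m)$ pairwise distinct red neighbours along the way. Second, and more damagingly, the data you propose to charge each trace to does not determine the trace. At time $i(v)+1$ the set $N(v)\cap X$ is indeed a union of $X$-meeting parts, but \emph{which} union --- i.e.\ which of the up to $|X|$ homogeneous $X$-meeting parts lie on the adjacent side --- is not ``a choice bounded in terms of $d$ and the local structure''; it is a $2^{\Theta(|X|)}$-sized choice, and nothing in your sketch collapses it. A direct sequence-based proof can be made to work, but it needs a different anchor (e.g.\ the first moment $v$'s part \emph{contains} some $x\in X$, so that $N(v)\cap X$ can be compared against the trace of that anchor $x$ and the discrepancy confined to the $\le d$ red neighbours at that instant); the paper sidesteps all of this bookkeeping by going through Marcus--Tardos.
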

A more compact rewording, using the language of Vapnik-Chervonenkis parameters, is that the neighborhood set-system of graphs of bounded twin-width has VC density~1.
By extension, we will say that a graph class has VC density at most~1, if its neighborhood hypergraphs do.
That bounded twin-width classes have VC density~1 is an interesting property, that is shared with classes of bounded expansion.
For example it implies a constant-factor approximation for \textsc{Min Dominating Set} (obtained in a rather different manner in \cite{twin-width3}) via small $\varepsilon$-nets~\cite{Chan12}.
\cref{lem:vc-density-1} was independently obtained by Wojciech Przybyszewski in his master thesis~\cite{PrzTor}.

For \kconvc, an improved kernel can be obtained with a more elaborate argument.
\begin{theorem}\label{thm:cnvc-improved}
  \kconvc admits a kernel with $O(k^{1.5})$ vertices on classes with VC density at most~1.
\end{theorem}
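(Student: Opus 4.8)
The plan is to sharpen the quadratic kernel of \cref{thm:cnvc-cpvc} for \kconvc by exploiting connectivity. As there, I would compute a maximal matching $M$, output a trivial no-instance if $|M|>k$, and otherwise let $C:=V(M)$ be a vertex cover with $|C|\le 2k$, so that $I:=V(G)\setminus C$ is independent. After deleting isolated vertices and keeping only one pendant neighbour per vertex of $C$, every surviving vertex of $I$ has a neighbourhood of size at least $2$ inside $C$; I group these vertices into \emph{classes} according to their neighbourhood. \cref{lem:vc-density-1} gives $O(k)$ classes, so the quadratic bound comes solely from a class possibly holding up to $k$ vertices, and the whole point is to cut down the number of vertices kept per class.

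Two facts about an inclusion-minimum connected vertex cover $S$ drive the improvement. First, $S$ meets every class in at most one vertex: a second class-vertex has its neighbourhood inside $C$, hence inside $S$, so deleting it preserves the cover property, and it cannot disconnect $S$ since the surviving class-vertex is joined to all of that neighbourhood. Second, if a class with neighbourhood $N$ has more than $|N|$ vertices, then some minimum $S$ contains $N$: otherwise every class-vertex is forced into $S$ to cover its edges to the part of $N$ missing from $S$, and exchanging all but one of them for the vertices of $N$ neither increases $|S|$ nor breaks connectedness. The second fact yields a reduction rule: whenever a class with neighbourhood $N$ has more than $|N|$ vertices, delete class-vertices down to exactly $|N|+1$ (both instances still satisfy the hypothesis, and in the reduced one the surplus class-vertices are covered once $N\subseteq S$, so the rule is safe). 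After this rule, classes with $|N|\le\sqrt k$ have at most $\sqrt k+1$ vertices each and, being $O(k)$ in number, contribute only $O(k^{1.5})$ vertices; together with $C$ and the pendants this is within budget.

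What remains, and what I expect to be the main obstacle, is to bound the contribution of the \emph{wide} classes, those with $|N|>\sqrt k$: there are still $O(k)$ of them, but such a class may retain up to $|N|\le k$ vertices. The guiding dichotomy is that in any connected vertex cover $S$ of size at most $k$ a wide class either has all of $N$ inside $S$ --- in which case its vertices are automatically covered and a single ``connector'' suffices to represent it --- or it places more than $\sqrt k$ of its own vertices in $S$, and this can happen for at most $\sqrt k$ wide classes at once since those vertices are pairwise distinct. Converting this solution-dependent observation into reduction rules that are sound for \emph{every} candidate solution is the delicate part: one must show it is safe to cap each wide class at $O(\sqrt k)$ vertices (possibly after attaching a shared pendant gadget that forces the few, heavily reused vertices of the relevant $N$'s into the cover), carefully comparing the cost $|N|$ of forcing $N$ against the $\Omega(\sqrt k)$ class-vertices that survive the cap, so that a yes-instance is never rendered infeasible nor a no-instance feasible. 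This is precisely where \cref{lem:vc-density-1} --- perhaps applied to a randomly sampled size-$O(\sqrt k)$ test subset of $C$ --- has to be married with the connectivity bound. Once the wide classes are handled, $C$, the pendants, the narrow classes and the wide classes jointly span $O(k^{1.5})$ vertices, proving the theorem.
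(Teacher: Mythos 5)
You correctly diagnose the obstacle and explicitly leave it open: with each class compared to its full neighbourhood $N$ in the vertex cover, a module can retain up to $|N|=\Theta(k)$ vertices, and you have no sound rule that caps the wide classes, so the proposal does not yet yield a proof. The paper dissolves the wide/narrow dichotomy by splitting the vertex cover $X$ (your $C$) rather than the modules: $X^b$ is the set of vertices of $X$ with at least $k+1$ neighbours in the independent set $Y:=V(G)\setminus X$, and $X^s:=X\setminus X^b$. Every $w\in X^b$ is \emph{forced} into any connected vertex cover $T$ of size at most $k$, because $w\notin T$ would place $k$ or more pairwise non-adjacent neighbours of $w$ into $T$, leaving $T$ both over budget and disconnected. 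The reduction rule therefore only compares a module $Y_i$ against its neighbourhood $X_i\subseteq X^s$ --- not against $N(Y_i)\cap X$ --- and deletes down to $|Y_i|\le |X_i|+1$ whenever $X_i\ne\emptyset$; safeness is an exchange argument using that $X^b\subseteq T$ for free. In effect the ``heavily reused'' vertices you wanted a pendant gadget to force are already forced by their own degree.

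The size bound is then a single global count rather than a case analysis. Each vertex of $X^s$ has at most $k$ neighbours in $Y$, so at most $|X^s|\cdot k\le 2k^2$ edges run between $X^s$ and $Y$; these decompose into the complete bipartite graphs $(Y_i,X_i)$, so $\sum_i |Y_i||X_i|\le 2k^2$; and after the rule $\sum_i |Y_i||X_i|\ge\sum_i(|Y_i|-1)^2\ge\frac{1}{q}\big(\sum_i(|Y_i|-1)\big)^2$ by the power-mean inequality, where $q=O(k)$ by \cref{lem:vc-density-1}. Combining gives $|Y|=O(k^{1.5})$ directly, with no sampling, no gadget, and no wide/narrow split. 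One further remark: your opening claim that a minimum connected vertex cover $S$ meets each class at most once is justified by ``the class-vertex's neighbourhood is inside $C$, hence inside $S$,'' but $C\not\subseteq S$ in general (e.g.\ $S=\{y_1,y_2,a\}$ with $N(y_1)=N(y_2)=\{a,b\}$ is a connected vertex cover containing two class-vertices). The correct formulation is that \emph{some} optimal $S$ can be rearranged to avoid duplicates, which is precisely the exchange the paper's safeness lemma makes.
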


\begin{table}[h!]
\begin{tabular}{clll}
  \toprule
                             & \kds                                       & \textsc{Connected $k$-DS}                         & \textsc{Connected $k$-VC} \\
  \midrule
  general                    & \W$[2]$-complete~\cite{df13}               & \W$[2]$-complete~\cite{df13}                      & \FPT~\cite{Cygan12}, no \pk~\cite{Dom14}\\
  bounded expansion          & \lk~\cite{Drange16}                        & \FPT~\cite{Dawar09}, no \pk~\cite{Drange16}       & \lk~\cite{Gajarsky17}                   \\
  bounded biclique           & \pk~\cite{Philip12}                        & \FPT~\cite{Telle19}, no \pk~\cite{CyganPPW12}     & \pk, no \lk~\cite{CyganPPW12}           \\
  bounded degeneracy         & \pk~\cite{Philip12}                        & \FPT~\cite{Golovach08}, no \pk~\cite{CyganPPW12}  & \pk, no \lk~\cite{CyganPPW12,CyganGH17} \\
  $K_{1,3}$-free              & \pk~\cite{Hermelin19}                      & \FPT, no \pk~\cite{Hermelin19}                    & \lk (trivial)                           \\
  $K_{1,4}$-free              & \W$[2]$-complete~\cite{Cygan11}            & \W$[2]$-complete~\cite{Cygan11}                   & \lk (trivial)                           \\
  \midrule
  bounded twin-width         & \FPT~\cite{twin-width1}, no \pk            & \FPT~\cite{twin-width1},  \textbf{no \pk}         & $O(k^{1.5})$-vertex kernel  \\
  twin-width at most 4       & \FPT~\cite{twin-width1}, \textbf{no \pk}   & \FPT~\cite{twin-width1}                           & $O(k^{1.5})$-vertex kernel  \\
  twin-width at most 1       & \textbf{in P}                              & \textbf{in P}                                     & \textbf{in P}              \\
  \midrule
  VC density at most 1       & no \pk                                     & no \pk                                            & $\mathbf{O(k^{1.5})}$\textbf{-vertex kernel}  \\
  \bottomrule
\end{tabular}
\caption{Kernelization results for arguably the three main problems without a polynomial kernel in general graphs, but an interesting story in sparse classes.
  \pk stands for \emph{polynomial kernel}, \lk for \emph{linear kernel} (in the number of vertices).
  The indicated lack of a kernel is under the assumption that {\coNP} $\subseteq$ \NP/poly.
  Our new results are in bold (the results without a reference nor in bold are consequences of results in bold).}
\label{tbl:results}
\end{table}

Finally we extend cograph recognizability (cographs are exactly the graphs with twin-width~0) and prove:
\begin{theorem}\label{thm:tww1}
  One can decide in polynomial time if a graph has twin-width at most~1.
\end{theorem}
In case the input graph has indeed twin-width at most~1, a 1-sequence is found in polynomial time.
Furthermore we observe that a wide class of graph problems is efficiently solvable on inputs of twin-width at most~1. 
See~\cref{tbl:results} for a summary of most of our results, together with the relevant pointers on other graph classes.

\section{Preliminaries}\label{sec:prelim}

We denote by $[i,j]$ the set of integers $\{i,i+1,\ldots, j-1, j\}$, and by $[i]$ the set of integers $[1,i]$.
If $\mathcal X$ is a set of sets, we denote by $\cup \mathcal X$ their union.
The notation $O_d(\cdot)$ gives an asymptotic behavior when $d$ is seen as a constant.

\subsection{Graph theory}\label{subsec:graph-theory}

Unless stated otherwise, all graphs are assumed undirected and simple, that is, they do not have parallel edges or self-loops.
We denote by $V(G)$ and $E(G)$, the set of vertices and edges, respectively, of a graph $G$. 
For $S \subseteq V(G)$, we denote the \emph{open neighborhood} (or simply \emph{neighborhood}) of $S$ by $N_G(S)$, i.e., the set of neighbors of $S$ deprived of $S$, and the \emph{closed neighborhood} of $S$ by $N_G[S]$, i.e., the set $N_G(S) \cup S$.
We simplify $N_G(\{v\})$ into $N_G(v)$, and $N_G[\{v\}]$ into $N_G[v]$.
We may omit the subscript when $G$ is clear from the context.

We denote by $G[S]$ the subgraph of $G$ induced by $S$, and $G - S := G[V(G) \setminus S]$.
An injective mapping $\eta: V(H) \to V(G)$ \emph{witnesses} that $H$ is a subgraph of $G$, if $uv \in E(H)$ implies $\eta(u)\eta(v) \in E(G)$.
A bijective mapping $\eta: V(H) \to V(G)$ \emph{witnesses} that $H$ is a \emph{spanning} subgraph of $G$, if $uv \in E(H)$ implies $\eta(u)\eta(v) \in E(G)$.

A \emph{connected subset} (or \emph{connected set}) $S \subseteq V(G)$ is one such that $G[S]$ is connected.
A~\emph{dominating set} is a set $S \subseteq V(G)$ such that $N[S]=V(G)$.
A~\emph{vertex cover} is a set $S \subseteq V(G)$ such that every edge of $G$ has at least one of its two endpoints in $S$.
An~\emph{independent set} is a set $S \subseteq V(G)$ such that $G[S]$ is edgeless.

For two disjoint sets $A, B \subseteq V(G)$, $E(A,B)$ denotes the set of edges in $E(G)$ with one endpoint in $A$ and the other one in $B$.
We also denote by $G[A,B]$ the bipartite graph $(A \cup B,E(A,B))$.
Two distinct vertices $u, v$ such that $N(u) = N(v)$ are called \emph{false twins}, and \emph{true twins} if $N[u] = N[v]$.
Two vertices are \emph{twins} if they are false twins or true twins.

A set $S \subseteq V(G)$ is a \emph{module} if every vertex outside $S$ is fully adjacent to $S$ or fully non-adjacent to $S$.
A module is said to be \emph{trivial} if it is a singleton or the entire vertex set, 
and a graph without non-trivial modules is called \emph{prime}. 
The vertex set of every graph $G$ can be partitioned into modules $H_1, H_2, \ldots, H_\ell$, some of which can be singletons.
$\mathcal H = \{H_1, H_2, \ldots, H_\ell\}$ is then called a~\emph{modular decomposition} of $G$. The~\emph{quotient graph} $G/\mathcal H$ of $G$ has vertex set $\mathcal H$ and an edge between $H_i$ and $H_j$ whenever every vertex of $H_i$ is adjacent to every vertex of $H_j$.
By definition of a module, $H_iH_j$ is not an edge of $G/\mathcal H$ if and only if there is no edge in $G$ between $H_i$ and $H_j$.
We say that $X \subseteq V(G)$ is a \emph{module relative to $Y \subseteq V(G) \setminus X$} if every vertex of $Y$ is either fully adjacent to $X$ or fully non-adjacent to $X$.
Hence $X$ is a module if it is a module relative to $V(G) \setminus X$.

The \emph{strict half-graph of height $t$} is (up to isomorphism) the graph with vertex set $\{a_1, \ldots, a_t, b_1, \ldots, b_t\}$ and edge set $\{a_ib_j~:~i < j, i \in [t], j \in [t]\}$.
One can see $\{a_1, \ldots, a_t\}$ \emph{oriented toward} $\{b_1, \ldots, b_t\}$ in their realization of the relation $<$ over the indices.
The \emph{$\ell$-cycle of strict half-graphs of height t} is (up to isomorphism) the graph with vertex set $\{a^p_1, \ldots, a^p_t~:~p \in [0,\ell-1]\}$ and edge set $\{a^p_ia^{p+1 \mod \ell}_j~:~i < j, i \in [t], j \in [t], p \in [0,\ell-1]\}$.
Informally it is the graph obtained from an $\ell$-vertex cycle by replacing every edge by a strict half-graph of height $t$ with a consistent, say, clock-wise orientation.
See~\cref{fig:shg} for an example of a 5-cycle of strict half-graphs of height 6.
A~\emph{strict half-graph} is, for some natural $t$, the strict half-graph of height $t$.
A~\emph{cycle of strict half-graphs} is, for some natural $\ell$, the $\ell$-cycle of strict half-graphs of same height.

\begin{figure}
  \centering
  \begin{tikzpicture}[scale=.92,vertex/.style={draw,circle,inner sep=0.08cm},picked/.style={fill,circle,opacity=0.5,inner sep=0.055cm}]
    \def\t{6}
    \def\k{5}
    \def\hs{1}
    \def\vs{0.5}
    \def\z{0.4}
    \pgfmathtruncatemacro\tm{\t-1}
    \pgfmathtruncatemacro\km{\k-1}
    \pgfmathtruncatemacro\kp{\k+1}
    \foreach \i in {1,...,\t}{
      \begin{scope}[yshift=\i * \vs cm]
        \foreach \j in {1,...,\k}{
          \begin{scope}[xshift=\j * \hs cm]
              \node[vertex] (b\i\j) at (0,0) {} ;
          \end{scope}
        }
      \end{scope}
    }
    \foreach \zz in {0,\kp}{
     \foreach \i in {1,...,\t}{
      \begin{scope}[yshift=\i * \vs cm, xshift=\zz * \hs cm]
              \node[circle,inner sep=0.08cm] (b\i\zz) at (0,0) {} ;
          \end{scope}
     }
    }

    \foreach \j [count=\jp from 2] in {1,...,\km}{
      \foreach \i in {1,...,\tm}{
        \pgfmathtruncatemacro\ip{\i+1}
        \foreach \iq in {\ip,...,\t}{
          \draw[thick] (b\i\j) -- (b\iq\jp) ;
        }
      }
    }
    \foreach \i in {1,...,\tm}{
        \pgfmathtruncatemacro\ip{\i+1}
        \foreach \iq in {\ip,...,\t}{
          \path[draw half paths={thick}{draw=none}] (b\i\k) -- (b\iq\kp) ;
          \path[draw half paths={draw=none}{thick}] (b\i0) -- (b\iq1) ;
        }
    }
    \draw[thin,dashed] (0.5 * \hs,0.7 * \vs) -- (0.5 * \hs,\t * \vs + 0.6 * \vs) ;
    \draw[thin,dashed] (0.5 * \hs + \k * \hs,0.7 * \vs) -- (0.5 * \hs + \k * \hs,\t * \vs + 0.6 * \vs) ; 
  \end{tikzpicture}
  \caption{A 5-cycle of strict half-graphs of height 6.}
  \label{fig:shg}
\end{figure}
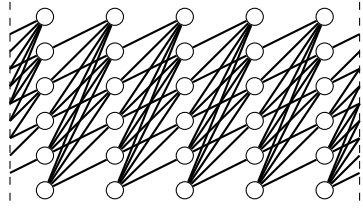

The \emph{$n \times m$~grid} is the graph with vertex set $[n] \times [m]$ and edges between any pair of vertices $(x,y), (x+1,y)$ or $(x,y), (x,y+1)$.
A~\emph{grid} is an $n \times m$~grid for some integer $n$ and $m$.
A~\emph{grid graph} is an induced subgraph of a grid.
To insist that we consider a grid and not a mere grid graph, we may use the term~\emph{complete grid}.

The \emph{neighborhood hypergraph} of a graph $G$ has vertex set $V(G)$ and edge set $\{N(v)~:~v \in V(G)\}$.
A family of hypergraphs $\mathcal H$ has \emph{Vapnik-Chervonenkis (VC) density at most~1} if there is a constant $c$ such that for every hypergraph $H \in \mathcal H$ and every $X \subseteq V(H)$, $|\{X \cap e~:~e \in E(H)\}| \leqslant c \cdot |X|$.
By extension, we may say that a graph class $\mathcal C$ has VC density at most~1 if the neighborhood hypergraph of every graph in $\mathcal C$ has VC density at most~1.  

\subsection{Contraction sequences and twin-width}\label{subsec:tww-def}

A \emph{trigraph $G$} has vertex set $V(G)$, black edge set $E(G)$, and red edge set $R(G)$, with $E(G)$ and $R(G)$ disjoint.
The~\emph{total graph} of trigraph $G$ is the graph $G'$ with $V(G') = V(G)$ and $E(G') = E(G) \cup R(G)$.
The~\emph{subtrigraph} of $G$ induced by $S$ is the trigraph $H$ with $V(H)=S$, $E(H)=E(G) \cap {S \choose 2}$, and~$R(H)=R(G) \cap {S \choose 2}$.
$H$~is then called an \emph{induced subtrigraph} of~$G$.

The \emph{set of neighbors $N_G(v)$} of a vertex $v$ in a trigraph $G$ consists of all the vertices adjacent to $v$ by a black or red edge.
A $d$-trigraph is a trigraph $G$ such that the \emph{red graph} $(V(G),R(G))$ has degree at most~$d$.
In that case, we also say that the trigraph has \emph{red degree} at most~$d$.
A \emph{contraction} or \emph{identification} in a trigraph~$G$ consists of merging two (non-necessarily adjacent) vertices $u$ and $v$ into a single vertex $z$, and updating the edges of $G$ in the following way.
Every vertex of the symmetric difference $N_G(u) \triangle N_G(v)$ is linked to $z$ by a red edge.
Every vertex $x$ of the intersection $N_G(u) \cap N_G(v)$ is linked to $z$ by a black edge if both $ux \in E(G)$ and $vx \in E(G)$, and by a red edge otherwise.
The rest of the edges (not incident to $u$ or $v$) remain unchanged.
See \cref{fig:contraction} for an illustration.
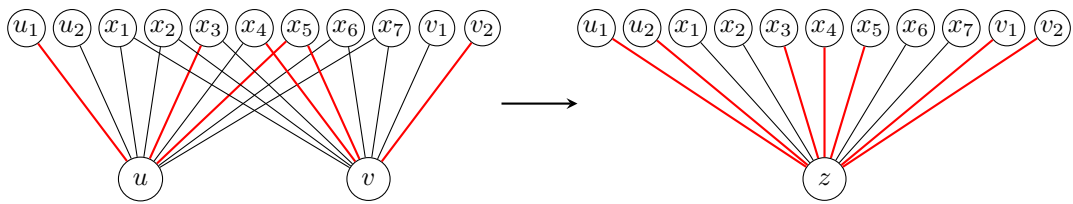
\begin{figure}[h!]
\begin{tikzpicture}
\def\v{2}
\def\t{6}
\def\s{0.6}

\draw[thick, -stealth] (3.25,\v /2) -- (4.25,\v/2) ;

\foreach \i/\j in {-5/u_1,-4/u_2,-3/x_1,-2/x_2,-1/x_3,0/x_4,1/x_5,2/x_6,3/x_7,4/v_1,5/v_2}{
  \node[draw,circle,inner sep=0.03cm] (n\i) at (\s * \i,\v) {$\j$} ; 
}

\node[draw,circle] (u) at (-1.5,0) {$u$} ;
\node[draw,circle] (v) at (1.5,0) {$v$} ;

\foreach \i in {-4,-3,-2,0,2,3}{
  \draw (u) -- (n\i) ;
}
\foreach \i in {-5,-1,1}{
  \draw[thick, red] (u) -- (n\i) ;
}
\foreach \i in {-3,...,-1,2,3,4}{
  \draw (v) -- (n\i) ;
}
\foreach \i in {0,1,5}{
  \draw[thick, red] (v) -- (n\i) ;
}

\begin{scope}[xshift=7.5cm]
\node[draw,circle] (uv) at (0,0) {$z$} ;
\foreach \i/\j in {-5/u_1,-4/u_2,-3/x_1,-2/x_2,-1/x_3,0/x_4,1/x_5,2/x_6,3/x_7,4/v_1,5/v_2}{
  \node[draw,circle,inner sep=0.03cm] (m\i) at (\s * \i,\v) {$\j$} ; 
}

\foreach \i in {-3,-2,2,3}{
  \draw (uv) -- (m\i) ;
}
\foreach \i in {-5,-4,-1,0,1,4,5}{
  \draw[thick, red] (uv) -- (m\i) ;
}
\end{scope}
\end{tikzpicture}
\caption{Contraction of vertices $u$ and $v$, and how the edges of the trigraph are updated.}
\label{fig:contraction}
\end{figure}

A \emph{$d$-sequence} (or \emph{contraction sequence}) is a sequence of \mbox{$d$-trigraphs} $G_n, G_{n-1}, \ldots, G_1$, where $G_n = G$, $G_1=K_1$ is the graph on a single vertex, and $G_{i-1}$ is obtained from $G_i$ by performing a single contraction of two (non-necessarily adjacent) vertices.
We observe that $G_i$ has precisely $i$ vertices, for every $i \in [n]$.
The twin-width of~$G$, denoted by $\tww(G)$, is the minimum integer~$d$ such that $G$ admits a~$d$-sequence.
Note that, in what precedes, the initial structure $G_n = G$ may be a trigraph instead of a graph.
Thus we defined twin-width more generally for trigraphs.
Similarly a \emph{partial $d$-sequence} from a $n$-vertex trigraph $G$ to an $i$-vertex trigraph $H$ is a sequence of \mbox{$d$-trigraphs} $G = G_n, G_{n-1}, \ldots, G_i = H$.
Observe that if $G$ has a partial $d$-sequence to $H$, and $H$ has itself a $d$-sequence, then the concatenation of these sequences is a $d$-sequence for $G$.

We may give a (partial) contraction sequence by listing the pairs of vertices to contract: $(u_1,v_1), (u_2,v_2), \ldots$ where $u_i, v_i$ are implicitly contracted to vertex $s(u_i) \cup s(v_i)$ with $s(x) = x$ if $x$ is a set of vertices and $s(x)=\{x\}$ if $x$ is a single vertex.
Thus for instance $u_2$ could be equal to $\{u_1,v_1\}$.

For $u \in V(G_i)$, we denote by $u(G)$ the subset of $V(G)$ that was contracted to the single vertex $u$ in $G_n, G_{n-1}, \ldots, G_i$.
Twin-width and $d$-sequences can be equivalently seen as a partition refinement process on $V(G)$.
We start with the finest partition $\mathcal P_n = \{\{v\}~:~v \in V(G)\}$, and end with the coarsest partition $\mathcal P_1 = \{V(G)\}$.
There is a \emph{partition sequence} $\mathcal P_n, \mathcal P_{n-1}, \ldots, \mathcal P_2, \mathcal P_1$ mimicking the contraction sequence, where the contraction of $u, v \in V(G_i)$ corresponds to the merge of parts $u(G_i), v(G_i) \in \mathcal P_i$ to form the part $u(G_i) \cup v(G_i) = z(G_{i-1}) \in \mathcal P_{i-1}$, while all the other parts are unchanged from $P_i$ to $P_{i-1}$.
The red degree (bounded by~$d$) of a part $P \in \mathcal P_i$ now corresponds to the number of other parts $P' \in \mathcal P_i$ which are not fully adjacent nor fully non-adjacent to $P$ in $G$.
We may denote by $G_{\mathcal P}$ the trigraph corresponding to partition $\mathcal P$ over $V(G)$.
Thus $G_i = G_{\mathcal P_i}$.

Given a partition $\mathcal H = \{H_1, H_2, \ldots, H_\ell\}$ of the vertex set $V(G)$ of a trigraph $G$, we call \emph{quotient trigraph $G/\mathcal H$} the trigraph obtained by contracting every $H_i$ into single vertices.
Note that the resulting trigraph does not depend on the order in which the contractions are made.
Thus the quotient trigraph is well-defined.

\subsection{Useful twin-width bounds}

It can be seen that the twin-width may only decrease when taking induced subtrigraphs. 
\begin{observation}\label{obs:induced-subgraph}
  Let $G$ be a trigraph and $H$ be an induced subtrigraph of $G$.
  Then, $\tww(H) \leqslant \tww(G)$.
\end{observation}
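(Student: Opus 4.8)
The plan is to take an optimal contraction sequence of $G$ and ``trace'' it onto $V(H)$. Let $d = \tww(G)$ and fix a $d$-sequence of $G$, which I view (as in \cref{subsec:tww-def}) as a partition sequence $\mathcal P_n, \mathcal P_{n-1}, \ldots, \mathcal P_1$ of $V(G)$ whose associated trigraphs $G_{\mathcal P_i}$ all have red degree at most $d$. For each $i$, define the partition of $V(H)$ obtained by restriction, $\mathcal Q_i := \{\, P \cap V(H) ~:~ P \in \mathcal P_i,~ P \cap V(H) \neq \emptyset \,\}$. Then $\mathcal Q_n$ is the partition of $V(H)$ into singletons and $\mathcal Q_1 = \{V(H)\}$.

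Next I would check that $(\mathcal Q_i)_i$, after deleting consecutive repeats, is a valid partition sequence of $V(H)$. Passing from $\mathcal P_i$ to $\mathcal P_{i-1}$ merges two parts $P, P'$ into $P \cup P'$ and leaves all other parts untouched. Correspondingly, if both $P \cap V(H)$ and $P' \cap V(H)$ are nonempty, then $\mathcal Q_{i-1}$ is obtained from $\mathcal Q_i$ by merging exactly these two parts; otherwise $\mathcal Q_{i-1} = \mathcal Q_i$. Removing the (at most $|V(G)| - |V(H)|$) steps at which nothing changes yields a sequence $\mathcal Q_{n'}, \ldots, \mathcal Q_1$ with $n' = |V(H)|$, each obtained from the previous one by a single merge of two parts, as required.

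The crucial point is that red degree does not increase, and this is where the hypothesis that $H$ is an \emph{induced} subtrigraph is used: by definition, for $u, v \in V(H)$, $uv$ is a black (resp.\ red) edge of $H$ if and only if it is a black (resp.\ red) edge of $G$. Consequently, if two parts $P, P' \in \mathcal P_i$ are homogeneous in $G$ (all black, all red, or all non-adjacent across), then $P \cap V(H)$ and $P' \cap V(H)$ are homogeneous in $H$. Hence in $H_{\mathcal Q_i}$ a part $P \cap V(H)$ can only have a red edge to $P' \cap V(H)$ if $P$ already had a red edge to $P'$ in $G_{\mathcal P_i}$; so the red degree of $P \cap V(H)$ in $H_{\mathcal Q_i}$ is at most the red degree of $P$ in $G_{\mathcal P_i}$, which is at most $d$. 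Therefore $(\mathcal Q_i)_i$ is a $d$-sequence of $H$, giving $\tww(H) \le d = \tww(G)$.

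I do not expect a genuine obstacle here: the statement is essentially bookkeeping. The only mildly delicate steps are (i) arguing that the restricted sequence, after suppressing the steps that do nothing on $V(H)$, is still a legal partition sequence ending at the singleton partition of $V(H)$, and (ii) invoking ``induced'' correctly so that homogeneity in $G$ descends to homogeneity in $H$. Both are straightforward from the definitions in \cref{subsec:graph-theory,subsec:tww-def}.
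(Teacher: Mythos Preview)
Your argument is correct and is precisely the standard way to justify this observation; the paper itself does not supply a proof, merely asserting that ``it can be seen'' that twin-width is monotone under induced subtrigraphs. One small quibble of phrasing: in the trigraph setting, ``all red'' between $P$ and $P'$ still yields a \emph{red} edge in the quotient $G_{\mathcal P_i}$, so it should not be listed alongside ``all black'' and ``all non-adjacent'' as a case where $P$--$P'$ is non-red; but your actual implication (``a red edge in $H_{\mathcal Q_i}$ forces a red edge in $G_{\mathcal P_i}$'') only needs the two genuine non-red cases, and those do restrict correctly to $V(H)$ because $H$ is induced, so the conclusion stands.
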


The following observation states that turning some non-edges or black edges into red edges can only increase the twin-width. 
\begin{observation}\label{obs:monotone}
  Let $G, G'$ be two trigraphs such that $V(G)=V(G')$, $R(G) \subseteq R(G')$, $E(G') \subseteq E(G)$, and $R(G) \cup E(G) \subseteq R(G') \cup E(G')$.
  Then $\tww(G) \leqslant \tww(G')$.
\end{observation}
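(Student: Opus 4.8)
The plan is to reuse, verbatim, an optimal contraction sequence of $G'$ as a contraction sequence of $G$. Fix a $d$-sequence of $G'$ with $d = \tww(G')$ and view it as a partition sequence $\mathcal P_n, \ldots, \mathcal P_1$ of $V(G') = V(G)$; performing the same merges on $G$ produces, at each step, a trigraph $G_i = G_{\mathcal P_i}$ on the same parts as $G'_i = G'_{\mathcal P_i}$. It then suffices to show that for every $i$ and every part $P \in \mathcal P_i$, the red degree of $P$ in $G_i$ is at most its red degree in $G'_i$, and hence at most $d$; then $\mathcal P_n, \ldots, \mathcal P_1$ is a $d$-sequence of $G$, so $\tww(G) \le d$.

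To that end I would maintain, by induction on $i$ from $n$ down to $1$, the invariant that $E(G'_i) \subseteq E(G_i)$ and that every non-edge of $G'_i$ is also a non-edge of $G_i$. These two containments imply $R(G_i) \subseteq R(G'_i)$: a pair of parts that is black, respectively a non-edge, in $G'_i$ has the same status in $G_i$, so whatever is red in $G_i$ must be red in $G'_i$; this yields the required red-degree comparison. The base case $i = n$ (the parts are singletons, $G_n = G$, $G'_n = G'$) is exactly the hypothesis that $G'$ is obtained from $G$ by turning some non-edges and black edges into red edges: it gives $E(G') \subseteq E(G)$, and since under that operation no pair ever becomes a non-edge, every non-edge of $G'$ is a non-edge of $G$. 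For the inductive step, suppose the invariant holds for $\mathcal P_i$ and we merge parts $P, Q$ into $R := P \cup Q$ to obtain $\mathcal P_{i-1}$. Only pairs $\{R, S\}$ with $S \ne R$ change status, and by the contraction rule $RS$ is a black edge of a trigraph iff both $PS$ and $QS$ are, while $RS$ is a non-edge iff both $PS$ and $QS$ are. Hence if $RS \in E(G'_{i-1})$ then $PS, QS \in E(G'_i) \subseteq E(G_i)$, so $RS \in E(G_{i-1})$; and if $RS$ is a non-edge of $G'_{i-1}$ then $PS$ and $QS$ are non-edges of $G'_i$, hence of $G_i$, so $RS$ is a non-edge of $G_{i-1}$. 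This re-establishes the invariant, completing the induction.

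I do not expect a genuine obstacle here; the one point deserving care is the choice of invariant. Tracking merely the containment $R(G_i) \subseteq R(G'_i)$ is not inductive --- after merging $P$ and $Q$, a part $S$ that is non-homogeneous toward $P \cup Q$ in $G$ only through a mixture of black and non-edge pairs becomes red-adjacent to $R$ in $G_i$, yet nothing in $R(G_i) \subseteq R(G'_i)$ rules out $S$ being, say, all-non-edge toward both $P$ and $Q$ in $G'_i$. Tracking the black-edge set and the non-edge set of the two trigraphs separately, and deriving the red-edge containment from them, is precisely what makes the induction go through; it is also where the ``black edges and non-edges only turn red'' reading of the hypothesis is used, since otherwise one could take $G'$ edgeless and $G$ of arbitrarily large twin-width.
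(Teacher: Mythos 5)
Your proof is correct and is the paper's argument (``any $d$-sequence for $G'$ is a $d$-sequence for $G$'') with the inductive bookkeeping made explicit. You are also right to flag that the hypotheses as literally written --- $R(G)\subseteq R(G')$ and $E(G')\subseteq E(G)$ --- do not by themselves force every non-edge of $G'$ to be a non-edge of $G$, and without that third containment the conclusion fails (take $G'$ edgeless with no red edges and $G$ any graph of positive twin-width). The intended reading is the one in the surrounding prose, that pairs only change status by turning red; under that reading your two-part invariant on black edges and on non-edges, from which $R(G_i)\subseteq R(G'_i)$ is then derived, is precisely what makes the induction close, and your remark that tracking the red-edge containment alone would not survive a contraction is accurate.
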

\begin{proof}Indeed, any $d$-sequence for $G'$ is a $d$-sequence for $G$.\end{proof}

We will not need the next lemma in its particular form.
Yet it yields some good insight on contraction sequences in general, and on the ``core structure'' of low twin-width used in the proof~\cref{thm:main}, in particular.
Thus we give a short proof here.

\begin{lemma}\label{lem:strict-half-graphs}
  Cycles of strict half-graphs have twin-width at most~3.
\end{lemma}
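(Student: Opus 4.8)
The plan is to write down an explicit $3$-sequence for $G$, the $\ell$-cycle of strict half-graphs of height $t$. We may assume $\ell \geq 3$ (for $\ell \leq 2$, $G$ is a clique or the bipartite complement of a perfect matching, both easy to handle). View $G$ as $\ell$ cyclically arranged columns $C_0, \dots, C_{\ell-1}$, where $C_p = \{a^p_1, \dots, a^p_t\}$ and the edges are exactly the pairs $a^p_i a^{p+1}_j$ (indices mod $\ell$) with $i < j$; so each column is independent, and non-consecutive columns are non-adjacent. I would contract down to a single vertex by ``lowering the height by one'' repeatedly, maintaining the following invariant, indexed by a height $s$ running from $t$ down to $1$: each column $C_p$ has been contracted into a \emph{head} $h^p$, namely the merge of its $t-s+1$ lowest-indexed original vertices, plus the untouched vertices $a^p_{t-s+2}, \dots, a^p_t$; the black edges are exactly those dictated by the half-graph pattern, with each head playing the smallest index of its column; and for $s < t$ the only red edges are $h^p h^{p+1}$ for every $p$, so the heads span a red $\ell$-cycle. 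At $s = t$ there are no red edges and the trigraph is just $G$; at $s = 1$ each column is a single vertex and the trigraph is a red $C_\ell$.

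A round lowers $s$ to $s-1$ by sweeping around the cycle: for $p = 0, 1, \dots, \ell-1$ in turn, contract $h^p$ with the current lowest untouched vertex $a^p_{t-s+2}$ of $C_p$ to form the new head. A direct computation from $a^p_i a^{p+1}_j \in E(G) \iff i < j$ shows that the newly formed head acquires at most one red neighbour in the preceding column and at most two in the following one, that no vertex ever attains red degree more than $3$ during the round, and that the invariant is restored at height $s-1$. After $t-1$ such rounds the invariant leaves a red $C_\ell$, which I would then contract to a single vertex by repeatedly merging two adjacent vertices --- this keeps a red cycle of shrinking length and never exceeds red degree $2$. Concatenating everything yields a $3$-sequence, hence $\tww(G) \leq 3$.

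The heart of the argument, and the step I would be most careful about, is the red-degree bookkeeping within a round, and in particular the choice of contraction that keeps the invariant exactly as stated. A greedier strategy --- for instance halving the height each round by merging consecutive vertices in pairs --- would let a full ``diagonal'' of red edges accumulate between consecutive columns (one per same-index pair of blocks), and then the first column processed in the next round, whose two neighbours both still carry that diagonal, would reach red degree $4$. Merging only two vertices per column per round is precisely what confines the red edges to a single cycle through the heads. The remaining points --- the red degrees of the two neighbouring heads while they await processing, and the wrap-around at $p = 0$, where the preceding column is $C_{\ell-1}$ --- are routine, if somewhat tedious, case checks.
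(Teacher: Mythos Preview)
Your approach is correct and essentially the same as the paper's: both reduce the height by one per round, sweeping once around the cycle and contracting each column's current head with its next-lowest vertex, then finish by contracting the resulting red $C_\ell$. The only cosmetic difference is that the paper first invokes the monotonicity observation to add the red cycle $a^p_1 a^{p+1}_1$ before starting, so every round (including the first) is literally the same trigraph-to-trigraph reduction $G'_t \to G'_{t-1}$; you instead keep the first round as a separate (easier) case and recover the red cycle as part of your invariant for $s<t$.
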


\begin{proof}
  Consider $G_t$, the $\ell$-cycle of strict half-graphs of height $t$, on vertex set $\{a^p_1, \ldots, a^p_t~:~p \in [0,\ell-1]\}$.
  By~\cref{obs:monotone} we may prove the stronger statement that $\tww(G'_t) \leqslant 3$ where $G'_t$ is obtained from $G_t$ by adding the red edge $a^p_1a^{p+1 \mod \ell}_1$ for every $p \in [0,\ell-1]$.
  $G'_1$ is a red cycle, and admits a 2-sequence by iteratively contracting the endpoints of any red edge.
  Therefore, we shall just check that the following is a partial 3-sequence from $G'_t$ to $G'_{t-1}$: $(a^0_1,a^0_2), (a^1_1,a^1_2), \ldots, (a^{\ell-1}_1,a^{\ell-1}_2)$.
  Indeed $\{a^p_1,a^p_2\}$ has, in the suggested partial sequence, at most three red neighbors: $a^{p-1 \mod \ell}_1$, $a^{p+1 \mod \ell}_1$, and $a^{p+1 \mod \ell}_2$; the latter two being a single vertex when $p=\ell-1$.
  And all the other vertices have red degree at most~2.
\end{proof}

The next lemma was already invoked in the introduction.
We will use its essence in a more general form in \cref{subsec:tww-bound}, namely that contracting vertices from a module $X$ relative to $Y$ do not create red edges incident to $Y$.  
\begin{lemma}\label{lem:modular}
  Let $G$ be a graph and $\mathcal H = \{H_1, H_2, \ldots, H_\ell\}$ be its modular partition.
  Then, $$tww(G)=\max\{\underset{i \in [\ell]}{\max}~tww(H_i),tww(G/\mathcal H)\}.$$
\end{lemma}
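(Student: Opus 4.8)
The plan is to prove the two inequalities separately. The easy direction is $\tww(G) \geqslant \max\{\max_i \tww(H_i), \tww(G/\mathcal H)\}$. For each $H_i$, this is immediate from \cref{obs:induced-subgraph} since $G[H_i]$ is an induced subgraph of $G$ (and modules stay modules, but we only need the induced-subgraph bound). For $\tww(G/\mathcal H)$: take an optimal $d$-sequence of $G$; I would first argue that one may assume it contracts each module $H_i$ down to a single vertex before ever merging across modules, because contractions performed entirely inside a module never raise the red degree of any vertex outside that module (a module is, by definition, homogeneous to every outside vertex, and contracting two vertices with the same outside-adjacency pattern keeps that pattern). Reordering the sequence so that all intra-module contractions come first yields, at an intermediate trigraph, exactly $G/\mathcal H$ with possibly some red edges only inside the (now singleton) module-classes — but since each class is a single vertex there are no such red edges, so we see $G/\mathcal H$ as an intermediate trigraph and the tail of the sequence is a $d$-sequence of it.

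For the reverse inequality $\tww(G) \leqslant \max\{\max_i \tww(H_i), \tww(G/\mathcal H)\}$, call the right-hand side $d$. I would build a $d$-sequence of $G$ by concatenation: first, for $i = 1, \ldots, \ell$ in turn, run an optimal $d_i$-sequence of $G[H_i]$ (where $d_i = \tww(H_i) \leqslant d$) to contract $H_i$ to a single vertex. The key observation — the ``essence'' the text says will be reused in \cref{subsec:tww-bound} — is that during these contractions, any red edge created has both endpoints inside the same $H_i$, because $H_i$ is a module relative to $V(G)\setminus H_i$: every vertex outside $H_i$ is fully adjacent or fully non-adjacent to the (shrinking) part, hence joined to it by a black edge or a non-edge, never red. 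So the red degree at every step is bounded by $\max_i d_i \leqslant d$. After this phase the trigraph on $\ell$ vertices is precisely $G/\mathcal H$ (the parts $H_i, H_j$ being fully adjacent or fully non-adjacent in $G$ means the contracted vertices are black-adjacent or non-adjacent, matching the quotient). Then append an optimal $\tww(G/\mathcal H)$-sequence, which has red degree at most $d$, and we reach $K_1$. The concatenation is a $d$-sequence of $G$.

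The main obstacle is the reordering argument in the easy direction: one must check carefully that intra-module contractions commute, in the relevant sense, with the rest of the sequence — i.e., that postponing a cross-module or outside contraction past an intra-module one does not change the resulting trigraph and does not increase any red degree. Concretely, I would formalize: if $C$ is a contraction inside module $H_i$ and $C'$ is any other contraction with at least one endpoint outside $H_i(G_j)$, then performing $C$ then $C'$ gives the same trigraph as $C'$ then $C$ (the parts involved are disjoint, or the outside part sees $H_i$ homogeneously either way), and each intermediate red degree in the reordered sequence is at most the maximum red degree in the original sequence. Once this commutation is nailed down, the rest is just bookkeeping. (One minor point to state explicitly: the lemma is about graphs $G$, so $G/\mathcal H$ and each $H_i$ are graphs, not trigraphs, and all the twin-width quantities are well-defined.)
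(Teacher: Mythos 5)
Your forward direction (building a $d$-sequence of $G$ by contracting each module internally, then running a sequence for $G/\mathcal H$) is exactly the paper's argument and is correct. The problem is your ``easy'' direction for $\tww(G/\mathcal H)\le\tww(G)$.

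You propose to take an optimal $d$-sequence of $G$ and reorder it so that all intra-module contractions come first. This has a real gap: an optimal sequence of $G$ need not contain \emph{any} intra-module contraction at all. Nothing prevents the first contraction from merging a vertex of $H_1$ with a vertex of $H_2$ while both modules still have several vertices, and from then on every part may straddle several modules. In that situation there are no ``intra-module contractions'' to pull to the front, and your commutation lemma (which only compares adjacent steps of an already module-respecting sequence) never gets off the ground. Proving that one may \emph{assume} the sequence first collapses each module is a nontrivial structural fact about optimal sequences, not a routine bookkeeping step, and you flag it yourself as ``the main obstacle.''

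The paper sidesteps all of this with one line: $G/\mathcal H$ is (isomorphic to) an \emph{induced subgraph} of $G$. Pick one representative $v_i\in H_i$ per module; by the definition of a module, $v_iv_j\in E(G)$ if and only if $H_i$ and $H_j$ are fully adjacent, which is exactly the adjacency in $G/\mathcal H$. So $G[\{v_1,\dots,v_\ell\}]\cong G/\mathcal H$ and \cref{obs:induced-subgraph} gives $\tww(G/\mathcal H)\le\tww(G)$ immediately, with no sequence surgery. You should replace your reordering argument with this observation; the rest of your proof then matches the paper.
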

\begin{proof}
  Consider a sequence which contracts each module $H_i$ into a single vertex, followed by a contraction sequence of $G/\mathcal H$ attaining $tww(G/\mathcal H)$.
  Contracting each module $H_i$ creates red edges only within the module $H_i$, thus $G/\mathcal H$ has red degree~0.
  Hence, $tww(G)$ is upper bounded by the twin-widths of $G[H_1], G[H_2], \ldots, G[H_\ell]$, and $G/\mathcal H$.
  The other inequality follows from~\cref{obs:induced-subgraph}, since $G/\mathcal H$ is also an induced subgraph of $G$.
\end{proof}

Finally we will use the following lemma to finish our 4-sequences.
\begin{lemma}\label{lem:subdivided-grid}
Any trigraph whose total graph is a subdivision of a subgraph of a grid has twin-width at most~4.
\end{lemma}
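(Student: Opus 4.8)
The plan is to show that a trigraph $G$ whose total graph is a subdivision of a subgraph of a grid admits a $4$-sequence, by peeling the subdivision off first and then collapsing the underlying grid structure row by row (or in a "diagonal sweep"). First I would reduce to the cleanest case: by \cref{obs:monotone} we may assume every edge of $G$ (black or red) is red, since turning everything red only increases the twin-width; so it suffices to bound the twin-width of the \emph{graph} that is a subdivision of a subgraph of an $n \times m$ grid, viewed as an all-red trigraph. Call this graph $H$. Each edge $uv$ of the host grid subgraph has been replaced in $H$ by a path; I will call the original grid vertices \emph{branch vertices} and the internal path vertices \emph{subdivision vertices}.

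The first phase of the sequence handles the subdivision vertices. For each subdivided edge of the host grid, repeatedly contract an internal subdivision vertex into a neighbouring branch vertex (always contracting toward one fixed endpoint of that path). When we contract a degree-$2$ vertex $w$ into one of its two neighbours $x$, the red degree of the resulting part is at most $\deg_H(x)$ at that moment, and crucially no \emph{new} red edges are created at vertices other than $x$ — a contraction of $w$ into $x$ only affects edges incident to $x$, and $w$'s only other neighbour becomes (or stays) a red neighbour of the merged part, which it would have become anyway once the whole path is absorbed. Choosing the contraction order carefully (finish one path before starting another, and process the paths so that a branch vertex never has two incident paths half-absorbed simultaneously), one checks that the red degree never exceeds $4$: a branch vertex of the grid has at most $4$ grid-neighbours, and during this phase its red neighbourhood is contained in (partially contracted) incident paths, one per grid-direction. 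After this phase the trigraph is (an all-red subgraph of) the $n \times m$ grid itself.

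The second phase collapses the grid. Here I would use essentially the standard fact that grids have bounded twin-width, but re-examined for the all-red setting: sweep column by column, contracting the leftmost column into the next one vertex by vertex from top to bottom (i.e.\ contract $(1,1)$ with $(1,2)$, then the result with $(1,3)$, \dots, collapsing column $1$ into a single vertex merged with the top of column $2$), then proceed to the next column. One verifies that at every step each part has at most a bounded number — at most $4$, with care — of red neighbours: the "active" merged part touching the current column, the partially-built merge in the current column, and a constant number of grid-neighbours in the untouched columns. A slightly cleaner bookkeeping is to contract along diagonals, but either way the verification is a finite case check on which parts can fail to be homogeneous, exactly as in the proof that $\tww$ of a grid is at most $4$ \cite{twin-width1}. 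Concatenating the two phases yields a $4$-sequence for $H$, hence for $G$.

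The main obstacle is the bookkeeping at the seam between the two phases and, within the grid phase, pinning the constant down to exactly $4$ rather than some larger $O(1)$: one must order the path-contractions of phase one so that when phase two begins, the branch vertex that is "active" in the column sweep has not accumulated red edges toward more than the allowed number of neighbouring columns, and symmetrically one must make sure a branch vertex incident to up to four subdivided grid-edges never simultaneously has all four incident paths in a half-contracted state contributing to its red degree together with a fifth red edge from the ongoing column merge. I expect this to be handled by the simple rule "never start contracting a new incident path at a branch vertex $v$ until phase two reaches $v$, except for the (at most one) path coming from the already-processed direction", which keeps the relevant red degree at each vertex bounded by the number of grid-directions, i.e.\ $4$; the remaining verification is routine.
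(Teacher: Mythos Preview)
Your two-phase plan matches the paper's proof in spirit: first absorb the subdivision vertices (the paper likewise just contracts each degree-$2$ vertex into a neighbour until only the grid remains), then collapse the grid. Phase one is fine as you describe it, and the paper also uses \cref{obs:monotone} to pass to an all-red complete grid before starting.

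The gap is in phase two. The procedure you spell out in the parenthetical---contract $(1,1)$ with $(1,2)$, then the result with $(1,3)$, and so on, ``collapsing column $1$ into a single vertex''---does not keep the red degree bounded: once all of column $1$ has been merged into a single part, that part is red-adjacent to every vertex of column $2$, hence has red degree $n$. So this is not a $4$-sequence, nor a $d$-sequence for any fixed $d$. Your hedge toward a diagonal sweep does not obviously repair this, and the worry about the ``seam'' between the two phases is a symptom of not having the right contraction order in hand.

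The paper's phase two avoids all of this. Instead of collapsing a column to a point, it contracts the \emph{horizontal matching} between the two leftmost columns: contract $(i,1)$ with $(i,2)$ for $i=1,\dots,n$, top to bottom. The merged vertex $\{(i,1),(i,2)\}$ then has at most four red neighbours---the already-merged $\{(i-1,1),(i-1,2)\}$ above it, the still-unmerged $(i+1,1)$ and $(i+1,2)$ below it, and $(i,3)$ to its right---and no other vertex gains red degree. After this sweep the trigraph is a red $n\times(m-1)$ grid, and one iterates down to a path. There is no seam issue, because phase one is completed entirely before phase two begins.
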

\begin{proof}
By~\cref{obs:monotone}, we may assume that the given trigraph is a subdivision of a complete grid consisting of red edges only. 
By contracting every adjacent pair of vertices one of which has degree~2, we obtain a (complete) grid. 
In a (red) grid on the vertex set $[n] \times [m]$, consider a matching of $n$ leftmost horizontal edges, that is the edges connecting $(i,1)$ and $(i,2)$ for every $i \in [n]$.
Contracting these edges from top to bottom produces intermediate red graphs of red degree at most~4, ending in a red grid graph on $[n] \times [m-1]$.
Finally a path is obtained, which can be contracted into a single vertex while keeping red degree at most~2. 
\end{proof}

\subsection{List of handled problems}



We refer to~\cref{subsec:graph-theory} for the definitions of dominating set, vertex cover, connected set, and independent set.
We will mostly deal with the following three problems, given by order of importance.

\defparproblem{\kds}{A graph $G$ and an integer $k$}{$k$}{Does $G$ have a dominating set of size at most $k$?}

\defparproblem{\kconvc (\textsc{Connected $k$-VC})}{A graph $G$ and an integer $k$}{$k$}{Does $G$ have a  vertex cover of size at most $k$ which induces a connected subgraph of $G$?}

Given a graph $G$ and a capacity function $c: V(G) \rightarrow \mathbb{N}$, a \emph{capacitated vertex cover} $X$ of $G$ is a vertex cover of $G$ which admits a mapping $\rho: E(G) \rightarrow X$ assigning to each vertex $x \in X$ no more edges than its capacity, i.e., $\abs{\rho^{-1}(x)} \leqslant c(x)$ for every $x \in X$.

\defparproblem{\kcapvc (\textsc{Capacitated $k$-VC})}{A graph $G$ with a capacity function $c: V(G) \rightarrow \mathbb{N}$ and an integer $k$}{$k$}{Does $G$ admit a capacitated vertex cover $X$ of size at most $k$?}

The lower bound for \kds will also apply to its \emph{connected} and \emph{total} variants.

\defparproblem{\textsc{Connected $k$-Dominating Set} (\textsc{Connected $k$-DS})}{A graph $G$ and an integer $k$}{$k$}{Does $G$ have a dominating set of size at most $k$ which induces a connected subgraph of $G$?}

\defparproblem{\textsc{Total $k$-Dominating Set}}{A graph $G$ and an integer $k$}{$k$}{Does $G$ admit a set $X \subseteq V(G)$ of size at most $k$ such that every vertex of $G$ has a neighbor in $X$?}

For completeness, we give the definition of the remaining problems that are mentioned at least once in the paper.

\defparproblem{\kis}{A graph $G$ and an integer $k$}{$k$}{Does $G$ have an independent set of size at least $k$?}

\defparproblem{\textsc{Independent $k$-Dominating Set}}{A graph $G$ and an integer $k$}{$k$}{Does $G$ have an independent dominating set of size at most $k$?}

\defparproblem{\textsc{$k$-Path}}{A graph $G$ and an integer $k$}{$k$}{Does $G$ have a path of length at least $k$?}

\defparproblem{\textsc{$k$-Induced Path}}{A graph $G$ and an integer $k$}{$k$}{Does $G$ have an induced path of length at least $k$?}

\defparproblem{\textsc{$k$-Induced Matching}}{A graph $G$ and an integer $k$}{$k$}{Does $G$ have a matching $M$ of size at least $k$ such that there are no edges between endpoints of $M$ other than the edges of $M$?}

\subsection{Kernels or lack thereof}


We often identify a problem to the language made by its positive instances.
For a parameterized problem $\cQ$, a~\emph{kernel} of size bounded by a function $f$ is a polynomial-time reduction $\rho: \Sigma^* \times \mathbb N \to \Sigma^* \times \mathbb N$ such that $(x,k) \in \cQ$ if and only if $\rho(x,k) \in \cQ$, and $|\rho(x,k)| \leqslant f(k)$.
A kernel is said~\emph{linear}, \emph{quadratic}, or \emph{polynomial}, if the function $f$ can be chosen linear, quadratic, or polynomial, respectively.

We recall the framework of OR-cross-compositions~\cite{complexity}, which we will rely on to show the absence of a polynomial kernel in \cref{thm:main}.
An OR-cross-composition is a polynomial reduction that takes $t$ instances of an NP-hard problem $\cL$ and builds an instance of a parameterized problem $\cQ$, such that the OR of the input instances is equivalent to the output instance.
More precisely at least one of the $t$ input instances is positive if and only if the output instance is positive. 
It furthermore allows to set some restrictions on the input NP-hard instances in the form of a polynomial equivalence relation.
This relation can be useful in shaping the input instances in such a way that the composition behaves well.

\begin{definition}\label{def:poly-equivalence}
	A \textit{polynomial equivalence relation} on $\Sigma^*$ is an equivalence relation $\cR$ when
	\begin{compactenum}[(i)]
		\item\label{defit:poly-equivalence:i} for $x, y \in \Sigma^*$, the equivalence $x \cR y$ can be decided in time polynomial in $|x|+|y|$, and
		\item\label{defit:poly-equivalence:ii} $\cR$ restricted to instances of size at most $n$ admits polynomially many equivalence classes.
	\end{compactenum}
\end{definition}

We can now formally define an~\emph{OR-cross-composition}.

\begin{definition}\label{def:or-composition}
  Let $\cL$ be a language, $\cR$ a polynomial equivalence relation on $\Sigma^*$ and $\cQ$ a parameterized problem.
  An \emph{OR-cross-composition} from $\cL$ to $\cQ$ with respect to $\cR$ is an algorithm taking as input $t$ $\cR$-equivalent instances $x_1,...,x_t \in \Sigma^*$, running in time polynomial in $\sum_{j=1}^t |x_j|$, and outputting an instance $(y,N) \in \Sigma \times \mathbb{N}$ such that:
	\begin{enumerate}[(i)]
		\item $N$ is polynomially bounded in $\max_{j \in [t]} |x_j| + \log t$,
		\item $(y,N) \in \cQ$ if and only if there exists some $j$ such that $x_j \in \cL$.
	\end{enumerate}
\end{definition}

We say that $\cL$ cross-composes into $\cQ$, and we sometimes refer to output instance $(y,N)$ as the \emph{composed instance}.
The following result provides the lower bound under {\coNP} $\subseteq$ \NP/poly.

\begin{theorem}\label{thm:kernel-lb}
  If an NP-hard language $\cL$ admits an OR-cross-composition into a parameterized problem $\cQ$, then $\cQ$ does not admit a polynomial kernel unless {\coNP} $\subseteq$ \NP/poly.
\end{theorem}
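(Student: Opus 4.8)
The plan is to establish the contrapositive and reduce to the classical instance-compression lower bound: if $\cQ$ admits a polynomial kernel, then the NP-hard language $\cL$ admits an \emph{OR-distillation} into an \NP\ language, which by the Fortnow--Santhanam machinery underlying~\cite{complexity} forces {\coNP} $\subseteq$ \NP/poly. So let $\rho$ be the polynomial-time OR-cross-composition of \cref{def:or-composition}, which turns $\cR$-equivalent instances $x_1,\dots,x_t$ of $\cL$ into a $\cQ$-instance whose parameter is at most $p_1(\max_j |x_j| + \log t)$ for some polynomial $p_1$, and suppose $\cQ$ has a kernel bounding the output size by a polynomial $p_2$ of the parameter. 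I would build a polynomial-time algorithm $\mathcal A$ that takes $t$ \emph{arbitrary} instances $x_1,\dots,x_t$ of $\cL$, each of size at most $n$, and outputs a string whose size is polynomial in $n$ alone and whose membership in a fixed \NP\ language encodes $\bigvee_j [x_j \in \cL]$.

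The algorithm $\mathcal A$ would proceed in four steps. \textbf{(1) Deduplicate:} sort the $x_i$ and delete repetitions; since there are fewer than $2^{n+1}$ strings of length at most $n$, afterwards $t \le 2^{n+1}$, so $\log t \le n+1$, while the OR is unchanged. \textbf{(2) Bucket by $\cR$:} using that $\cR$-equivalence is polynomial-time decidable, split the surviving instances into the equivalence classes of $\cR$ restricted to size at most $n$; by the second condition of \cref{def:poly-equivalence} there are at most $q(n)$ such classes for a polynomial $q$. \textbf{(3) Compose then kernelize:} run $\rho$ on each class (its members are $\cR$-equivalent) to get a $\cQ$-instance of parameter at most $p_1(n+\log t) \le p_1(2n+1)$, then apply the kernel to obtain an equivalent instance $z_j$ of size at most $p_2(p_1(2n+1))$; by correctness of $\rho$ and of the kernel, $z_j \in \cQ$ iff some $x_i$ in class $j$ lies in $\cL$. \textbf{(4) OR across classes:} output the tuple $(z_1,\dots,z_{q(n)})$, of total size at most $q(n)\cdot p_2(p_1(2n+1))$, which is polynomial in $n$ and independent of $t$.

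Assuming, as is standard and as holds for every problem in this paper, that $\cQ \in \NP$, the language $\cQ^{\mathrm{or}} := \{(w_1,\dots,w_m) : w_i \in \cQ \text{ for some } i\}$ lies in \NP, and by construction the output of $\mathcal A$ belongs to $\cQ^{\mathrm{or}}$ iff some $x_i \in \cL$; thus $\mathcal A$ is an OR-distillation of $\cL$ into an \NP\ language. Precomposing with a polynomial-time reduction from \textsc{SAT} (one exists since $\cL$ is NP-hard) and then reducing the polynomial-size output back to \textsc{SAT} via Cook--Levin yields an OR-distillation of \textsc{SAT} into \textsc{SAT}, whence the Fortnow--Santhanam theorem gives {\coNP} $\subseteq$ \NP/poly, contradicting the hypothesis; this proves the claim. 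The one genuine subtlety is Step~(1): without deduplication the parameter of the composed instance, and hence the kernel size and the distilled output, would carry a $\log t$ term, violating the requirement that the output be polynomial in $n$ alone, and the bound $t \le 2^{n+1}$ on the number of distinct strings of length at most $n$ is exactly what removes it. Everything else is routine pipelining of the three supplied ingredients — the cross-composition, the polynomial bound on the number of $\cR$-classes, and the polynomial kernel — together with the (standard) observation that all these steps run in time polynomial in the total input size.
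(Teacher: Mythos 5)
The paper states \cref{thm:kernel-lb} without proof, citing the cross-composition framework of~\cite{complexity}; your argument is correct and is essentially the standard proof found there, with deduplicate--bucket--compose--kernelize--tuple--distill being exactly the right pipeline and the deduplication step correctly identified as the one move that tames the $\log t$ term. Your only inessential deviation is the added hypothesis $\cQ\in\NP$ (introduced so that you can detour through \textsc{SAT} via Cook--Levin): the version of the Fortnow--Santhanam distillation theorem used in this framework already applies to OR-distillations of an NP-hard language into an \emph{arbitrary} fixed target language, so you may conclude {\coNP} $\subseteq$ \NP/poly directly from the OR-distillation of $\cL$ into $\cQ^{\mathrm{or}}$, recovering the theorem exactly as stated without assuming $\cQ\in\NP$ (though, as you note, that assumption holds for every problem to which the paper applies the theorem, so the restriction is harmless in context).
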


\subsection{Organization of the rest of the paper}

In~\cref{sec:outline}, we sketch the proof of our main result,~\cref{thm:main}.
In~\cref{sec:tailored}, we show the NP-hardness of \textsc{Minimum Dominating Set} in a customized setting that will be particularly convenient for the subsequent OR-cross-composition.
In~\cref{sec:composition} we present the OR-cross-composition, show its correctness, and prove the twin-width upper bound, thereby establishing~\cref{thm:main}.
In~\cref{sec:positive}, we present two simple $O(k^2)$ kernels for \kconvc and \kcapvc, and a refined $O(k^{1.5})$ kernel for the former problem.
In~\cref{sec:poly}, we show that graphs of twin-width 1 can be efficiently recognized and a 1-sequence with additional property can be produced, and argue that 
this extra property can be used for polynomial-time algorithm for a wide range of problems. 

\section{Outline of the OR-cross-composition for \kds}\label{sec:outline}

Let us start explaining why we should \emph{not} expect a simple OR-composition.
After all, it is only fair to ask for a justification that the kernel lower bound for \kds spans a dozen of pages when the same result for \kis is a side note; even more so, when the former is often more intractable than the latter from a~parameterized complexity standpoint.

The simplest OR-composition is the disjoint union of the input instances.
By~\emph{simple} OR-composition we mean one, like for \kis, straightforwardly based on juxtaposing the instances. 
A standard way to OR-compose $t$ \textsc{Dominating Set}-instances is to have for each instance a ``switch'', that is, one vertex dominating all but one instance.
Then picking the corresponding vertex in the solution, one is left with dominating one chosen instance with a given remaining budget.
This is precisely what we want, but how to ensure that one does not activate two switches?

As we previously observed~\cite{twin-width3}, one can use larger weights for the switches.
However removing the vertex-weights cannot be done without increasing the twin-width.
Another possibility is to force all the budget but one unit (for the switch) within the instances.
This requires, say, $k$ vertices called ``forcers'', each adjacent to a $k$-th fraction of each instance.
Now consider the induced subgraph made by these $k$ vertices, the $t$ switches, and $tk$ vertices of the instances realizing the $tk$ possible neighborhoods toward the former $t+k$ vertices.
The two neighborhoods of every pair of vertices in this graph has a large symmetric difference.
Thus in particular the overall graph has unbounded twin-width.
(Finally known tricks to condense the $t$ switches into $O(\log t)$ vertices do not help, since we want the twin-width to be bounded by an absolute constant.)

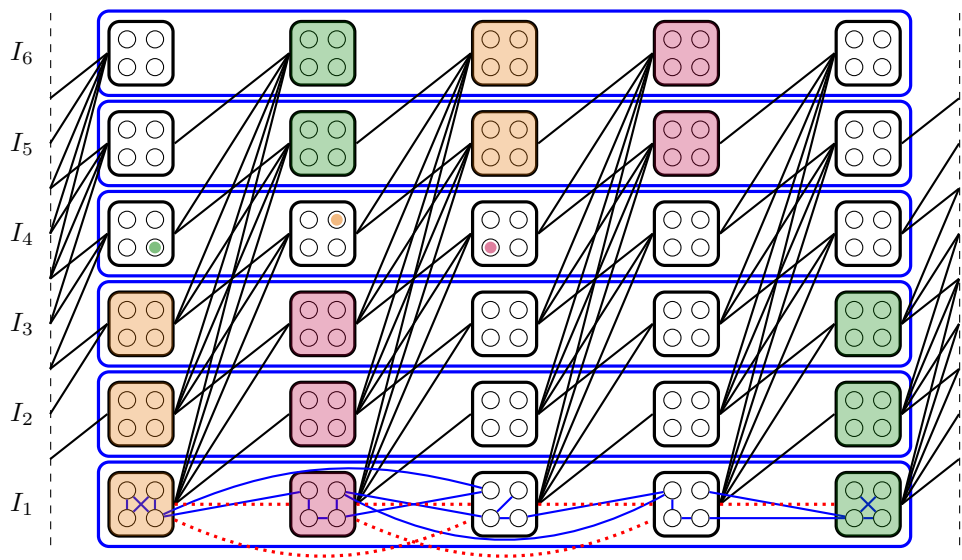
\begin{figure}
  \centering
  \begin{tikzpicture}[scale=.92,vertex/.style={draw,circle,inner sep=0.08cm},picked/.style={fill,circle,opacity=0.5,inner sep=0.055cm}]
    \def\t{6}
    \def\k{5}
    \def\hs{2.6}
    \def\vs{1.3}
    \def\z{0.4}
    \pgfmathtruncatemacro\tm{\t-1}
    \pgfmathtruncatemacro\km{\k-1}
    \pgfmathtruncatemacro\kp{\k+1}
    \foreach \i in {1,...,\t}{
      \begin{scope}[yshift=\i * \vs cm]
        \foreach \j in {1,...,\k}{
          \begin{scope}[xshift=\j * \hs cm]
            \foreach \s/\x/\y in {1/0/0,2/0/1,3/1/0,4/1/1}{
              \node[vertex] (a\i\j\s) at (\x * \z,\y * \z) {} ;
            }
            \node[draw, very thick, rounded corners, fit=(a\i\j1) (a\i\j4)] (b\i\j) {} ;
          \end{scope}
        }
        \node[draw, blue, very thick, rounded corners, fit=(b\i1) (b\i\k)] (t\i) {} ;
        \node at (1.1,0.5 * \z) {$I_\i$} ;
      \end{scope}
    }
    \foreach \zz in {0,\kp}{
     \foreach \i in {1,...,\t}{
      \begin{scope}[yshift=\i * \vs cm, xshift=\zz * \hs cm]
            \foreach \s/\x/\y in {1/0/0,2/0/1,3/1/0,4/1/1}{
              \node[circle,inner sep=0.08cm] (a\i\zz\s) at (\x * \z,\y * \z) {} ;
            }
            \node[very thick, rounded corners, fit=(a\i\zz1) (a\i\zz4)] (b\i\zz) {} ;
          \end{scope}
     }
    }

    \foreach \j [count=\jp from 2] in {1,...,\km}{
      \foreach \i in {1,...,\tm}{
        \pgfmathtruncatemacro\ip{\i+1}
        \foreach \iq in {\ip,...,\t}{
          \draw[thick] (b\i\j.east) -- (b\iq\jp.west) ;
        }
      }
    }
    \foreach \i in {1,...,\tm}{
        \pgfmathtruncatemacro\ip{\i+1}
        \foreach \iq in {\ip,...,\t}{
          \path[draw half paths={thick}{draw=none}] (b\i\k.east) -- (b\iq\kp.west) ;
          \path[draw half paths={draw=none}{thick}] (b\i0.east) -- (b\iq1.west) ;
        }
    }
    \draw[thin,dashed] (0.58 * \hs,0.7 * \vs) -- (0.58 * \hs,\t * \vs + 0.6 * \vs) ;
    \draw[thin,dashed] (0.58 * \hs + \k * \hs,0.7 * \vs) -- (0.58 * \hs + \k * \hs,\t * \vs + 0.6 * \vs) ; 

    \foreach \j/\s/\jj/\ss in {1/3/1/2,1/3/1/4,1/2/1/1,1/1/1/4, 2/3/2/1,2/3/2/4,2/2/2/1, 3/1/3/4,3/1/3/3, 4/1/4/2,4/1/4/3, 5/1/5/3,5/1/5/4,5/2/5/3, 1/3/2/2, 2/4/3/1,2/3/3/2, 3/3/4/2, 4/3/5/1,4/4/5/1}{
      \draw[thick, blue] (a1\j\s) -- (a1\jj\ss) ; 
    }
    \foreach \j/\s/\jj/\ss/\b in {1/3/3/2/20,2/4/4/2/-29}{
      \draw[thick, blue] (a1\j\s) to [bend left = \b] (a1\jj\ss) ; 
    }
    \foreach \j [count = \jp from 2] in {1,...,\km}{
      \draw[very thick,red,dotted] (b1\j) -- (b1\jp) ;
    }
    \foreach \j/\jj/\b in {1/3/-25,2/4/-25}{
      \draw[very thick,red,dotted] (b1\j) to [bend left = \b] (b1\jj) ;
    }
    \node[picked,green!50!black] at (\hs+\z,4 * \vs) {} ;
    \node[picked,orange!90!black] at (2 * \hs+\z,4 * \vs+\z) {} ;
    \node[picked,purple] at (3 * \hs,4 * \vs) {} ;
    \foreach \i in {5,6}{
      \node[fill opacity=0.3, fill=green!50!black, rounded corners, fit=(a\i21) (a\i24)] {} ;
      \node[fill opacity=0.3, fill=orange!90!black, rounded corners, fit=(a\i31) (a\i34)] {} ;
      \node[fill opacity=0.3, fill=purple, rounded corners, fit=(a\i41) (a\i44)] {} ;
    }
    \foreach \i in {1,2,3}{
      \node[fill opacity=0.3, fill=green!50!black, rounded corners, fit=(a\i51) (a\i54)] {} ;
      \node[fill opacity=0.3, fill=orange!90!black, rounded corners, fit=(a\i11) (a\i14)] {} ;
      \node[fill opacity=0.3, fill=purple, rounded corners, fit=(a\i21) (a\i24)] {} ;
    }
  \end{tikzpicture}
  \caption{The overall picture. Instances $I_1, \ldots, I_t$ (here with $t=6$) are in rows, boxed in blue, with their edge also in blue.
    For the sake of legibility, we only represented the edges of $I_1$.
    The red dotted edges are the red edges appearing after contracting every part (boxed in black) into a single vertex.
    Example of what three vertices picked in the first three parts of $I_4$ dominates in the other instances.
    Continuing consistently in $I_4$ would result in ``switching off'' all the other instances, while deviating would leave at least one part ``white'' and not intersected, thus one vertex not dominated.}
  \label{fig:overall-sketch}
\end{figure}

So we need a more elaborate way of selecting one instance among $t$; one, thought primarily to keep the twin-width low.
In the previous attempts, the twin-width was increasing too much because of attachments --switches and forcers-- external to the instances.
We will therefore have instances themselves play these roles.
Say that each instance comes with a partition of its vertex set into $N$ parts, each of which containing a vertex solely adjacent to vertices in its part.
We place the $t$ instances in a $t \times N$ two-dimensional layout, where each instance occupies a ``row,'' while the $j$-th part of all the instances form the $j$-th ``column.''
The switch mechanism is as follows.
Every vertex in the $j$-th part of the $i$-th instance, say $I_i$, dominates the $j-1$-st part of the instances with a smaller index, and the $j+1$-st part of the instances with a larger index.
In other words, we put a strict half-graph over the parts of two consecutive columns.
This is done cylindrically, see~\cref{fig:overall-sketch}.

With that mechanism, a dominating set of a fixed instance $I_i$ (intersecting each of its parts once) is a dominating set of the overall graph.
We skip here the details of the reverse direction, but the use of half-graphs and of vertices whose neighborhood in their instance is confined to their own part (the last ingredient is to have a dummy, edgeless top instance $I_t$) should give a feel for why no other kind of dominating sets of size $N$ can exist.

What about the twin-width bound?
Cycles of half-graphs have bounded twin-width.
So a natural first step is to contract every part of every instance into a single vertex.
Doing so will create some red edges within each row.
To ensure that the red degree remains bounded in this first step, a part should be partially adjacent to only a bounded number of other parts.
In the second step, we contract the cycle of half-graphs row by row.
Thus the red edges of the different instances will progressively stack up.
We need to control the accretion with the red edges of each instance mapping onto a common bounded-degree red graph.
Finally in the third step, we contract the residual red graph.
It should be itself of bounded twin-width, for instance by being planar.

In the next section, we show that \textsc{Minimum Dominating Set} remains NP-hard even when inputs are equipped with a vertex-partition satisfying all the properties that we came across in this outline. 

\section{Tailored NP-hardness for \ds}\label{sec:tailored}

We will show in this section the following hardness result for \ds.
The extra properties that we get compared to existing NP-hardnesses of \ds (even those on planar instances) is crucial for the subsequent OR-cross-composition, as hinted at in the previous section.

The statement of the next two theorems involve what we will call \emph{snaking grids}.
See~\cref{fig:snaking-grid} for an illustration of the \emph{$5 \times 10$ snaking grid}, which has $(3 \cdot (5-1) + 1)(3 \cdot (10-1) + 1)$ vertices.
One may observe that the snaking grids are subdivisions of a wall with some extra isolated vertices.
We will prefer to think of the snaking grid as a spanning subgraph of a (complete) grid, hence the particular embedding of the figure.
The motivation behind the snaking grid will become clear when designing the cross-composition and bounding its twin-width in~\cref{subsec:tww-bound}: it allows to superpose a canonical hamiltonian cycle such that the maximum degree remains~3. 

\begin{figure}[h!]
  \centering
  \resizebox{390pt}{!}{
  \begin{tikzpicture}[scale=.85,
      vertex/.style={draw,thick,rectangle,rounded corners,inner sep=0.22cm}]
    \def\red{red!70!black}
    \def\n{4}
    \def\m{9}
    \pgfmathtruncatemacro\nz{\n - 1}
    \pgfmathtruncatemacro\nn{3 * \n + 1}
    \pgfmathtruncatemacro\mm{3 * \m + 1}
    \pgfmathtruncatemacro\nt{3 * \n}
    \pgfmathtruncatemacro\mt{3 * \m}
    \pgfmathtruncatemacro\np{3 * \n - 2}
    \pgfmathtruncatemacro\mp{3 * \m - 2}
    \def\h{1}
    \def\v{1}
    \foreach \j in {1,...,\mm}{
      \foreach \i in {1,...,\nn}{
        \node[vertex] (b\j-\i) at (\j * \h, \i * \v) {} ;
      }
    }
    \foreach \i in {4,7,...,\nn}{
     \foreach \j in {1,3,...,\mt}{
      \pgfmathtruncatemacro\jp{\j+1}
      \draw[very thick,red] (b\j-\i) -- (b\jp-\i) ;
     }
    }
    \foreach \i in {3,6,...,\nn}{
     \foreach \j in {3,5,...,\mm}{
      \pgfmathtruncatemacro\jm{\j-1}
      \draw[very thick,red] (b\j-\i) -- (b\jm-\i) ;
     }
    }
    \foreach \j in {1,4,...,\mm}{
      \foreach \i [count = \im from 1] in {2,...,\nn}{
        \draw[very thick,red] (b\j-\i) -- (b\j-\im) ;
      }
    }
    \foreach \j [count = \jm from 1] in {2,...,\mm}{
        \draw[very thick,red] (b\j-1) -- (b\jm-1) ;
    }
    \foreach \i in {3,6,...,\nt}{
     \pgfmathtruncatemacro\ip{\i+1} 
     \foreach \j in {2,5,...,\mt,3,6,...,\mt}{
      \pgfmathtruncatemacro\jp{\j+1}
      \draw[very thick,red] (b\j-\i) -- (b\j-\ip) ;
     }
    }
  \end{tikzpicture}
  }
  \caption{The $5 \times 10$ snaking grid.
  }
  \label{fig:snaking-grid}
\end{figure}
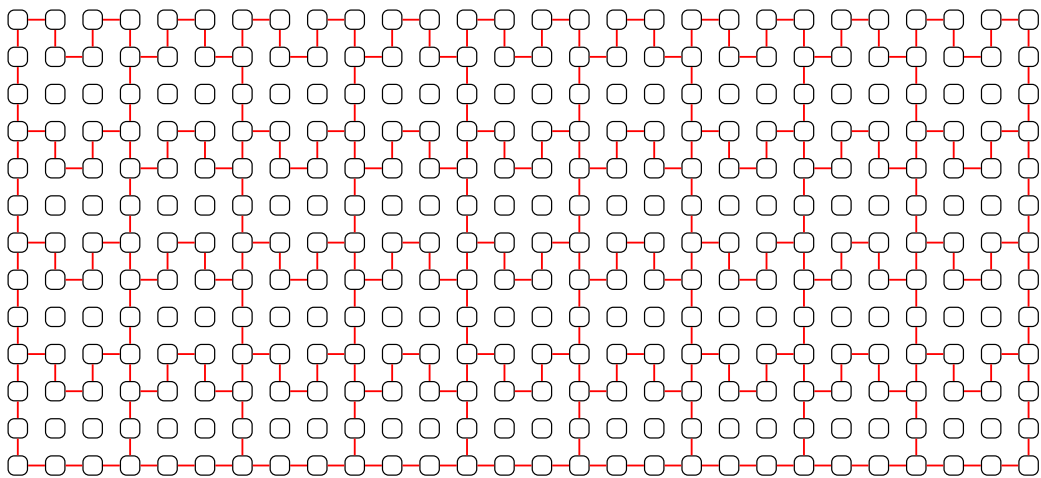

\begin{theorem}\label{thm:tailored-ds}
  \ds remains \NP-hard when its input $(G,N)$ comes with a vertex-partition $\mathcal B = \{B_1, \ldots, B_N\}$, two positive integers $s$ and $t$, and a bijective mapping $\eta$ from $\{B_1, \ldots, B_N\}$ to the vertex set of the $s \times t$ snaking grid such that:
  \begin{compactenum}[(i)]
  \item\label{it1} $G$ has a partial 4-sequence to the quotient trigraph $G/\mathcal B$, 
  \item\label{it2} $G/\mathcal B$ is a spanning subgraph of the $s \times t$ snaking grid, with $t$ even, witnessed by $\eta$, and 
  \item\label{it3} every dominating set of $G$ intersects each $B_i$, for $i \in [N]$.
  \end{compactenum}
\end{theorem}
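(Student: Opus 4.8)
The plan is to reduce from a known NP-hard restricted variant of \textsc{Dominating Set} --- the natural candidate being \textsc{Minimum Dominating Set} on planar graphs of bounded degree, or equivalently on (subdivisions of) grid graphs, which is NP-hard by \cite{Garey76,Clark90} --- and then \emph{engineer} the instance so that the three extra properties hold. The key realization is that properties \eqref{it2} and \eqref{it3} are essentially design constraints we get to impose: we will build $G$ by taking a starting bounded-degree planar instance, laying it out inside a sufficiently fine grid (every vertex and every edge of the original instance becoming a small ``gadget'' occupying a few grid cells, edges routed along grid lines as in a planar VLSI-style embedding), and then declaring the parts $B_1,\dots,B_N$ to be exactly the vertex sets of these gadget-cells, one per cell of the snaking grid. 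The snaking-grid shape (rather than a plain grid) is chosen precisely so that, after contracting each $B_i$, the quotient is a spanning subgraph of it; since the snaking grid is a subdivision of a wall with isolated vertices, routing an arbitrary bounded-degree planar graph into it is a routine --- if tedious --- embedding argument.

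First I would fix the source problem and a standard fact: \textsc{Dominating Set} restricted to planar subcubic (or grid-subdivision) graphs is NP-hard, and such a graph on $m$ edges embeds in an $O(m)\times O(m)$ grid with edges drawn as internally-disjoint grid paths. Second, I would describe the \emph{local gadgets}: each original vertex $v$ is replaced by a small gadget $\mathrm{gad}(v)$ containing a distinguished ``private'' vertex $p_v$ whose neighborhood within $G$ is confined to $\mathrm{gad}(v)$ --- this is what will force property \eqref{it3}, since any dominating set must contain a vertex dominating $p_v$, hence a vertex of $\mathrm{gad}(v)$. For the edges, since an edge $uv$ of the original graph is routed through a sequence of empty grid cells, each such cell also gets a tiny gadget with its own private vertex; the domination semantics of the original instance must be faithfully transported along these subdivision gadgets (the classical trick of subdividing each edge a multiple of three times, or inserting a short ``pendant'' gadget, so that subdivided edges don't cost extra in the dominating set). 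Third, I would verify the equivalence: $G$ has a dominating set of size $\le N' := (\text{original }k) + (\text{number of auxiliary cell-gadgets})$ iff the original instance has a dominating set of size $\le k$. And finally, property \eqref{it1}: the partial $4$-sequence from $G$ to $G/\mathcal B$ is obtained by contracting each $B_i$ one by one; because each $B_i$ is a bounded-size set living in one snaking-grid cell and is adjacent only to the gadgets of the (at most~$3$, by the wall structure) neighboring cells, contracting $B_i$ creates red edges only toward those few neighbors, and one checks the red degree never exceeds~$4$ during this process. (Here \cref{lem:subdivided-grid} is not directly invoked but its spirit --- grids/walls/subdivisions have small twin-width via a clean contraction order --- guides the bookkeeping.)

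The two genuine obstacles are, first, getting all the domination bookkeeping exactly right across the subdivision/edge gadgets so that the reduction is an \emph{exact} equivalence with the clean target bound $N'$ --- the private vertices $p_v$ and $p_{\text{cell}}$ pull toward larger solutions, while we need them to ``come for free''; the standard resolution is to make each cell-gadget a small star or path whose center is the private vertex and arrange that picking the center is always an available optimal choice, and to make $t$ even / pad with the isolated vertices of the snaking grid so the layout closes up. Second, the twin-width bound during the partial sequence must stay at~$4$, not just $O(1)$: this forces the gadgets to be small enough and the adjacency pattern between a cell and its neighbors sparse enough that even after several of a cell's neighbors have been contracted, the red degree is controlled; the snaking grid's max-degree-$3$ structure is exactly the slack that makes $4$ (rather than $5$ or $6$) achievable, and verifying this is where the careful choice of which cell to contract next (respecting the ``snake'' order) matters. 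I would expect most of the page count to go into this second point, together with drawing the gadgets, while the NP-hardness core and the forward/backward domination arguments are comparatively short.

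\medskip
\noindent\textbf{Summary of the intended proof structure.}
\begin{compactenum}[(1)]
\item Start from \textsc{Dominating Set} on planar subcubic graphs (NP-hard), and take a grid embedding with edges routed along grid paths.
\item Replace each vertex and each routing-cell by a small gadget carrying a private, locally-dominated vertex; declare the $B_i$ to be these per-cell gadget vertex sets, mapped to the snaking grid via $\eta$.
\item Prove the exact equivalence of the domination problems with the appropriately shifted parameter, giving property \eqref{it3} from the private vertices.
\item Observe $G/\mathcal B$ is a spanning subgraph of the snaking grid (property \eqref{it2}), using the even-$t$ padding and isolated vertices to close the layout.
\item Contract the $B_i$ in snake order and bound the red degree throughout by $4$, yielding the partial $4$-sequence (property \eqref{it1}).
\end{compactenum}
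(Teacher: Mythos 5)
Your overall layout (grid embedding, per-cell gadgets, private vertices to force property~$(\ref{it3})$, snake-order contraction for~$(\ref{it1})$) is in the right spirit, but your choice of source problem creates a gap that does not look fixable without a substantially different plan.

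The crucial numerical constraint in the target instance is this: the theorem fixes the budget of the produced \ds\ instance to be $N = |\mathcal B|$, and property~$(\ref{it3})$ forces every dominating set to meet each $B_i$, so a feasible solution must pick \emph{exactly one vertex per part}. You propose to declare every occupied cell of the snaking grid a part and to endow each with a private vertex, so the forced solution size is $N = (\text{number of cells})$. Now suppose you reduce from planar subcubic \textsc{Dominating Set} with budget $k$. The natural per-cell gadget offers two local choices --- pick a vertex that ``helps'' its neighboring gadgets (corresponding to the original vertex being in the dominating set) or a vertex that only privately dominates its own cell. Under the exactly-one-per-cell regime, \emph{any} mixture of these choices has total size $N$, so there is no mechanism left to enforce that at most $k$ cells make the ``helping'' choice. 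Your parameter accounting ``$N' := k + (\text{number of auxiliary cell-gadgets})$'' does not reconcile with $N = |\mathcal B|$: the number of parts includes one part per original vertex, so $N$ is $n$ plus auxiliary, not $k$ plus auxiliary, and in general $k \neq n$. The reduction as described would accept any instance whose underlying graph has \emph{some} dominating set, which is always, so YES and NO instances collapse.

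This is exactly why the paper starts from \textsc{Planar 3-SAT} rather than from a budgeted graph problem. Satisfiability has no numerical budget: a solution is ``one Boolean choice per variable, satisfying each clause,'' which maps bijectively onto ``one vertex per gadget cell.'' The variable cells encode $\top$/$\bot$, the clause cells are forced (by an isolated private vertex $z_j$) to pick $z_j$ and serve only as a check that $c_j$ gets dominated from outside, and dummy cells pad out the snaking grid. There is nothing to count, so the exactly-$N$ budget is precisely the right size. If you wanted to stick with a graph source problem, you would need to smuggle a unary counter into the layout to bound the number of ``helping'' choices by $k$, which would break the clean one-cell-one-part structure and the red-degree-$4$ bookkeeping; going through \textsc{Planar 3-SAT} sidesteps the issue entirely.
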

 
We will obtain~\cref{thm:tailored-ds} as a direct consequence of the following reduction, and the fact that \textsc{Planar 3-SAT} is \NP-hard~\cite{Lichtenstein82} (see~\cref{subsec:planarsat}).
 
\begin{theorem}\label{thm:np-hardness}
  There is a polynomial-time reduction from \textsc{Planar 3-SAT} to \ds that, on $n$-variable $m$-clause \textsc{Planar 3-SAT}-instances $\varphi$ produces \ds-instances $(G,N,\mathcal B= \{B_1, \ldots, B_N\},\eta)$ such that:
  \begin{compactenum}[(i)]
  \item\label{it1} $\mathcal B$ partitions $V(G)$, and $G$ has a partial 4-sequence to the quotient trigraph $G/\mathcal B$, 
  \item\label{it2} $G/\mathcal B$ is a spanning subgraph of the $(m+1) \times n'$ snaking grid, with $n'$ even in $\{n,n+1\}$, witnessed by $\eta$, 
  \item\label{it3} every dominating set of $G$ intersects each $B_i$, for $i \in [N]$, and
  \item\label{it4} $\varphi$ is satisfiable if and only if $G$ has a dominating set of size $N$.
  \end{compactenum}
\end{theorem}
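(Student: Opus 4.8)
The plan is to build, from a \textsc{Planar 3-SAT} instance $\varphi$, a graph $G$ together with the partition $\mathcal B$ and the embedding $\eta$ by placing standard \ds gadgets at the nodes of a wall (equivalently, the snaking grid), routing wires along the grid, and then verifying the three structural conditions plus correctness. First I would fix a planar embedding of the incidence graph of $\varphi$ and, using the fact that a planar graph of maximum degree~3 has a planar embedding on a grid/wall of polynomial size, realize the variable--clause incidence structure as a subgraph of a wall with $m+1$ rows and roughly $n$ columns; the ``snaking'' shape is exactly what lets a variable wire run back and forth through the rows so that each clause in its row can tap into it while keeping every vertex of degree at most~3. Each variable is represented by a long cycle-like wire carrying two ``phases'' (true/false), each clause by a small gadget that is dominated for free iff at least one of its three literal wires is in the satisfying phase, and the remaining grid vertices are filled with ``pending'' gadgets (a vertex plus a private pendant-style block) whose sole purpose is to force the dominating set to spend exactly one vertex in their block. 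The parts $B_1,\dots,B_N$ are precisely these local blocks, one per grid vertex, each containing a private vertex adjacent only within its block, so condition~\eqref{it3} holds by construction.

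For condition~\eqref{it4}, the budget $N$ = number of grid vertices = number of blocks is chosen so that a dominating set of size $N$ must use exactly one vertex per block; a satisfying assignment then selects, in each variable wire, the vertices corresponding to its phase (a ``canonical'' dominating set of a cycle of gadgets), which dominates the clause gadgets whose clause it satisfies, and the pending blocks dominate themselves — this gives a size-$N$ dominating set. Conversely, a size-$N$ dominating set is forced by the half-graph/wire structure to be ``consistent'' along each variable wire — it cannot switch phase midway without leaving some block vertex undominated — hence reads off a global assignment, and every clause gadget being dominated forces that assignment to satisfy $\varphi$. This is the routine half of the argument and mirrors classical \ds-hardness proofs on planar graphs; the only care needed is that the forcing works block-by-block so that the equivalence is tight with budget exactly~$N$, and that $n'$ can be padded to an even value in $\{n,n+1\}$ by appending one trivial column of pending blocks.

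Conditions~\eqref{it1} and~\eqref{it2} are where the real work lies, and I expect~\eqref{it1} — producing a partial $4$-sequence from $G$ to $G/\mathcal B$ — to be the main obstacle. The idea for~\eqref{it2} is to check that after contracting each block $B_i$ to a point, the only surviving adjacencies are between blocks sitting at grid-adjacent nodes (wires and clause attachments were routed precisely along grid edges), so $G/\mathcal B$ embeds as a spanning subgraph of the snaking grid via $\eta$; this is a bookkeeping verification once the gadgets are laid out on the wall. For~\eqref{it1}, I would contract the blocks one at a time (or region by region) and bound the red degree that appears: since each block $B_i$ is a bounded-size gadget that is a \emph{module relative to} everything outside its grid-neighborhood (all long-range structure has been suppressed — every vertex of $B_i$ has the same external neighborhood, namely nothing outside the $\le 4$ neighboring blocks, or the gadget is designed so that the external attachment points are few and shared), contracting $B_i$ creates red edges only toward its $O(1)$ grid-neighbors, and one orders the contractions (e.g.\ along the snaking order) so that at each step every vertex has red degree at most~$4$; the snaking layout is what keeps the number of simultaneously ``active'' red edges down to~$4$ rather than growing with the wall. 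The delicate point is designing the clause gadget and the wire-junction gadget so that their internal contraction sequences never exceed red degree~$4$ even while some neighboring blocks are already contracted — this requires choosing gadgets whose internal structure is itself a subdivision of a small grid/path (so that \cref{lem:subdivided-grid} and \cref{lem:modular} can be invoked locally) and checking the handful of boundary cases by hand.
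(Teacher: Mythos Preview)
Your plan follows essentially the same route as the paper: place local gadgets at the nodes of the snaking grid (variable wires, clause gadgets, dummy/pending fillers), make each block contain a private vertex to force~\eqref{it3}, and prove~\eqref{it4} by the standard ``consistent phase along a wire'' argument. The paper does exactly this, with concrete gadgets: the initial variable gadget is a triangle, the regular variable gadget is a \emph{bull} (triangle plus two pendants $t,f$), propagation is two crossing edges $\top_\ell f_{\ell'}$ and $\bot_\ell t_{\ell'}$, the clause gadget is just a vertex $c_j$ adjacent to its three literals plus an isolated vertex $z_j$, and the dummy gadget is a single isolated vertex.

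Where your plan drifts from what actually works is the mechanism for~\eqref{it1}. The blocks are \emph{not} modules relative to anything useful: in a variable block, $\top,\bot,t,f$ each have a \emph{different} single external neighbor, so \cref{lem:modular} does not apply, and the gadgets are not subdivisions of grids, so \cref{lem:subdivided-grid} is not the right local tool either. The paper instead hand-verifies an explicit internal contraction order for the bull, namely $(\top,d),(\{\top,d\},t),(\{\top,d,t\},\bot),(\{\top,d,t,\bot\},f)$, and checks that at every step the merged vertex has at most one internal red edge and at most three external ones until the very last step, where all four externals appear. The global ordering is not the snaking order you suggest but a degree-based staging: first contract all variable blocks with only two neighboring blocks (four external edges), then the initial-variable and clause blocks, and only then the branching variable blocks with three neighbors --- which by that point see three \emph{single} vertices, not three uncontracted bulls, because no two branching blocks are adjacent. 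The snaking layout is irrelevant for~\eqref{it1}; its role is entirely in the later cross-composition. So your identification of~\eqref{it1} as the crux is right, but the fix is concrete gadget design plus a degree-staged contraction order, not module/grid lemmas.
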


One can observe that conditions $(\ref{it1})$, $(\ref{it2})$, and $(\ref{it3})$ of \cref{thm:tailored-ds,thm:np-hardness} match.
Condition $(\ref{it4})$ in~\cref{thm:np-hardness} only insists that the polynomial-time transformation is a valid reduction.

\subsection{Planar satisfiability}\label{subsec:planarsat}

\textsc{Planar 3-SAT} was introduced by Lichtenstein, who showed its NP-hardness~\cite{Lichtenstein82}, as a convenient starting point to prove the intractability of planar problems.
It is a restriction of \textsc{3-SAT} where the variable/clause incidence graph is planar even if one adds edges between two consecutive variables for a specified ordering of the variables: $x_1, x_2, \ldots, x_n$; i.e., $x_ix_{i+1}$ is an edge (with index $i+1$ taken modulo $n$).
In any \textsc{Planar 3-SAT}-instance $\varphi$ one can partition the clause set $\mathcal C$ into $(\mathcal C^+,\mathcal C^-)$ such that $\mathcal C^+$ and $\mathcal C^-$ each admits a removal ordering, where a \emph{removal ordering} consists of iteratively applying the two following kinds of deletions:
\begin{compactitem}
\item removing a variable which is not present in any remaining clause, or
\item removing a clause on \emph{three consecutive variables} together with the \emph{middle variable},
\end{compactitem}
which ends up with an empty set of clauses.
\emph{Three consecutive variables} means three variables $x_i$, $x_j$, $x_k$, with $i<j<k$ such that $x_{i+1}, x_{i+2}, \ldots, x_{j-1}$ and $x_{j+1}, x_{j+2}, \ldots, x_{k-1}$ have all been removed already.
The middle variable of the clause is $x_j$.
For an example, see~\cref{fig:planar3sat}.
What matters to us is that $\varphi$ can be embedded in an $(|\mathcal C|+1) \times n$ grid (the dashed blue lines in~\cref{fig:planar3sat}) such that every ``event'' (a variable ``wire'' turns or finishes) happens at a grid point.

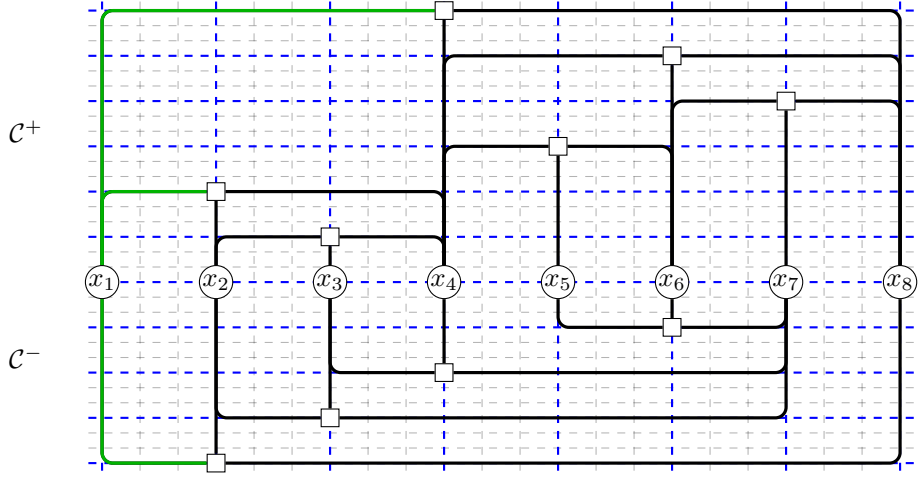
\begin{figure}[h!]
\centering
\begin{tikzpicture}[scale=1]
\def\n{8}
\pgfmathsetmacro{\pn}{\n-1}
\pgfmathsetmacro{\nn}{2 * \n}
\def\s{1.5}
\def\z{0.6}

\foreach \i in {3,5,...,\nn}{
  \draw[dashed,opacity=0.3] (\i * \s / 2 - \s /6,6.2 * \z) -- (\i * \s / 2 - \s/6,-4.2 * \z) ;
  \draw[dashed,opacity=0.3] (\i * \s / 2 + \s /6,6.2 * \z) -- (\i * \s / 2 + \s/6,-4.2 * \z) ;
}
\foreach \j in {-7,-5,...,11}{
  \draw[dashed,opacity=0.3] (\s-0.3 * \z,\j * \z / 2 - \z /6) -- (\n * \s+0.3 * \z,\j * \z / 2 - \z /6) ;
  \draw[dashed,opacity=0.3] (\s-0.3 * \z,\j * \z / 2 + \z /6) -- (\n * \s+0.3 * \z,\j * \z / 2 + \z /6) ;
}
\foreach \i in {1,...,\n}{
  \draw[thick,dashed,blue] (\i * \s,6.2 * \z) -- (\i * \s,-4.2 * \z) ;
}
\foreach \j in {-4,...,6}{
  \draw[thick,dashed,blue] (\s-0.3 * \z,\j * \z) -- (\n * \s+0.3 * \z,\j * \z) ;
}

\foreach \i in {1,...,\n}{
    \node[fill=white,draw,circle,inner sep=0.01cm] (x\i) at (\s * \i,0) {$x_\i$} ;      
}

\foreach \i/\j/\k/\l/\c in {2/3/4/1/red,1/2/4/2/red,4/5/6/3/red,6/7/8/4/red,4/6/8/5/red,1/4/8/6/red, 5/6/7/-1/blue,3/4/7/-2/blue,2/3/7/-3/blue, 1/2/8/-4/blue}{
  \pgfmathtruncatemacro\ll{\l+5} ;
 \node[draw,fill=white] (c\j\ll) at (\s * \j, \l * \z) {} ;
 \draw[very thick,rounded corners] (x\i) -- (\s * \i, \l * \z)  -- (c\j\ll) -- (\s * \k, \l * \z) -- (x\k) ;
 \draw[very thick] (x\j) -- (c\j\ll) ;
}

\begin{scope}[very thick, green!70!black, rounded corners]
\draw (x1) -- (\s, 6 * \z) -- (c411) ;
\draw (x1) -- (\s, 2 * \z) -- (c27) ;
\draw (x1) -- (\s, -4 * \z) -- (c21) ;
\end{scope}

\node at (0.5,2) {$\mathcal C^+$} ;
\node at (0.5,-1) {$\mathcal C^-$} ; 

\end{tikzpicture}
\caption{The bipartition $(\mathcal C^+,\mathcal C^-)$ of an $n$-variable $m$-clause \textsc{Planar 3-SAT}-instance, with $n=8$ and $m=10$.
  The square vertices represent the clauses.
  All the degree-3 vertices of the solid black (and green) graph lie at intersections of the $(m+1) \times n$ grid represented with the dashed blue lines.
  The $(3m+1) \times (3n-2)$ grid made by the dashed blue and gray lines will embed the snaking grid.
  The ``wire'' of variable $x_1$ is highlighted in green.
}
\label{fig:planar3sat}
\end{figure}

The reduction to establish~\cref{thm:np-hardness} will be a rather transparent substitution of local gadgets on grid points of~\cref{fig:planar3sat}.
We need a variable gadget (to initiate a Boolean choice), a~propagation gadget (``wires'' that may split), and a clause gadget (where the ``wires'' of three variables meet at a grid point).
In addition, we will design a very simple dummy gadget, used to fill all the unoccupied grid points.
We start with a description of the variable and propagation gadgets.

\subsection{Variable and propagation gadgets}

We distinguish the \emph{initial variable gadget} and the \emph{(regular) variable gadget}.
The initial variable gadget is simply a triangle, one vertex of which corresponds to setting the variable to true (marked $\top$ in the figures), a second vertex, to setting it to false (marked $\bot$), and a third vertex which will remain of degree 2.
These three vertices form one part, say, $B_\ell$ of $\mathcal B$.
We denote them by $\top, \bot,$ and $d$, and use the subscript of the part of $\mathcal B$ they belong to whenever we want to be specific (so $\top_\ell, \bot_\ell,$ and $d_\ell$). 

The \emph{regular variable gadget} (or \emph{variable gadget} for short) is the \emph{bull} graph. 
More precisely, it is an initial variable gadget where one adds one pendant neighbor $t$ to $\top$ and one pendant neighbor $f$ to $\bot$.
Again these five vertices form one part of $\mathcal B$, say $B_\ell$, that we may denote $\top_\ell, \bot_\ell, d_\ell, t_\ell,$ and $f_\ell$.
For both the initial and regular variable gadgets, the \emph{index} of the gadget is simply $\ell$.

The propagation gadget from a (possibly initial) variable gadget of index $\ell$ to a regular variable gadget of index $\ell'$ consists of the two edges $\top_\ell f_{\ell'}$ and $\bot_\ell t_{\ell'}$.  The ``propagation'' starts at an initial variable gadget and extends in two directions.
It may also ``split,'' that is, a same regular variable gadget can propagate to two variable gadgets, one horizontally and one vertically.
See~\cref{fig:propagation} for a depiction of these gadgets.

\begin{figure}[h!]
\centering
\begin{tikzpicture}[scale=.8]
  \def\s{2}
  \begin{scope}[xshift=4 * \s cm]
      \foreach \l/\x/\y/\h in {t/0/0/\top,f/0/2/\bot,z/0.3/1/}{
        \node[draw,circle,inner sep=0.05cm] (\l4) at (\x,\y) {$\h$} ;
      }
      \draw (t4) -- (z4) -- (f4) -- (t4) ;
      \node[draw,very thick,rounded corners,fill,fill opacity=0.08,fit=(t4) (f4) (z4)] (B4) {} ;
  \end{scope}
  
  \foreach \i/\m in {1/-1,2/-1,3/-1,5/1,6/1,7/1}{
    \begin{scope}[xshift=\i * \s cm]
      \foreach \l/\x/\y/\h in {t/0/0/\top,f/0/2/\bot,d/-0.35/0.7/,e/-0.35/1.3/,z/0.3/1/}{
        \node[draw,circle,inner sep=0.05cm] (\l\i) at (\x * \m,\y) {$\h$} ;
      }
      \draw (t\i) -- (z\i) -- (f\i) -- (t\i) -- (d\i) ;
      \draw (f\i) -- (e\i) ;
      \node[draw,very thick,rounded corners,fit=(t\i) (f\i) (e\i) (d\i) (z\i)] (B\i) {} ;
    \end{scope}
  }
  \begin{scope}[xshift=6 * \s cm]
      \foreach \l/\x/\y/\h in {t/0/0/\top,f/0/2/\bot,d/-0.35/0.7/,e/-0.35/1.3/,z/0.3/1/}{
        \node[draw,circle,inner sep=0.05cm] (\l8) at (\x,\y+3.4) {$\h$} ;
      }
      \draw (t8) -- (z8) -- (f8) -- (t8) -- (d8) ;
      \draw (f8) -- (e8) ;
      \node[draw,very thick,rounded corners,fit=(t8) (f8) (e8) (d8) (z8)] (B8) {} ;
  \end{scope}
  \draw (t6) to [bend left=43] (e8) ;
  \draw (f6) to [bend left=25] (d8) ;
  
  \foreach \i/\j in {t4/e5,t4/e3,f4/d5,f4/d3}{
    \draw (\i) -- (\j) ;
  }
  \foreach \i [count = \j from 2] in {1,2}{
    \draw (t\j) -- (e\i) ;
    \draw (f\j) -- (d\i) ;
  }
  \foreach \i [count = \j from 6] in {5,6}{
    \draw (t\i) -- (e\j) ;
    \draw (f\i) -- (d\j) ;
  }
  \draw[-stealth] (1,1) --++(-0.8,0) ;
  \draw[-stealth] (15,1) --++(0.8,0) ;
  \draw[-stealth] (12,6.15) --++(0,0.8) ;
\end{tikzpicture}
\caption{Initial variable gadget (slightly shaded box in the middle), seven (regular) variable gadgets, and the propagation gadgets (edges in between variable gadgets), allowing the ``wire'' to split. 
  The thick boxes represent the partition $\mathcal B$, each box corresponding to an initial or regular variable gadget.
  The figure should be rotated 90 degrees to match~\cref{fig:planar3sat} where the propagation from the initial variable gadgets occurs vertically.
  }
\label{fig:propagation}
\end{figure}

For every variable $x_i$ of $\varphi$, we have a unique initial variable gadget, and several regular variable gadgets connected to it via a chain of propagations. 
We call \emph{wire} of $x_i$, the graph induced by all these variable gadgets (the initial variable gadget included).
The \emph{digraph of a wire} (or, by extension, of its corresponding variable) is the directed graph whose vertices are the variable gadgets of the wire, and with an arc from a variable gadget of index $\ell$ to one of index $\ell'$ if there is a propagation gadget from the former to the latter.
Following~\cref{fig:planar3sat} the digraph of every wire is an out-tree of maximum out-degree 2, rooted at the initial variable gadget.
The leaves of this tree are adjacent to a grid point where we will place a clause gadget.

The following lemma ensures the expected behavior of a choice propagation.
The rest of the construction (mainly the insertion of the clause gadgets) will not change the properties used in its proof, so we show it here.
\begin{lemma}\label{lem:propagation}
  Let $H$ be the wire of variable $x$, and let $\vec T$ be the out-tree, digraph of $H$.
  The only two dominating sets of $H$ of size (at most) $|V(\vec T)|$ consists of picking $\top$ in every gadget, or of picking $\bot$ in every gadget.
\end{lemma}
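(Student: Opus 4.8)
The plan is to combine a straightforward counting argument with the purely local structure of the triangle/bull gadgets and of the propagation edges. Write $k := |V(\vec T)|$ for the number of variable gadgets forming the wire $H$, and recall that these gadgets are exactly the parts of $\mathcal B$ meeting $V(H)$, hence partition $V(H)$. First I would dispatch the easy direction: ``pick $\top_\ell$ in every gadget $\ell$'' is a dominating set of size $k$, since inside each gadget $\top_\ell$ dominates $\top_\ell, \bot_\ell, d_\ell$ and the pendant $t_\ell$, the root gadget is a triangle hence entirely dominated, and for any non-root gadget the only remaining vertex $f_\ell$ is dominated through the propagation edge $\top_p f_\ell$ coming from its parent $p$. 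Since $H$ admits the automorphism that simultaneously swaps $\top_\ell \leftrightarrow \bot_\ell$ and $t_\ell \leftrightarrow f_\ell$ in every gadget, ``pick $\bot_\ell$ in every gadget'' is also a dominating set of size $k$.

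For the converse, let $D$ be a dominating set of $H$ with $|D| \le k$. Since $N_H[d_\ell] = \{d_\ell, \top_\ell, \bot_\ell\} \subseteq B_\ell$, the set $D$ contains one of $d_\ell, \top_\ell, \bot_\ell$ for each $\ell$; in particular $D$ meets every gadget. As the gadgets partition $V(H)$ and $|D| \le k$, this forces $|D \cap B_\ell| = 1$ for every $\ell$; let $v_\ell$ denote this unique vertex, so that $v_\ell \in \{d_\ell, \top_\ell, \bot_\ell\}$.

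The crux is to rule out $v_\ell = d_\ell$, which I would do using the pendant vertices. If $\ell$ is a regular gadget with parent $p$, then, since $v_\ell$ is the only element of $D$ in $B_\ell$, the vertex $t_\ell$ (whose neighborhood is $\{\top_\ell, \bot_p\}$) forces $v_p = \bot_p$, whereas $f_\ell$ (whose neighborhood is $\{\bot_\ell, \top_p\}$) forces $v_p = \top_p$ --- impossible. If $\ell$ is the initial gadget, the same argument applied to the pendants of any child of $\ell$ forces $v_\ell$ to be simultaneously $\top_\ell$ and $\bot_\ell$; here we use that $H$ contains at least one regular gadget, which holds in our construction (a wire reduced to a single triangle would genuinely admit the extra dominating set $\{d\}$). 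Hence $v_\ell \in \{\top_\ell, \bot_\ell\}$ for every gadget $\ell$.

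It remains to propagate consistency along $\vec T$: on an arc from $p$ to a regular child $\ell$, if $v_p = \top_p$ then $f_\ell$ is already dominated but $t_\ell$ (neighborhood $\{\top_\ell, \bot_p\}$) is not, forcing $v_\ell = \top_\ell$; symmetrically $v_p = \bot_p$ forces $v_\ell = \bot_\ell$. Since $\vec T$ is a connected out-tree rooted at the initial gadget, an induction from the root shows that all $v_\ell$ coincide, so $D$ is one of the two sets described above. I expect the $d_\ell$-elimination step (together with minding the degenerate single-triangle caveat) to be the only delicate point; everything else is a routine check over bounded-size gadgets, and the argument is deliberately arranged to survive the later insertion of clause gadgets, since any vertex those attach to a $\top_\ell$ or $\bot_\ell$ of a leaf gadget gets dominated exactly as $f_\ell$ or $t_\ell$ does here.
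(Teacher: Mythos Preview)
Your argument is correct and follows the same route as the paper's (much terser) proof, which compresses your $d_\ell$-elimination and propagation steps into the single observation that for any arc $\ell \to \ell'$ of $\vec T$, the only one-vertex-per-gadget pairs dominating all of $B_\ell \cup B_{\ell'}$ are $(\top_\ell,\top_{\ell'})$ and $(\bot_\ell,\bot_{\ell'})$.

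One small slip in your root case: assuming $v_\ell = d_\ell$ for the initial gadget $\ell$ and looking at a child $\ell'$, the pendants $t_{\ell'}, f_{\ell'}$ force $v_{\ell'}$ (not $v_\ell$) to be simultaneously $\top_{\ell'}$ and $\bot_{\ell'}$ --- the contradiction lands on the child, since $v_\ell = d_\ell$ rules out both $\bot_\ell$ and $\top_\ell$, leaving only $v_{\ell'}$ to cover $t_{\ell'}$ and $f_{\ell'}$ respectively. Your caveat about the degenerate single-triangle wire (where $\{d\}$ would be a third dominating set) is a nice catch that the paper leaves implicit.
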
  
\begin{proof}
  Every variable gadget has a vertex $d$ only adjacent to vertices within the gadget, so a dominating set of size $|V(\vec T)|$ has to intersect each gadget exactly once.
  Since the digraph of $H$ is supposed to be an out-tree, and the vertices $t_{\ell'}$ and $f_{\ell'}$ have their neighborhood included in the gadget of index $\ell'$ and its unique in-neighbor (in $\vec T$), of index, say, $\ell$, it suffices to observe that the only pairs of vertices dominating all the vertices in these two gadgets are $\{\top_\ell,\top_{\ell'}\}$ and $\{\bot_\ell,\bot_{\ell'}\}$. 
\end{proof}
In the statement of~\cref{lem:propagation}, the former choice \emph{sets $x$ to true}, the latter, \emph{sets $x$ to false}.
  
\subsection{Clause and dummy gadgets}

The \emph{clause gadget} of $C_j$ consists of two vertices $c_j, z_j$, where $c_j$ is linked to its three literals in the corresponding wire ends, and $z_j$ is isolated.
The set $\{c_j,z_j\}$ is added to $\mathcal B$.
See~\cref{fig:clause} for an illustration.

\begin{figure}[h!]
\centering
\begin{tikzpicture}[scale=.8]
  \foreach \i/\xs/\ys/\r/\m in {1/0/0/0/1,2/3.4/-1.2/90/1,3/4.8/0/0/-1}{
    \begin{scope}[xshift=\xs cm,yshift=\ys cm,rotate=\r]
      \foreach \l/\x/\y/\h in {t/0/0/\top,f/0/2/\bot,d/-0.35/0.7/,e/-0.35/1.3/,z/0.3/1/}{
        \node[draw,circle,inner sep=0.05cm] (\l\i) at (\x * \m,\y) {$\h$} ;
      }
      \draw (t\i) -- (z\i) -- (f\i) -- (t\i) -- (d\i) ;
      \draw (f\i) -- (e\i) ;
      \node[draw,very thick,rounded corners,fit=(t\i) (f\i) (e\i) (d\i) (z\i)] (B\i) {} ;
    \end{scope}
  }
  \node[draw,circle,inner sep=0.05cm] (c) at (2.4,1) {$c_j$} ;
  \node[draw,circle,inner sep=0.05cm] (z) at (2.4,2) {$z_j$} ;
  \node[draw,very thick,rounded corners,fit=(c) (z)] (C) {} ;
  \foreach \i in {t1,f2,f3}{
    \draw (c) -- (\i) ;
  }

  \draw[-stealth] (-2.2,1) to node[above] {$x_{a_1}$} (-1.2,1) ;
  \draw[-stealth] (7,1) to node[above] {$x_{a_3}$} (6,1) ;
  \draw[-stealth] (2.4,-3.3) to node[left] {$x_{a_2}$} (2.4,-2.3) ;
\end{tikzpicture}
\caption{Three variable wires converging to a clause gadget $C_j = x_{a_1} \lor \neg x_{a_2} \lor \neg x_{a_3}$.
  Again, the thick boxes represent the partition $\mathcal B$.
  In fact, due to the horizontal snaking, either one of $x_{a_1}, x_{a_3}$ would appear above the clause gadget (see~\cref{fig:clause-gadget-snaking}).}
\label{fig:clause}
\end{figure}

A \emph{dummy gadget} is simply an isolated vertex $z$, which makes its own singleton part $\{z\}$ in $\mathcal B$.
Thus a dummy gadget is a clause gadget minus the vertex $c_j$.
The isolated vertices $z_j$'s in the clause gadgets, and the dummy gadgets may seem artificial.
Their purpose is to fulfill conditions $(\ref{it2})$ and $(\ref{it3})$.
Besides isolated vertices will not remain isolated in the subsequent OR-cross-composition.

\subsection{Construction}

At this point, the construction is probably clear.
Let $n$ and $m$ be the number of variables and clauses, respectively, of the \textsc{Planar 3-SAT}-instance $\varphi$.
We assume that $n$ is even (if not, we add a dummy variable not appearing in any clause).
We put one initial variable gadget at each grid point (of the finer $(3m+1) \times (3n-2)$ grid) in~\cref{fig:planar3sat} with a circled vertex ($n$ grid points in total).
We place a regular variable gadget at every grid point covered by the solid black (or green) lines, and at every grid point just below any horizontal segment of a wire (to account for the snaking), except for those on the bottommost row.
We orient the propagation such that the digraph of each of the $n$ wires is an out-tree rooted at its initial variable gadget.
For every clause of $\mathcal C^-$, we place a clause gadget at the corresponding grid point occupied by a square vertex in~\cref{fig:planar3sat}, and link it accordingly to its literals.
For every clause of $\mathcal C^+$, the clause gadget is instead placed at the grid point just below the corresponding square vertex. 
This is again because of the horizontal snaking (see~\cref{fig:clause-gadget-snaking}).
Finally we put a dummy gadget at every unoccupied grid point.
This finishes the construction of the graph $G$, and its vertex-partition $\mathcal B$.
We set $N$ to $|\mathcal B|$, that is, $(3m+1) \times (3n-2)$.
The embedding witness $\eta$ is implicit from the construction. 

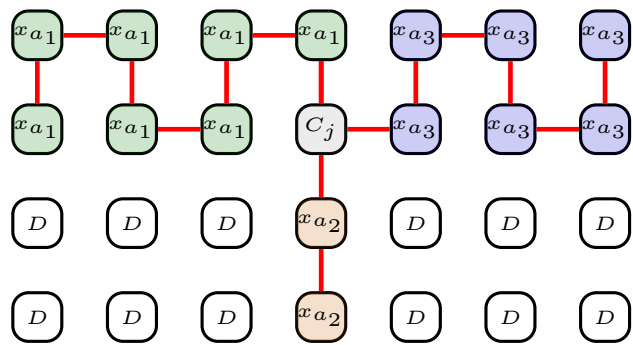
\begin{figure}[h!]
  \centering
  \resizebox{250pt}{!}{
  \begin{tikzpicture}[scale=.85,
      vertex/.style={draw,thick,rectangle,rounded corners,inner sep=0.22cm},
      filled/.style={fill opacity=0.15,fill=white!50!black,rectangle,rounded corners,inner sep=0.22cm},
      filledo/.style={fill opacity=0.2,fill=green!50!black,rectangle,rounded corners,inner sep=0.22cm},
      filledtw/.style={fill opacity=0.2,fill=orange!80!black,rectangle,rounded corners,inner sep=0.22cm},
      filledth/.style={fill opacity=0.2,fill=blue!80!black,rectangle,rounded corners,inner sep=0.22cm}]
    \def\red{red!70!black}
    \def\n{1}
    \def\m{2}
    \pgfmathtruncatemacro\nz{\n - 1}
    \pgfmathtruncatemacro\nn{3 * \n + 1}
    \pgfmathtruncatemacro\mm{3 * \m + 1}
    \pgfmathtruncatemacro\nt{3 * \n}
    \pgfmathtruncatemacro\mt{3 * \m}
    \pgfmathtruncatemacro\np{3 * \n - 2}
    \pgfmathtruncatemacro\mp{3 * \m - 2}
    \def\h{1}
    \def\v{1}
    \foreach \j in {1,...,\mm}{
      \foreach \i in {1,...,\nn}{
        \node[vertex] (b\j-\i) at (\j * \h, \i * \v) {} ;
      }
    }
    \foreach \i in {4}{
     \foreach \j in {1,3,...,\mt}{
      \pgfmathtruncatemacro\jp{\j+1}
      \draw[very thick,red] (b\j-\i) -- (b\jp-\i) ;
     }
    }
    \foreach \i in {3}{
     \foreach \j in {3,5,...,\mm}{
      \pgfmathtruncatemacro\jm{\j-1}
      \draw[very thick,red] (b\j-\i) -- (b\jm-\i) ;
     }
    }
    \foreach \j in {4}{
      \foreach \i [count = \im from 1] in {2,...,\nn}{
        \draw[very thick,red] (b\j-\i) -- (b\j-\im) ;
      }
    }
    \foreach \i in {3}{
     \pgfmathtruncatemacro\ip{\i+1} 
     \foreach \j in {1,...,\mm}{
      \draw[very thick,red] (b\j-\i) -- (b\j-\ip) ;
     }
    }
    \foreach \i/\j/\t/\f in {3/1/{x_{a_1}}/filledo, 4/1/{x_{a_1}}/filledo, 3/2/{x_{a_1}}/filledo, 4/2/{x_{a_1}}/filledo, 3/3/{x_{a_1}}/filledo, 4/3/{x_{a_1}}/filledo, 4/4/{x_{a_1}}/filledo, 1/4/{x_{a_2}}/filledtw, 2/4/{x_{a_2}}/filledtw, 3/5/{x_{a_3}}/filledth,4/5/{x_{a_3}}/filledth,3/6/{x_{a_3}}/filledth,4/6/{x_{a_3}}/filledth,3/7/{x_{a_3}}/filledth,4/7/{x_{a_3}}/filledth, 3/4/{C_j}/filled, 1/1/D/,1/2/D/,1/3/D/,1/5/D/,1/6/D/,1/7/D/,2/1/D/,2/2/D/,2/3/D/,2/5/D/,2/6/D/,2/7/D/}{
      \node[\f] at (\j * \h, \i * \v) {} ;
      \node at (\j * \h, \i * \v) {\tiny{$\t$}} ;
    }
  \end{tikzpicture}
  }
  \caption{The vicinity of a clause gadget, for a clause $C_j \in \mathcal C^+$.
    Red edges materialize the edges in between two parts: propagation and edges incident to the clause gadget.
    Due to the horizontal snaking, the clause gadget is placed just below its position in~\cref{fig:planar3sat}.
    Dummy gadgets are marked with a $D$.}
  \label{fig:clause-gadget-snaking}
\end{figure}

\begin{proof}[Proof of \cref{thm:np-hardness}]
  We now check the conditions $(\ref{it1})$ to $(\ref{it4})$.
  Let us start with the correctness of the reduction, namely $(\ref{it4})$.

  Assume $\varphi$ admits a satisfying assignment $A$, and \emph{sets each variable to the value given by A} in each wire of $G$.
  Complete this solution by including all the isolated vertices of $G$.
  By design this yields a set $S$ of size $N$.
  By~\cref{lem:propagation}, every vertex in a wire is dominated.
  The isolated vertices are dominated by themselves, and every vertex $c_j$ in a clause gadget is dominated since $A$ is a satisfying assignment.
  Thus $S$ is a dominating set of $G$ of size $N$ (intersecting every part of $\mathcal B$ exactly once).

  Conversely, assume that $G$ has a dominating set $S$ of size $N$.
  As every gadget contains a vertex whose neighborhood is included in its own part of $\mathcal B$ (namely $d$, and the isolated vertices), $S$ should intersect every part of $\mathcal B$ exactly once.
  In the dummy and clause gadgets, the picked vertex should thus be the isolated vertex.
  By~\cref{lem:propagation}, in each wire a consistent choice (of taking only $\top$ or only $\bot$) has to be made.
  Let $A$ be the corresponding truth assignment.
  As $S$ dominates each vertex $c_j$ in the clause gadgets, it implies that $A$ is a satisfying assignment, and $\phi$ is satisfiable.

  \textbf{Condition $(\ref{it1})$.}
  By construction $\mathcal B$ partitions $V(G)$.
  We want to propose a partial 4-sequence from $G$ to $G/\mathcal B$.
  In other words, we wish to contract every part of $\mathcal B$ into a single vertex, such that the red degree remains at most~4.

  In any order, we contract the bull graph of each regular variable gadget adjacent to only two other variable gadgets in the following way: $(\top, d)$, $(\{\top, d\}, t)$, $(\{\top, d, t\}, \bot)$,  $(\{\top, d, t, \bot\}, f)$.
  Internally this creates (at any moment) a single red edge.
  There can be up to four red edges between this part and another part of $\mathcal B$ (since there are at most four leaving black edges).
  Observe however that these four edges can be incident to a single vertex only after the last internal contraction $(\{\top, d, t, \bot\}, f)$.
  Thus the red degree of these vertices remains at most~4.

  Meanwhile the red degree of vertices not falling in that category is bounded by~3, in clause gadgets, and by~2, in other variable gadgets.
  We can then contract the initial variable gadget and the clause gadgets (there is nothing to contract in dummy gadgets).
  Regular variable gadgets adjacent to three parts of $\mathcal B$ are now adjacent to three vertices. 
  Indeed there is no pair of adjacent such gadgets.
  Thus the above suggested contraction of the bull keeps red degree at most~4.

  \textbf{Condition $(\ref{it2})$.}
  By design, $G/\mathcal B$ is a subgraph of the $(m+1) \times n$ snaking grid, witnessed by $\eta$.
  It is spanning by introduction of the dummy gadgets.

  \textbf{Condition $(\ref{it3})$.}
  This was observed when checking the correctness of the reduction.
\end{proof}
As previously mentioned, this directly implies~\cref{thm:tailored-ds}.

For \textsc{Connected $k$-Dominating Set} and \textsc{Total $k$-Dominating Set} on bounded twin-width graphs (with an upper bound possibly larger than~4), one does not need the snaking grid nor the dummy gadgets.
In the propagation between two variable gadgets, one can add an edge between the two $\top$ vertices, and an edge between the two $\bot$ vertices.
In the clause gadgets, one can add a neighbor $y_j$ to $z_j$, adjacent to the six vertices $\top$ and $\bot$ in the three incident variable gadgets.

\section{OR-cross-composition}\label{sec:composition}

We now describe the cross-composition from the NP-hard \ds restricted as in \cref{thm:tailored-ds} to \kds.
We use the polynomial equivalence $\cR$ to partition all well-formed instances for the restricted \kds (those satisfying \cref{thm:tailored-ds}) with respect to the given parameter $N$ and dimensions $(p,q)$ of the corresponding snaking grid.
All ill-formed instances will then make a single class.
Relation $\cR$ satisfies the conditions of \cref{def:poly-equivalence}.
For any two well-formed instances $(I_i,N_i,\cB_i,p_i,q_i,\eta_i),(I_{\ell},N_{\ell},\cB_{\ell},p_{\ell},q_{\ell},\eta_{\ell})$, we can check in polynomial time that $N_i = N_{\ell}$ and $(p_i,q_i) = (p_{\ell},q_{\ell})$, yielding $(\ref{defit:poly-equivalence:i})$. Now, choosing some encoding such that all well-formed instances of $\Sigma^{\leq n}$ have at most $n$ vertices, their parameter and snaking grid dimensions must also be bounded by $n$. Then, the number of equivalence classes on $\Sigma^{\leq n}$ accounting for the malformed instances is at most $n^3+1$, yielding $(\ref{defit:poly-equivalence:ii})$.

The conditions for our composition being set, consider $t$ instances of the restricted \ds, equivalent with respect to $\cR$.
If the instances are ill-formed we output an ill-formed instance of \kds.
If not, their equivalence yields common even parameter $N$ and snaking grid dimensions $(p,q)$, letting us consider them as $(I_i,N,\cB_i,p,q,\eta_i)_{i \in [t]}$.
We will construct a \kds instance $(H,N)$, with the same parameter, admitting a solution if and only if at least one input instance $(I_i,N,\cB_i,p,q,\eta_i)$ admits a solution for the restricted \kds.
Before composing the input graphs, we introduce a dummy instance in the form of graph $I_{t+1}$ serving to ensure that any valid $(H,N)$ further admits a solution picking vertices in each column.
$I_{t+1}$ is an independent set of size $2N$ on which we partition $V(I_{t+1})$ through $\cB_{i+1}$ into $N$ classes of exactly two vertices. Note that since $I_{t+1} / \cB_{t+1}$ is an independent set, it is a spanning subgraph of the $p \times q$ snaking grid as witnessed by any bijective $\eta_{t+1}$ onto the latter.

We first show how to order the partition classes of each instance in the same way with respect to their mapping onto the snaking grid. This ordering will follow a fictitious hamiltonian cycle $(y_1,...,y_N)$ on the $p \times q$ snaking grid, which we now describe.
From the snaking grid, we take all edges of the first row, and all edges of the first and last columns.
Then, from the complete underlying grid, we take all but the first edge of each column and we complete the cycle by taking all edges between two degree one vertices in the so-built union of paths (see the darker red cycles in~\cref{fig:contraction-order,fig:layers}).
Referring to the partition of instance $i \in [t+1]$ as $\mathcal{B}_i = \{B_{i,1},...,B_{i,N}\}$, we can assume up to the reordering above that $\eta_i(B_{i,j}) = y_j$.
 
Now, considering all instances over $H$, a representation of the construction that follows is given in \cref{fig:overall-sketch}. It will be useful to consider the instances in a grid such that $B_{i,j}$ is the cell in the $i$-th row and $j$-th column, and we will use the term partition class or cell interchangeably. We can then see instance $I_i$ as row $i$, and define regular instance columns, omitting the dummy instance, as $C_j = \bigcup_{i \in [t]} B_{i,j}$ for $j \in [N]$.

\medskip

\textbf{Construction.} We start building our composed graph $H$ as the union of all instances $(I_i)_{i \in [t+1]}$, that is, $V(H) = \bigcup_{i \in [t+1]} V(I_i)$ and $E(I_i) \subseteq E(H)$ for $i \in [t+1]$.
Then, our cross-composition proceeds by adding a cycle of strict half-graphs over columns $(C_j)_{j \in [N+1]}$: for $i \in [t+1], j \in [N]$, $B_{i,j}$ forms a biclique with $\bigcup_{i < \ell \leq t+1 } B_{\ell,j+1}$ (accounting for indices $j$ modulo $N$).
Notice then that the only edges added above lie between columns $C_j,C_{j'}$ with $|j'-j|=1$, so any edge between two columns differing by at least two indices is an edge of $I_i$. Each instance class $B_{t+1,j}$ is then adjacent exactly to $C_{j-1}$.
Having ordered the classes of each instance in the same way with respect to their mapping onto the snaking grid, column $C_j$ consists of homologous vertices, all in the same position on their respective grids, see \cref{fig:layers}. Then, the cycle of half-graphs follows the darker red fictitious hamiltonian cycles mapping to $(y_1,...,y_N)$.

The composed $k$-\textsc{Dominating Set} is then $(H,N)$, which we can construct polynomially in $\sum_{i=1}^t |I_i|$. Indeed, $N$ is bounded by any $|I_i|$ so the dummy instance is built linearly, then graph union and addition of the bicliques takes time polynomial in $\sum_{i=1}^t |I_i|$. 
Moreover, the reduction directly satisfies condition $(i)$ of \cref{def:or-composition} since the output parameter $N$ is bounded by any $|I_i|$.

\subsection{The overall construction has twin-width at most 4}\label{subsec:tww-bound}

Consider the OR-cross-composition of $t$ instances $(I_i,N,\cB_i,p,q,\eta_i)_{i \in [t]}$ of the restricted \kds described \cref{sec:composition}, letting $(H,N)$ be the composed instance.
Towards showing that the twin-width of $H$ is bounded by $4$, we first show that there is a partial 4-sequence contracting each $B_{i,j}$ for $i \in [t+1], j \in [N]$ into a single vertex in $H$.
After these contractions, each instance $I_i / \cB_i$ can be considered as a red snaking grid augmented by a hamiltonian path (\cref{fig:contraction-order}).
We call~\emph{augmented snaking grid} the graph obtained by adding the (darker red) hamiltonian cycle to the snaking grid. 
The cycle of strict half-graphs added between the instances will then follow the hamiltonian cycles of each augmented snaking grid.
Finally, we show that there is a partial 4-sequence contracting these $t+1$ grids into a single one, which is of twin-width four.

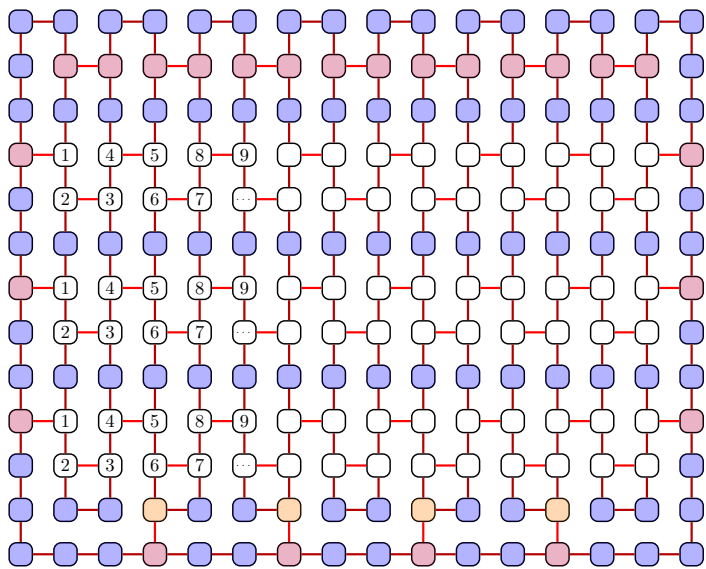
\begin{figure}[h!]
  \centering
  \resizebox{265pt}{!}{
  \begin{tikzpicture}[scale=.85,
      vertex/.style={draw,thick,rectangle,rounded corners,inner sep=0.22cm},
      filled/.style={fill,opacity=0.3,rectangle,rounded corners,inner sep=0.22cm},
      filledb/.style={fill=blue,opacity=0.3,rectangle,rounded corners,inner sep=0.22cm},
      filledp/.style={fill=purple,opacity=0.3,rectangle,rounded corners,inner sep=0.22cm},
      filledr/.style={fill=orange,opacity=0.3,rectangle,rounded corners,inner sep=0.22cm}]
    \def\red{red!70!black}
    \def\n{4}
    \def\m{5}
    \pgfmathtruncatemacro\nz{\n - 1}
    \pgfmathtruncatemacro\nn{3 * \n + 1}
    \pgfmathtruncatemacro\mm{3 * \m + 1}
    \pgfmathtruncatemacro\nt{3 * \n}
    \pgfmathtruncatemacro\mt{3 * \m}
    \pgfmathtruncatemacro\np{3 * \n - 2}
    \pgfmathtruncatemacro\mp{3 * \m - 2}
    \def\h{1}
    \def\v{1}
    \foreach \j in {1,...,\mm}{
      \foreach \i in {1,...,\nn}{
        \node[vertex] (b\j-\i) at (\j * \h, \i * \v) {} ;
      }
    }
    \foreach \j in {1,...,\mm}{
      \foreach \i [count = \ip from 3] in {2,...,\nt}{
        \draw[very thick,\red] (b\j-\i) -- (b\j-\ip) ;
      }
    }
    \foreach \j in {1,...,\mt}{
      \pgfmathtruncatemacro\jp{\j+1}
      \draw[very thick,\red] (b\j-1) -- (b\jp-1) ;
    }
    \foreach \j in {1,3,...,\mt}{
      \pgfmathtruncatemacro\jp{\j+1}
      \draw[very thick,\red] (b\j-\nn) -- (b\jp-\nn) ;
    }
    \foreach \j in {3,5,...,\mt}{
      \pgfmathtruncatemacro\jm{\j-1}
      \draw[very thick,\red] (b\j-2) -- (b\jm-2) ;
    }
    \draw[very thick,\red] (b1-1) -- (b1-2) ;
    \draw[very thick,\red] (b\mm-1) -- (b\mm-2) ;

    \foreach \i in {4,7,...,\np}{
     \foreach \j in {1,3,...,\mt}{
      \pgfmathtruncatemacro\jp{\j+1}
      \draw[very thick,red] (b\j-\i) -- (b\jp-\i) ;
     }
    }
    \foreach \i in {3,6,...,\nt}{
     \foreach \j in {3,5,...,\mm}{
      \pgfmathtruncatemacro\jm{\j-1}
      \draw[very thick,red] (b\j-\i) -- (b\jm-\i) ;
     }
    }
    \foreach \j in {4,7,...,\mp}{ 
    \draw[very thick,red] (b\j-1) -- (b\j-2) ;
    }

    \foreach \j in {2,...,\mt}{
    \foreach \i in {5,8,...,\nn,\nn}{
    \node[filledb] at (\j * \h, \i * \v) {} ;
      }
    }
    \foreach \j in {1,\mm}{
    \foreach \i in {2,5,...,\nn,\nn,1}{
      \node[filledb] at (\j * \h, \i * \v) {} ;
    }
    \foreach \i in {3,6,...,\nn}{
    \node[filledb] at (\j * \h, \i * \v) {} ;
      }
    }
    \foreach \j in {2,5,...,\mm, 3,6,...,\mm}{
      \node[filledb] at (\j * \h, \v) {} ;
      \node[filledb] at (\j * \h, 2 * \v) {} ;
    }
    \foreach \j in {2,...,\mt}{
      \node[filledp] at (\j * \h, \nt * \v) {} ;
    }
    \foreach \j in {1,\mm}{
    \foreach \i in {4,7,...,\np}{
      \node[filledp] at (\j * \h, \i * \v) {} ;
    }
    }
    \foreach \j in {4,7,...,\mp}{
      \node[filledp] at (\j * \h, \v) {} ;
    }
    \foreach \j in {4,7,...,\mp}{
      \node[filledr] at (\j * \h, 2 * \v) {} ;
    }
    \foreach \zz [count = \z from 0] in {1,...,\nz}{
    \foreach \j/\i/\c in {2/4/1,2/3/2,3/3/3,3/4/4,4/4/5,4/3/6,5/3/7,5/4/8,6/4/9,6/3/\tiny{\dots}}{
      \node at (\j * \h, \i * \v + 3 * \z * \v) {\c} ;
    }
    }
  \end{tikzpicture}
  }
  \caption{The maximal set of red edges, with the hamiltonian cycle in darker red, and the edges of the snaking grid that are not already in the hamiltonian cycle in lighter red.
    A partial 4-sequence goes as follows.
    First contract every blue vertex with their homologous in the next layer, in any order.
    Then similarly contract every purple vertex, followed by every orange vertex.
    Finally follow the order indicated by positive integers (in increasing value).
    This way a vertex contracted with the next layer has at most two neighbors, or at most one non-contracted neighbor.
  }
  \label{fig:contraction-order}
\end{figure}

\begin{figure}[h!]
  \centering
  \resizebox{280pt}{!}{
  \begin{tikzpicture}[scale=1.2,
      vertex/.style={draw,thick,circle,inner sep=0.15cm},
      filled/.style={fill,opacity=0.3,circle,inner sep=0.15cm},
      filledb/.style={fill=blue!30!white,circle,inner sep=0.15cm},
      filledp/.style={fill=purple!30!white,circle,inner sep=0.15cm},
      filledr/.style={fill=orange!30!white,circle,inner sep=0.15cm}]
    \def\red{red!70!black}
    \def\n{2}
    \def\m{3}
    \pgfmathtruncatemacro\nz{\n - 1}
    \pgfmathtruncatemacro\nn{3 * \n + 1}
    \pgfmathtruncatemacro\mm{3 * \m + 1}
    \pgfmathtruncatemacro\nt{3 * \n}
    \pgfmathtruncatemacro\mt{3 * \m}
    \pgfmathtruncatemacro\np{3 * \n - 2}
    \pgfmathtruncatemacro\mp{3 * \m - 2}
    \def\h{1}
    \def\v{1}
    \def\p{3}

    \foreach \k in {1,2,3}{   
    \foreach \j in {1,...,\mm}{
      \foreach \i in {1,...,\nn}{
        \node[vertex] (b\j-\i-\k) at (\j * \h, \k * \p, -\i * \v) {} ;
      }
    }
    }
    
    \foreach \i [count = \ip from 2] in {1,...,4}{
    \foreach \j/\k/\jp/\kp in {1/1/1/2,1/1/1/3,1/2/1/3}{
      \draw[thick] (b\j-\i-\k) -- (b\jp-\ip-\kp) ;
       }
    }

    \foreach \k in {1,2,3}{
    \foreach \j in {1,...,\mm}{
      \foreach \i [count = \ip from 3] in {2,...,\nt}{
        \draw[very thick,\red] (b\j-\i-\k) -- (b\j-\ip-\k) ;
      }
    }
    \foreach \j in {1,...,\mt}{
      \pgfmathtruncatemacro\jp{\j+1}
      \draw[very thick,\red] (b\j-1-\k) -- (b\jp-1-\k) ;
    }
    \foreach \j in {1,3,...,\mt}{
      \pgfmathtruncatemacro\jp{\j+1}
      \draw[very thick,\red] (b\j-\nn-\k) -- (b\jp-\nn-\k) ;
    }
    \foreach \j in {3,5,...,\mt}{
      \pgfmathtruncatemacro\jm{\j-1}
      \draw[very thick,\red] (b\j-2-\k) -- (b\jm-2-\k) ;
    }
    \draw[very thick,\red] (b1-1-\k) -- (b1-2-\k) ;
    \draw[very thick,\red] (b\mm-1-\k) -- (b\mm-2-\k) ;

    \foreach \i in {4}{
     \foreach \j in {1,3,...,\mt}{
      \pgfmathtruncatemacro\jp{\j+1}
      \draw[very thick,red] (b\j-\i-\k) -- (b\jp-\i-\k) ;
     }
    }
    \foreach \i in {3,6,...,\nt}{
     \foreach \j in {3,5,...,\mm}{
      \pgfmathtruncatemacro\jm{\j-1}
      \draw[very thick,red] (b\j-\i-\k) -- (b\jm-\i-\k) ;
     }
    }
    \foreach \j in {4,7,...,\mp}{ 
    \draw[very thick,red] (b\j-1-\k) -- (b\j-2-\k) ;
    }

    \foreach \j in {2,...,\mt}{
    \foreach \i in {5,\nn}{
    \node[filledb] at (\j * \h, \k * \p, -\i * \v) {} ;
      }
    }
    \foreach \j in {1,\mm}{
    \foreach \i in {2,5,...,\nn,\nn,1}{
      \node[filledb] at (\j * \h, \k * \p, -\i * \v) {} ;
    }
    \foreach \i in {3,6,...,\nn}{
    \node[filledb] at (\j * \h, \k * \p, -\i * \v) {} ;
      }
    }
    \foreach \j in {2,5,...,\mm, 3,6,...,\mm}{
      \node[filledb] at (\j * \h, \k * \p, -\v) {} ;
      \node[filledb] at (\j * \h, \k * \p, -2 * \v) {} ;
    }
    \foreach \j in {2,...,\mt}{
      \node[filledp] at (\j * \h, \k * \p, -\nt * \v) {} ;
    }
    \foreach \j in {1,\mm}{
    \foreach \i in {4}{
      \node[filledp] at (\j * \h, \k * \p, -\i * \v) {} ;
    }
    }
    \foreach \j in {4,7,...,\mp}{
      \node[filledp] at (\j * \h, \k * \p, -\v) {} ;
    }
    \foreach \j in {4,7,...,\mp}{
      \node[filledr] at (\j * \h, \k * \p, -2 * \v) {} ;
    }

    }

  \end{tikzpicture}
  }
  \caption{The different layers (instances) linked by the cycle of half-graphs.
  Only four half-graphs are drawn for the sake of legibility.}
  \label{fig:layers}
\end{figure}
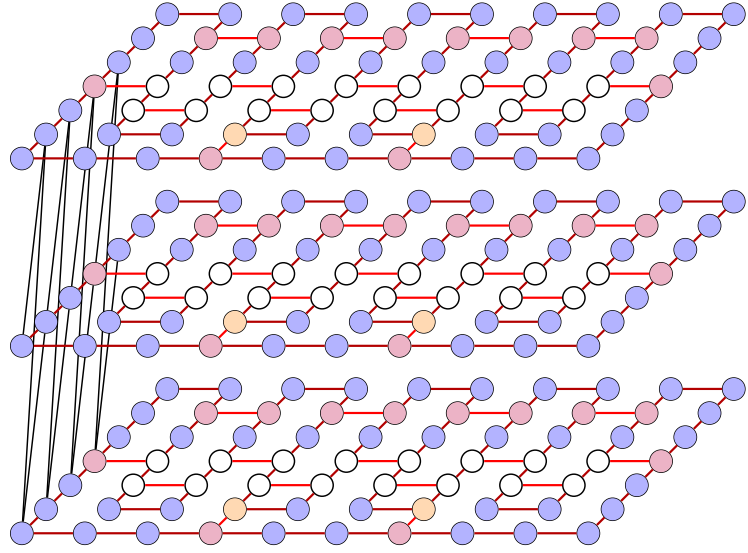

\begin{lemma}\label{lem:contraction-classes}
	The composed graph admits a partial 4-sequence to $H / \left( \bigcup_{i \in [t+1]} \cB_{i} \right)$.
\end{lemma}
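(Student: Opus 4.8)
The plan is to reuse, essentially verbatim, the partial 4-sequence provided by condition~$(\ref{it1})$ of \cref{thm:tailored-ds} for each individual instance $I_i$ ($i \in [t]$), and to observe that this contraction process, when performed column by column across \emph{all} instances simultaneously, never creates red edges between different instances. First I would recall the key structural fact underlying \cref{lem:modular}, stated in the remark preceding it: contracting vertices of a set $X$ that is a module relative to a set $Y$ creates no red edge incident to $Y$. Here, each partition class $B_{i,j}$ is, by construction of the cycle of strict half-graphs, \emph{almost} a module relative to the other instances: every vertex in $B_{i,j}$ has exactly the same neighborhood outside $I_i$, namely it is fully adjacent to $\bigcup_{i < \ell \le t+1} B_{\ell, j+1}$ and fully adjacent to $\bigcup_{1 \le \ell < i} B_{\ell, j-1}$ (indices mod $N$) and nonadjacent to everything else outside $I_i$. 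Hence $B_{i,j}$ is a module relative to $V(H) \setminus V(I_i)$. Therefore contracting $B_{i,j}$ into a single vertex creates red edges only inside $I_i$ (more precisely, only incident to vertices of $V(I_i)$ that are still uncontracted, plus the new vertex), exactly as in the stand-alone instance.

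The execution would go as follows. Fix, for each $i \in [t]$, the partial 4-sequence $\sigma_i$ from $I_i$ to $I_i / \cB_i$ guaranteed by condition~$(\ref{it1})$; for the dummy instance $I_{t+1}$ each class $B_{t+1,j}$ has only two vertices with identical neighborhoods, so one contraction per class suffices and creates no red edge at all. I would concatenate these sequences: run $\sigma_1$, then $\sigma_2$, and so on up to $\sigma_{t+1}$, all inside $H$. By the module observation above, while running $\sigma_i$ the only red edges present that are incident to still-uncontracted vertices of $I_i$ are exactly those that appear when running $\sigma_i$ on $I_i$ in isolation — the already-contracted vertices of $I_1, \dots, I_{i-1}$ sit in other instances and, being fully adjacent or fully nonadjacent to each class of $I_i$ (again by the biclique structure of the half-graph cycle), contribute no red edge to vertices of $I_i$; and the classes of $I_{i+1}, \dots, I_{t+1}$ are untouched and likewise homogeneous to $I_i$. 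So throughout $\sigma_i$ every vertex keeps red degree at most $4$ — the vertices of $I_i$ because $\sigma_i$ is a 4-sequence, and every other vertex because it has red degree $0$ (no red edge has yet been created touching it, since contractions so far lie in $I_1,\dots,I_i$ and respect modularity). After all $t+1$ subsequences we have reached $H/\left(\bigcup_{i \in [t+1]} \cB_i\right)$, as desired.

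The one point that needs a little care — and the main (mild) obstacle — is checking that the ``module relative to the other instances'' claim really does hold for \emph{every} class, including the wrap-around classes at columns $1$ and $N$ and the dummy instance. This is where I would invoke the precise wording of the construction in \cref{sec:composition}: the added edges lie only between columns $C_j, C_{j+1}$ with consecutive indices (mod $N$), and within such a pair $B_{i,j}$ forms a complete biclique with $\bigcup_{i < \ell \le t+1} B_{\ell, j+1}$; in particular two vertices of the same class $B_{i,j}$ have identical neighborhoods in $V(H) \setminus V(I_i)$, so $B_{i,j}$ is indeed a module relative to $V(H)\setminus V(I_i)$, and for $i' \neq i$ the class $B_{i',j'}$ is homogeneous toward $B_{i,j}$ (fully adjacent if $\{i',j'\}$ matches the half-graph rule, fully nonadjacent otherwise). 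Once this homogeneity bookkeeping is in place the red-degree bound is immediate from the fact that each $\sigma_i$ is a 4-sequence, and no blow-up can occur across instances.
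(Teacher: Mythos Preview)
Your proposal is correct and follows essentially the same approach as the paper: reuse the partial 4-sequences from condition~(\ref{it1}) of \cref{thm:tailored-ds} for each instance, and observe that each class $B_{i,j}$ is a module relative to $V(H)\setminus V(I_i)$, so contractions inside a class create no red edges toward other instances. The paper's proof is terser but makes exactly this module observation; your additional bookkeeping (the wrap-around columns, the dummy instance, and the fact that already-contracted instances retain their red degree while $\sigma_i$ runs) is sound and merely spells out what the paper leaves implicit.
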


\begin{proof}
	\cref{thm:tailored-ds} yields that each input graph $I_i, i \in [t]$ admits a partial 4-sequence to $I_i / \cB_i$ yielding a subgraph of the snaking grid, this is also the case for dummy instance $I_{t+1}$ as an independent set.
	Our contraction sequence will consist exactly of those contractions, so let us see how they behave over $H$. Notice that for any $I_i$, the only contractions made above involve vertices of a single class $B_{i,j}$, which is a module with respect to $H - I_i$. Therefore, in $H$, the partial contraction sequence of each $I_i$ does not create red edges towards $H-I_i$. In turn, we can contract each $I_i$ to $I_i / \cB_i$ through a partial 4-sequence over $H$, and the only resulting red edges will belong to a single instance.
\end{proof}

Now, for the purpose of bounding the twin-width of the composed graph, it is useful to note that \cref{obs:monotone} allows us to add red edges.
At this point, the only vertices of large degree stem from the strict half-graphs.
We keep those edges black, and we now turn each instance into a red augmented snaking grid as depicted in \cref{fig:contraction-order} as follows.
Since each instance is a spanning subgraph of the $p \times q$ snaking grid, we can first assume that it is a (fully) red snaking grid.
Then, the red augmented snaking grid is built by further adding red cycle $(B_{i,1},...,B_{i,N})$.
By our choice of ordering in the composition, this cycle is the same on every instance with respect to their mapping on the $p \times q$ snaking grid.

\begin{lemma}\label{lem:twwbound}
	The composed graph $H$ has twin-width at most 4.
\end{lemma}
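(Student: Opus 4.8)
The plan is to exhibit a $4$-sequence of $H$ by concatenating three partial $4$-sequences. I would begin with the partial $4$-sequence supplied by \cref{lem:contraction-classes}, which takes $H$ to the quotient $H/(\bigcup_{i\in[t+1]}\cB_i)$; in the resulting trigraph every instance $I_i/\cB_i$ is a red subgraph of the $p\times q$ snaking grid on $\{B_{i,1},\dots,B_{i,N}\}$, and the only surviving black edges are, between any two columns $C_j$ and $C_{j+1}$ that are consecutive on the fictitious hamiltonian cycle, a strict half-graph ordered by instance index. Since turning black edges and non-edges into red edges can only increase the twin-width (\cref{obs:monotone}), I would next fill each instance up to a \emph{full} red augmented snaking grid --- the snaking grid plus the canonical hamiltonian cycle $(B_{i,1},\dots,B_{i,N})$ of \cref{fig:contraction-order} --- while keeping the inter-instance half-graphs black; it then suffices to bound the twin-width of this modified trigraph. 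By the common choice of orderings in \cref{sec:composition}, this hamiltonian cycle is literally the same in every layer relative to $\eta_i$, and two columns joined by a half-graph are adjacent on it, hence adjacent in the augmented snaking grid.

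The heart of the proof is the third partial $4$-sequence, which superposes the $t+1$ augmented snaking grids onto a single one. Viewing the instances as horizontal layers, I would merge them from the top downwards: having already merged layers $1,\dots,i-1$ into one red augmented snaking grid, I contract layer $i$ into it one position at a time, following the schedule of \cref{fig:contraction-order} (the blue positions first, then the purple, then the orange, in any order, then the remaining positions in increasing label), each step identifying the current blob at a position $j$ with the layer-$i$ vertex $B_{i,j}$. Two structural facts should keep the red degree at most $4$ throughout. First, by the orientation of the half-graphs, a blob of top layers at a position $j$ is \emph{fully} adjacent to the not-yet-merged rows of the ``forward'' neighbouring column and \emph{fully non-adjacent} to those of the ``backward'' one, so all of these black edges stay black and never contribute to any red degree; hence the red edges incident to a part at position $j$ only run to the at most three augmented-snaking-grid neighbours of $j$, a not-yet-merged such neighbour accounting for at most two of them. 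Second, every vertex of an augmented snaking grid has degree at most $3$, and the schedule of \cref{fig:contraction-order} is designed precisely so that a degree-$3$ position is merged only after at least two of its three neighbours have been; a short case check then bounds the red degree by $4$. After the last ($t$-th) round every half-graph between $C_j$ and $C_{j+1}$ has been absorbed into the red hamiltonian-cycle edge joining the two merged vertices, so the trigraph reached is exactly a single red augmented snaking grid.

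Finally I would appeal to \cref{lem:subdivided-grid}: the total graph of that red augmented snaking grid is a subgraph of a grid, so it admits a $4$-sequence, and concatenating this sequence with the partial $4$-sequences of the first two phases yields a $4$-sequence for $H$; hence $\tww(H)\le 4$.

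I expect the main obstacle to be the red-degree bookkeeping of the layer-merging step: finding a position order for which no contraction ever produces a fifth red edge. The delicate case is a degree-$3$ branch vertex of the snaking grid, where a naive boustrophedon sweep could momentarily leave it with a not-yet-merged vertical neighbour on top of its two hamiltonian neighbours, forcing red degree $5$. The role of the blue, purple, and orange pre-rounds in \cref{fig:contraction-order} is exactly to merge those troublesome neighbours early; verifying that this schedule indeed works in every one of the $t$ rounds, with the half-graph edges remaining black where needed, is where the real effort lies.
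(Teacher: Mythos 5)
Your proposal follows the paper's proof essentially step for step: the same three-phase decomposition (contract the classes via \cref{lem:contraction-classes}, promote each layer to a full red augmented snaking grid via \cref{obs:monotone}, merge the layers pairwise following the blue/purple/orange/numbered schedule of \cref{fig:contraction-order}, then finish with \cref{lem:subdivided-grid}), with the same half-graph-orientation and hamiltonian-cycle observations controlling which edges turn red and the same ``$C+2P\le 4$'' degree accounting at each merge. The only cosmetic difference is that you phrase the layer-merging as an iterative loop over $i$ while the paper casts it as an induction on the number of layers, always contracting the two bottommost into one.
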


\begin{proof}
	By \cref{lem:contraction-classes} and the above remarks, we can now describe the contraction of $t+1$ red augmented snaking grids $(I_i)_{i \in [t+1]}$, abusing notation for the now quotiented instances, with the black edges of our composition. We will exhibit a partial 4-sequence eventually contracting every column, now consisting of $t+1$ homologous vertices, that is, all vertices at the same position on their respective snaking grid into a single one.
	The proof will proceed by induction on the number of augmented snaking grids, our hypothesis at step $t$ being that there exists a partial 4-sequence from $t$ augmented snaking grids to a single one, accounting for the black edges added in the composition. This being true for $t=1$, assume the result holds for some $t$ and let us consider case $t+1$.
	
	We will deal with the two bottommost augmented snaking grids $I_1,I_2$ in the half-graphs, contracting pairs of homologous vertices, corresponding to the quotiented $(B_{1,j},B_{2,j})$ thanks to the ordering chosen in the composition.
	Before proceeding, it is useful to make the following observations:
	\begin{itemize}
		\item For any grid $I_i$ with $i > 2$, the vertices of any homologous pair $(B_{1,j},B_{2,j})$ are each adjacent exactly to $B_{i,j+1}$ in $I_i$. Their contraction yields no red edges, allowing us to omit other instances and only consider the two to be contracted.
		\item The edges involving both snaking grids consist in the matching $(B_{1,j},B_{2,j+1})_j$, see \cref{fig:layers}, which can be assumed red while keeping red degree at most four. In our contraction sequence, this enables us to only bound the total degree at any point.
		Now, when contracting a pair of vertices, any vertex in the above matching adjacent to one of the vertices of our pair is also adjacent to the other because of the additional hamiltonian cycle. Therefore, the neighbors of the contracted pair will be exactly the neighbors of each vertex in its respective instance.
		\item Contracting a pair coarsens the partition of our vertices, so the degree of a vertex not in the pair cannot increase, thus it is enough to bound the degree of the contracted pair. 
	\end{itemize}
	
	Now, consider any pair $(a,b)$ to be contracted, such that both vertices have $C$ already contracted, common neighbors and each $P$ non-contracted neighbors in their respective instance. According to the second observation above, the resulting degree will be $C + 2P$. Since any vertex is of degree at most three in its respective grid $C+P \leq 3$, which would already allow us to contract the pairs in any order with degree at most six.  
	Therefore, to bound the degree by four, the order in which we contract the pairs must be carefully chosen with respect to their position on the respective augmented snaking grids. We want to iteratively contract pairs of vertices satisfying $C+2P \leq 4$, at each step propagating this bound to more non-contracted vertices to eventually contract all pairs.
	We proceed by groups of colors as depicted in \cref{fig:contraction-order}.
	We first contract pairs of blue vertices, those of degree two in their respective augmented snaking grid.
	Second, we contract pairs of purple vertices, those of degree three adjacent to two contracted vertices (necessarily blue).
	Third, we proceed with pairs of orange vertices, those still non contracted on the second row of the snaking grids, which are of degree three and adjacent to a blue and a purple vertex.
	Finally, numbered in white, we will contract the remaining degree three vertices along the corresponding snaking path.
	
	We are now ready to describe the contraction sequence. For the blue, orange and purple pairs, the contractions can be done in any order, while we will describe the order for the last ones.
	For any blue pair, depending on the order of contraction we either have $(C,P)=(0,2)$, $(C,P)=(1,1)$ or $(C,P)=(2,0)$. Thus, regardless of the order the contraction yields degree $C+2P$ at most four.
	For any purple pair, we either have $(C,P)=(2,1)$ or $(C,P)=(3,0)$, again regardless of order contracting the pair yields degree at most four.
	The case is the same for orange pairs, and we can again contract them in any order.
	The graph induced by the non-contracted vertices on each instance now consists in an union of paths, with endpoints satisfying $(C,P)=(2,1)$ and interior points such that $(C,P)=(1,2)$. We can iteratively contract the pairs corresponding to the endpoints, satisfying $(C,P)=(2,1)$, decreasing the length of the path by one at each step. This can be done until we are left with single vertices for each path, the corresponding pairs are then adjacent exactly to three already contracted vertices, and their contraction yields degree three.

	This finishes to prove that the contraction of the first two red snaking grids can be done while bounding the red degree by four. Since the only contracted pairs were homologous, this results in a red augmented snaking grid with no red edges towards grids $i > 2$. The remaining edges of the strict half-graph cycle still form one of height $t$, which is exactly the induction case for $t$ and achieves to prove the induction.
	Therefore, there is a partial 4-sequence from our composed graph into a red augmented snaking grid. Then, as the latter is a subgraph of the red complete grid, \cref{lem:subdivided-grid} yields twin-width at most~4.
\end{proof}

\subsection{Correctness}

Having verified that the reduction runs in polynomial time and satisfies condition $(i)$ of \cref{def:or-composition}, for the OR-cross-composition to be sound it remains to show equivalence of the instances.

\begin{lemma}\label{lem:correctness}
	The composed instance is positive for \kds if and only if at least one input instance is positive for the restricted \kds.
\end{lemma}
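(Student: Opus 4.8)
The plan is to prove the two directions of the equivalence separately, in each case exploiting the structure of the cycle of strict half-graphs placed over the columns $C_1,\ldots,C_N$ together with the properties guaranteed by \cref{thm:tailored-ds}, in particular that every dominating set of $I_i$ intersects each class $B_{i,j}$.

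\emph{Easy direction (some input instance is positive $\Rightarrow$ composed instance is positive).} Suppose $(I_i,N,\cB_i,p,q,\eta_i)$ is a YES-instance of the restricted \ds, and let $D\subseteq V(I_i)$ be a dominating set of $I_i$ with $|D|\le N$. By condition $(\ref{it3})$ of \cref{thm:tailored-ds}, $D$ has size exactly $N$ and meets each $B_{i,j}$ in exactly one vertex. I claim $D$ is a dominating set of $H$. First, $D$ dominates $V(I_i)$ already in $I_i$, and $E(I_i)\subseteq E(H)$. It remains to check that $D$ dominates $V(I_\ell)$ for every $\ell\ne i$. Fix such an $\ell$ and a vertex $v\in B_{\ell,j}$. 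If $\ell<i$, then by the construction of the cycle of strict half-graphs, $B_{\ell,j}$ forms a biclique with $\bigcup_{\ell<m\le t+1}B_{m,j+1}$, and since $\ell<i\le t+1$ this union contains $B_{i,j+1}$; as $D\cap B_{i,j+1}\ne\emptyset$, $v$ is dominated. If $\ell>i$, then symmetrically $B_{i,j-1}$ forms a biclique with $\bigcup_{i<m\le t+1}B_{m,j}$, which contains $B_{\ell,j}$ (here one uses $i<\ell\le t+1$, and the dummy instance $I_{t+1}$ is covered as well since it plays the role of an index $\ell>i$), so $v$ is dominated by the (unique) vertex of $D$ in $B_{i,j-1}$ (indices mod $N$). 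Hence $D$ is a dominating set of $H$ of size $N$, so $(H,N)$ is positive.

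\emph{Hard direction (composed instance is positive $\Rightarrow$ some input instance is positive).} Let $D$ be a dominating set of $H$ with $|D|\le N$. The first step is to argue that $D$ must intersect each column $C_j\cup B_{t+1,j}$ --- more precisely, that $|D|=N$ and $D$ contains exactly one vertex per ``column index''. For this I would look at the dummy instance $I_{t+1}$: each $B_{t+1,j}$ is an independent pair adjacent only to $C_{j-1}=\bigcup_{i\in[t]}B_{i,j-1}$ inside $H$, and its vertices have no neighbour in $I_{t+1}$; so to dominate $B_{t+1,j}$ the set $D$ must contain a vertex of $B_{t+1,j}$ itself or a vertex of $C_{j-1}$. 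Combined with the fact that the classes $B_{i,j}$ with fixed $j$ (over $i\in[t]$) together with $B_{t+1,j}$ partition the ``column'' and the half-graph only links consecutive columns, a counting argument forces $D$ to have size exactly $N$ with exactly one vertex per column index $j\in[N]$ (and none of these vertices lies in the dummy instance, else some $B_{t+1,j'}$ would be left undominated). The second, more delicate step is to argue that all $N$ chosen vertices lie in a single instance $I_i$. Here is where the strict half-graph orientation is essential: if $D$ picked a vertex $u\in B_{i,j}$ and a vertex $w\in B_{i',j+1}$ with $i\ge i'$ (for some column index $j$), then consider the vertices of some class $B_{\ell,j}$ with $\ell$ strictly between the relevant indices; by the half-graph's one-directional wiring, such a vertex is dominated neither from column $j-1$ (needs smaller instance index) nor from column $j+1$ (needs larger instance index) nor inside its own instance unless that instance's own vertex was chosen --- and one shows that the bijective reindexing and the ``confinement'' property (every $B_{i,j}$ contains a vertex whose $I_i$-neighbourhood is inside $B_{i,j}$, inherited from condition $(\ref{it3})$ via the gadgets) rule this out. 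So the chosen vertices have strictly increasing-or-constant-in-a-consistent-sense instance index around the cycle, and going once around the cycle of length $N$ forces them all to sit in one instance $I_i$. Finally, once $D\subseteq V(I_i)$ with exactly one vertex per $B_{i,j}$, the vertices of $I_i$ with $I_i$-neighbourhood confined to their own class (which exist in every class) can only be dominated within $I_i$, so the half-graph edges are useless for them; hence $D$ dominates $V(I_i)$ in $I_i$ itself, i.e., $D$ is a dominating set of $I_i$ of size $N$, witnessing that $(I_i,N,\cB_i,p,q,\eta_i)$ is positive for the restricted \ds.

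\textbf{Main obstacle.} The routine part is the easy direction and the polynomial-time/parameter bookkeeping, which have already been handled. The crux --- and the step I expect to be the main obstacle --- is the second step of the hard direction: showing that a size-$N$ dominating set of $H$ cannot ``spread'' its picks across two or more instances. The argument has to combine (a) the counting forcing one pick per column, (b) the strict (one-directional) orientation of the half-graphs, so that a ``jump'' from instance $i$ to instance $i'<i$ between consecutive columns leaves an intermediate-instance class undominated, (c) the cyclic structure, which turns a local monotonicity constraint into a global rigidity once one goes around all $N$ columns, and (d) the presence of private/confined vertices in every class (the $d$-vertices and isolated $z$-vertices promoted through the reduction), which prevents ``cheating'' by dominating a class entirely from its neighbouring columns. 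Making (b)--(d) interact cleanly, while keeping track of the modular-arithmetic indexing of the columns and the special role of the dummy instance $I_{t+1}$, is where the care is needed.
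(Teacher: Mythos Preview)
Your easy direction and the first step of the hard direction (one pick per column, none in the dummy row) are correct and match the paper.

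The gap is in your monotonicity step. You compare the row indices $r_j$ and $r_{j+1}$ of the picks in \emph{consecutive} columns and claim that if $r_j\ge r_{j+1}$ then some class $B_{\ell,j}$ with $r_{j+1}<\ell<r_j$ is left undominated. But the half-graph domination of column $j$ is governed by $r_{j-1}$ and $r_{j+1}$, not by $r_j$: the pick $d_{j-1}\in B_{r_{j-1},j-1}$ dominates $B_{\ell,j}$ precisely for $\ell>r_{j-1}$, and $d_{j+1}\in B_{r_{j+1},j+1}$ dominates $B_{\ell,j}$ precisely for $\ell<r_{j+1}$. Your assertion that $B_{\ell,j}$ is ``not dominated from column $j-1$'' is unjustified without controlling $r_{j-1}$. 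The vertex $d_j$ itself has no half-graph edges inside its own column, so knowing $r_j$ tells you nothing about which classes of column $j$ are covered. In short, there is no direct pairwise constraint between $r_j$ and $r_{j+1}$ to extract.

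The paper's argument (and the minimal fix to yours) compares columns \emph{two} apart. If $r_j>r_{j+2}$, then in column $j{+}1$ the classes $B_{\ell,j+1}$ with $r_{j+2}\le\ell\le r_j$ are dominated neither by $d_j$ (forward half-graph hits only rows $>r_j$) nor by $d_{j+2}$ (backward half-graph hits only rows $<r_{j+2}$), hence only via instance edges; condition~$(\ref{it3})$ then forces $D$ to intersect both $B_{r_j,j+1}$ and $B_{r_{j+2},j+1}$, contradicting $|D\cap C_{j+1}|=1$. Thus $r_j\le r_{j+2}$ for all $j$ cyclically, so the even and odd subsequences $(r_{2j})_j$ and $(r_{2j+1})_j$ are each constant; one more application of the same sandwich shows the two constants agree. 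Your overall plan (cyclic monotonicity forces constancy) is exactly right, but the step size must be $2$, not $1$.

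A minor point on your last paragraph: once $D\subseteq V(I_i)$ you do not need to invoke private vertices again. The construction adds no edges between two vertices of the same row, so $H[V(I_i)]=I_i$ and no half-graph edge from $D$ lands in $V(I_i)$; hence $D$ dominating $V(I_i)$ in $H$ already means $D$ dominates $I_i$.
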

 
\begin{proof}
	Consider the OR-cross-composition of $\cR$-equivalent instances $(I_i,N)_i, i \in [t]$ of the restricted \kds into \kds instance $(H,N)$ as described in \cref{sec:composition}. We start by showing the easier direction, the existence of a positive input instance implies that the composed instance is also positive, then we prove the converse. The reduction is such that for any set $D \in H$, $D$ dominates $H$ if and only if it dominates some input instance $I_i$.
	
	\paragraph*{One input instance is positive implies that the composed instance is positive}
	
	Assume there exists some $\ell \neq t+1$ such that $(I_{\ell},N)$ is a positive instance for the restricted \kds, let us consider dominating set $D$ for $I_{\ell}$, with $|D|=N$ and satisfying the conditions of \cref{thm:tailored-ds}. We show that $D$ is also a dominating set for $H$.
	
	Since our reduction preserves the edges of each instance, $D$ dominates $V(I_{\ell})$ over $H$ by assumption.
	To show that $V(H) - V(I_{\ell})= \bigcup_{i \neq \ell,j \in [N]} B_{i,j}$ is dominated, let us consider any cell $B_{i,j}$ with $i \neq \ell$.
	If $i < \ell$, by construction $B_{\ell,j-1},B_{i,j}$ form a biclique in $H$, then $D$ intersects $B_{\ell,j-1}$ via item $(\ref{it3})$ of \cref{thm:tailored-ds} so $B_{i,j}$ is dominated.
	Otherwise, $i > \ell$ and $B_{\ell,j+1}, B_{i,j}$ form a biclique in $H$, now $D$ must intersect $B_{\ell,j+1}$ and again $B_{i,j}$ is also dominated. Therefore $D$ is indeed a dominating set of $H$, yielding the backwards implication.
	
	\paragraph*{The composed instance is positive implies that one input instance is positive}
	
	Conversely, assume $(H,N)$ is a positive instance for \kds, considering any solution $D \subseteq V(H)$, we show that $D$ is necessarily also a dominating set of some $I_{\ell}$, yielding a positive instance among the input ones.
	
	We leverage the fact that $D$ dominates each dummy partition class $B_{t+1,j+1}$ to yield that $|D \cap C_j| \geq 1$ for any $j$.
	By construction, each $B_{t+1,j+1}$ is only adjacent to vertices in column $C_{j}$, so we already know $D$ intersects each $C_j \cup B_{t+1,j+1}$ for $j \in [N]$, and since $|D|=N$ each such intersection consists of a single vertex $d_j$.
	Now, the two vertices of $B_{t+1,j+1}$ are adjacent exactly to $C_j$, so the choice of $d_j$ as one of them would prevent our solution from dominating the other. Then $d_j \in C_j = \bigcup_{i=1}^t B_{i,j}$, letting us define the row choice $r_j$ for column $j$ as the unique $r_j \in [t]$ such that $d_j \in B_{r_j,j}$.
	
	We now show the existence of some $i \in [t]$ such that $D \subseteq V(I_i)$. From the necessity to choose one vertex per column shown above, this is equivalent to showing that for some $i \in [N]$ the row choices are constant: $r_j = i$ for $j \in [N]$.
	We first argue that both the even choice sequence $(r_{2j})_j$ and the odd choice sequence $(r_{2j+1})_j$ must be constant.
	Considered on indices modulo $N$, both sequences are periodic, so if one is not constant we consider a pair of indices $(j,j+2)$ such that corresponding row choices $k=r_j,k''=r_{j+2}$ are such that  $k > k''$.
	
	With $d_j \in B_{k,j}, d_{j+2} \in B_{k'',j}$, let us consider the adjacencies between $D$ and column $C_{j+1}$ towards showing an absurdity.
	On the one hand, the edges added in our composition between $\{d_j,d_{j+2}\}$ and $C_{j+1}$ are exactly $\{(d_j,v)~:~v \in \bigcup_{i > k} B_{i,j+1} \}$ and $\{(d_{j+2},v) : v \in \bigcup_{i < k''} B_{i,j+1} \}$. On the other hand, for any $d \in D \backslash \{d_j,d_{j+2}\}$, any edge between $d$ and $C_{j+1}$ necessarily belongs to some instance $I_i, i \in [t]$.
	Combining the two observations above: for $i \in [k'',k]$, any edge $(a,d)$ with $a \in B_{i,j+1}, d \in D$ must be an edge of $I_i$. 
	Since $k > k''$, interval $[k'',k]$ is of size at least two. We get that $B_{k,j+1}$ is dominated by $D \cap V(I_k)$ in $I_k$, while $B_{k'',j+1}$ is dominated by $D \cap I_{k-1}$ in $I_{k-1}$, see \cref{fig:domination-gap}.
	Now, by \cref{thm:tailored-ds} $(\ref{it3})$, for any $i \in [t]$, $B_{i,j+1} \nsubseteq N(V(I_{i}) \backslash B_{i,j+1})$, yielding that both $D \cap B_{k,j+1}$ and $D \cap B_{k'',j+1}$ are non-empty. This is absurd since $D$ must intersect $C_{j+1}$ in exactly one vertex, so both $(r_{2j})_j$ and $(r_{2j+1})_j$ are constant.
	
	Now, let us call $k$ the row choice for even columns and $k'$ the row choice for odd ones. If $k \neq k'$, we can choose without loss of generality three consecutive columns $C_j,C_{j+1},C_{j+2}$ such that $r_j = r_{j+2} = k$ and $r_{j+1} = k'$ with $k' > k$. 
	Then, we proceed similarly to the last paragraph. Since $k' > k$, $B_{k,j+1}$ is not dominated by $d_j,d_{j+2}$ by construction of the half-graph. Therefore, it is necessarily dominated by $D \cap V(I_k)$ in $H$ and in $I_k$. Again, \cref{thm:tailored-ds} $(\ref{it3})$ yields that $B_{k,j+1} \nsubseteq N(V(I_{k}) \backslash B_{k,j+1})$ so $d_{j+1} \in B_{k,j+1}$, contradicting with $d_{j+1} \in B_{k',j+1}$. Therefore $k=k'$ and the sequence of row choices $(r_j)_j$ is constant.
	
	The latter yields $k \in [t]$ such that $D \subseteq V(I_k)$, by assumption $D$ dominates $V(I_k)$ on $H$, while no edges are added between two vertices in $V(I_k)$ by our construction, thus $D$ is a dominating set of $I_k$.
\end{proof}

\begin{reptheorem}{thm:main}
  Unless {\coNP} $\subseteq$ \NP/poly, \kds on graphs of twin-width at most 4 does not admit a polynomial kernel, even if a 4-sequence of the graph is given.
\end{reptheorem}
\begin{proof}
Lemma~\ref{lem:correctness} proves that the OR-cross-compositions described in the beginning of Section~\ref{sec:composition} is correct. The twin-width bound of the composed instance is given in Lemma~\ref{lem:twwbound}. We can thus invoke Theorem~\ref{thm:kernel-lb} in order to obtain the desired result.
\end{proof}

\begin{figure}[h!]
	\centering
		\begin{tikzpicture}[scale=0.85,vertex/.style={draw,circle,inner sep=0.08cm},picked/.style={fill,circle,opacity=0.5,inner sep=0.055cm}]
			\def\t{9}
			\def\k{5}
			\def\hs{2.6}
			\def\vs{1.3}
			\def\z{0.4}
			\pgfmathtruncatemacro\tm{\t-1}
			\pgfmathtruncatemacro\km{\k-1}
			\pgfmathtruncatemacro\kp{\k+1}
			\pgfmathtruncatemacro\mt{int(\t/2)+1}
			\pgfmathtruncatemacro\mk{int(\k/2)+1}
			\pgfmathtruncatemacro\mtm{int(\t/2)}
			\pgfmathtruncatemacro\mkm{int(\k/2)}
			\pgfmathtruncatemacro\mtp{int(\t/2)+2}
			\pgfmathtruncatemacro\mkp{int(\k/2)+2}

			\pgfmathtruncatemacro\ctop{(\mt+3)*(\vs)}
			\pgfmathtruncatemacro\cbot{(\mt-2)*(\vs)}
			\pgfmathtruncatemacro\cright{(\k+1)*\hs}
			\draw[clip] (0, 2.8) rectangle (\cright, 10.6);
			\begin{scope}
			
			\foreach \i in {1,...,\t}{
				\pgfmathparse{int(\mt - \i)}
				\ifnum\pgfmathresult=0
				\begin{scope}[yshift=\i * \vs cm]
					\foreach \j in {1,...,\k}{
						\begin{scope}[xshift=\j * \hs cm]
							\foreach \s/\x/\y in {1/0/0,2/0/1,3/1/0,4/1/1}{
								\node (a\i\j\s) at (\x * \z,\y * \z) {} ;
							}
							\node at (0.25,0.5*\z) (b\i\j) {$\vdots$} ;
						\end{scope}
					}
					\node at (1.5,0.5 * \z) {$\vdots$} ;
				\end{scope}
				\else
				\begin{scope}[yshift=\i * \vs cm]
					\foreach \j in {1,...,\k}{
						\begin{scope}[xshift=\j * \hs cm]
							\foreach \s/\x/\y in {1/0/0,2/0/1,3/1/0,4/1/1}{
								\node[vertex] (a\i\j\s) at (\x * \z,\y * \z) {} ;
							}
							\pgfmathtruncatemacro\midcheck{abs(\j-\mk)==0}
							\pgfmathtruncatemacro\outercheck{abs(\i-\mt) > 1}
							\pgfmathtruncatemacro\midoutercheck{\midcheck*\outercheck}
							\ifnum\midoutercheck=1
								\node[draw, very thick, rounded corners, fill opacity=0.3, fill=npink!80!black, fit=(a\i\j1) (a\i\j4)] (b\i\j) {} ;
							\else
								\node[draw, very thick, rounded corners, fit=(a\i\j1) (a\i\j4)] (b\i\j) {} ;
							\fi
						\end{scope}
					}
					\node[draw, blue, opacity = 0, very thick, rounded corners, fit=(b\i1) (b\i\k)] (t\i) {} ;
					\pgfmathtruncatemacro\ab{int(\i- \mt)}
					\pgfmathtruncatemacro\abp{int(\ab+1)}
					\pgfmathtruncatemacro\abm{int(\ab-1)}
					\pgfmathtruncatemacro\eqkp{int(\i-\mtm)}
					\pgfmathtruncatemacro\eqk{int(\i-\mtp)}
					\ifnum\ab=1
					\node at (1.5,0.5 * \z) {$I_{k}$} ;
					\else
						\ifnum\ab=-1
							\node at (1.5,0.5 * \z) {$I_{k''}$} ;
						\else
						\fi
					\fi
					\ifnum\ab>1
						\node at (1.5,0.5 * \z) {$I_{k+\abm}$} ;
					\else
					\fi
					\ifnum\ab<-1
						\pgfmathtruncatemacro\abpabs{abs(\abp)}
						\node at (1.5,0.5 * \z) {$I_{k''-\abpabs}$} ;
					\else
					\fi
				\end{scope}
				\fi
			}
			\node[above right = -0.1cm and -0.1cm of a\mtp\mk4] (topInst) {};
			\node[below left = -0.1cm and -0.1cm of a\mtm\mk1] (botInst) {};
			\node[draw, ultra thick, densely dashed, green!50!gray, rounded corners, fit=(topInst) (botInst)] (c) {} ;
			
			\foreach \i in {1,...,\t}{
				\pgfmathparse{int(\mt - \i)}
				\ifnum\pgfmathresult=0
				
				\else
				\begin{scope}[yshift=\i * \vs cm, xshift=\kp * \hs cm]
					\foreach \s/\x/\y in {1/0/0,2/0/1,3/1/0,4/1/1}{
						\node[circle,inner sep=0.08cm] (a\i\kp\s) at (\x * \z,\y * \z) {} ;
					}
					\node[very thick, rounded corners, fit=(a\i\kp1) (a\i\kp4)] (b\i\kp) {} ;
				\end{scope}
				\fi
			}
			
			\foreach \j [count=\jp from 2] in {1,...,\k}{
				\foreach \i in {1,...,\tm}{
					\pgfmathparse{int(\mt - \i)}
					\ifnum\pgfmathresult=0
					\else
					\pgfmathtruncatemacro\ip{\i+1}
					\foreach \iq in {\ip,...,\t}{
						\pgfmathparse{int(\mt - \iq)}
						\ifnum\pgfmathresult=0
						\else
							\ifnum\i=\mtp
								\ifnum\j=\mkm
									\draw[thick,npink!80!black] (b\i\j.east) -- (b\iq\jp.west) ;
								\else
									\draw[black,opacity=0.5] (b\i\j.east) -- (b\iq\jp.west) ;
								\fi
							\else
								\ifnum\iq=\mtm
								          \ifnum\jp=\mkp
										\draw[thick,npink!80!black] (b\i\j.east) -- (b\iq\jp.west) ;
									\else
										\draw[black,opacity=0.5] (b\i\j.east) -- (b\iq\jp.west) ;
									\fi
								\else
									\draw[black,opacity=0.5] (b\i\j.east) -- (b\iq\jp.west) ;
								\fi
							\fi
						\fi
					}
					\fi
				}
			}
			\node[picked,npink!80!black] at (\mkm * \hs+\z,\mtp * \vs+\z) {} ;
			\node[picked,npink!80!black] at (\mkp * \hs,\mtm * \vs) {} ;
		\end{scope}
		\end{tikzpicture}
	        \caption{In the middle column $C_{j+1}$, pink classes are those dominated by the choice of $d_j$ and $d_{j+2}$ in the neighboring columns.
                  At least two classes, highlighted in green and corresponding to the interval $[k'',k]$, would need to be dominated by a single vertex within the column, which cannot be done.}
	\label{fig:domination-gap}
\end{figure}
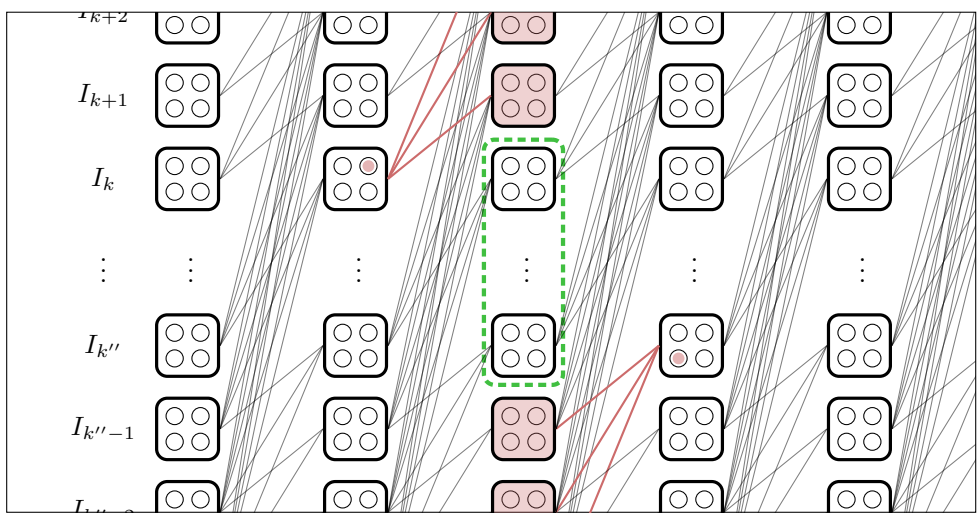

\section{Polynomial kernels}\label{sec:positive}

We present two simple kernelization algorithms for \kconvc\ and \kcapvc\ with kernels on $O(k^2)$ vertices.
We also provide an improved kernel on $O(k^{1.5})$ vertices for \kconvc. 

A folklore  in the context of \textsc{$k$-Vertex Cover} kernelization is that keeping at most $k+1$ twin vertices is a safe reduction rule.
Hence any module outside of a vertex cover $X$ can be trimmed down to $k+1$ vertices.
Similar reduction rules are also known for \kconvc\ and \kcapvc~\cite{CyganGH17}.
Therefore to get a polynomial kernel, one simply needs to polynomially bound the number of modules outside~$X$, or equivalently, the number of distinct neighborhoods in $X$.  

This will be done by proving~\cref{lem:vc-density-1} which, we repeat, is equivalent to saying that the neighborhood hypergraphs of graphs of bounded twin-width have VC density~1.
This feature is shared with classes of bounded expansion.
\cref{lem:vc-density-1} is of independent interest as it opens the door to a common algorithmic treatment for classes of bounded twin-width and of bounded expansion. 

To state the next two theorems, we need to introduce some vocabulary on $0,1$-matrices.
A~\emph{row} (resp.~\emph{column}) \emph{partition} of a matrix $M$ is a partition of its row set (resp. column set).
A~\emph{row} (resp.~\emph{column}) \emph{division} is a row (resp. column) partition where every part is a consecutive set of rows (resp. columns), or \emph{interval}.
A~\emph{division} of a matrix $M$ is a pair $(\mathcal R, \mathcal C)$ where $\mathcal R$ is a row division and $\mathcal C$ is a column division.
A~\emph{cell} or \emph{zone} of a matrix $M$ with division $(\mathcal R, \mathcal C)$ is a submatrix $M[R_i,C_j]$ with $R_i \in \mathcal R$ and $C_j \in \mathcal C$.
A~\emph{$t$-division} is a division $(\mathcal R, \mathcal C)$ with $|\mathcal R|=|\mathcal C|=t$.
A~\emph{$t$-grid minor} of a $0,1$-matrix $M$ is a $t$-division of $M$ where every cell contains at least one 1-entry.
A~matrix is \emph{mixed} if it has at least two distinct rows and at least two distinct columns.
A~\emph{corner} is a $2 \times 2$ contiguous submatrix which is mixed.
A~\emph{$t$-mixed minor} of a matrix $M$ is a $t$-division of $M$ where every cell is mixed.
It can be observed that this is equivalent to the existence of a $t$-division where every cell contains a corner~\cite{twin-width1}.

We first recall the Marcus-Tardos theorem, a celebrated result in combinatorics. 
\begin{theorem}[\cite{MarcusT04}]\label{thm:marcustardos}
For every integer $t$, there is some $c_t$ such that every $n \times m$ $0,1$-matrix~$M$ with at least $c_t\max(n,m)$ entries 1 has a $t$-grid minor.
\end{theorem}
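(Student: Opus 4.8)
The statement is the matrix (grid-minor) form of the Marcus--Tardos theorem, and I would prove it by their block-decomposition argument, organised as a recursion. First I would normalise: deleting an all-zero row or column changes neither the number of $1$-entries nor the existence of a $t$-grid minor (a removed empty line can always be absorbed into an adjacent interval, and conversely), so assume $M$ has no empty row and no empty column; in particular $n$ and $m$ are then at most the number of ones. I would then prove, by induction on $\max(n,m)$, that if $M$ has no $t$-grid minor it has at most $c_t\max(n,m)$ ones, with the constant $c_t$ fixed at the end; when $\min(n,m)\le p$ (for the block size $p=p(t)$ chosen below) one dimension is a constant, so the bound is immediate and gives the base case.

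For the inductive step, cut the rows of $M$ into contiguous blocks of $p$ consecutive rows (the last possibly shorter) and the columns likewise, with $p:=2(t-1)^2$; this partitions $M$ into $\lceil n/p\rceil\times\lceil m/p\rceil$ \emph{cells}, each of size at most $p\times p$. Call a cell \emph{tall} if its $1$-entries occupy at least $t$ distinct rows of its block, \emph{wide} if they occupy at least $t$ distinct columns, and \emph{small} otherwise; a small cell has at most $t-1$ occupied rows and at most $t-1$ occupied columns, hence at most $(t-1)^2$ ones. I would bound the contributions of the three types separately. For the small cells, form the \emph{contracted matrix} $\widehat M$ of dimensions $\lceil n/p\rceil\times\lceil m/p\rceil$ having a $1$ in position $(i,j)$ exactly when the $(i,j)$-th cell of $M$ is a non-empty small cell. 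A $t$-grid minor of $\widehat M$ lifts to one of $M$ (expand each block-index interval to the corresponding interval of lines of $M$, and pick a $1$ inside each non-empty cell), so $\widehat M$ also has no $t$-grid minor; by the induction hypothesis it has at most $c_t\max(\lceil n/p\rceil,\lceil m/p\rceil)\le c_t(\max(n,m)/p+1)$ ones, i.e.\ that many non-empty small cells, which therefore contribute at most $(t-1)^2c_t(\max(n,m)/p+1)$ ones to $M$ --- and since $p=2(t-1)^2>(t-1)^2$ this is a fixed fraction strictly below $1$ of $c_t\max(n,m)$, plus a constant.

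The heart of the proof is bounding the tall cells (wide cells being symmetric). I would show that every block-row contains fewer than $\binom{p}{t}\cdot t$ tall cells: otherwise the pigeonhole principle yields, inside one block-row, $t$ tall cells in column-blocks $j_1<\cdots<j_t$ sharing a common set $\{s_1<\cdots<s_t\}$ of $t$ rows each occupied in each of those cells; choosing in the $\ell$-th such cell a column $c_\ell$ of block $j_\ell$ with $M[s_\ell,c_\ell]=1$ produces a $t\times t$ identity pattern of $M$ with $s_1<\cdots<s_t$ and $c_1<\cdots<c_t$, which --- in the equivalent permutation-avoiding formulation of Marcus--Tardos, or after one further density argument in the present formulation --- is exactly the configuration forbidden by the hypothesis. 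Hence $M$ has at most $\binom{p}{t}\,t\,\lceil n/p\rceil$ tall cells, each with at most $p^2$ ones, contributing $O_t(n)$ ones overall, and symmetrically the wide cells contribute $O_t(m)$ ones. Summing the three bounds, the number of ones of $M$ is at most $A_t(n+m)+\tfrac{(t-1)^2}{p}\,c_t\max(n,m)+(t-1)^2c_t$ for some $A_t$ independent of $c_t$; since $\tfrac{(t-1)^2}{p}=\tfrac12<1$, choosing $c_t$ large enough in terms of $A_t$ and $t$ makes this at most $c_t\max(n,m)$, closing the induction.

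The main obstacle is precisely the tall-cell step together with making one block size $p$ serve two seemingly incompatible demands: small enough that the pigeonhole count $\binom{p}{t}t$ stays a constant, yet large enough ($p>(t-1)^2$) that the small-cell recursion strictly contracts --- this is what forces $p=\Theta(t^2)$ --- and, on top of that, upgrading the extracted identity pattern into the grid-minor conclusion is not automatic. The cleanest route I would actually take is to run the block argument first for the permutation-pattern form of Marcus--Tardos (``for every $k\times k$ permutation matrix $P$ there is a constant $c$ such that an $n\times m$ $0,1$-matrix containing no copy of $P$ as a submatrix has at most $c\max(n,m)$ ones''), where the pigeonhole directly delivers the forbidden copy of $P$, and then derive the grid-minor statement stated here from it.
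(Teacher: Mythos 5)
The paper does not prove this statement: it is a black-box citation of Marcus and Tardos~\cite{MarcusT04}, with the explicit constant $c_t = \tfrac{8}{3}(t+1)^2 2^{4t}$ taken from~\cite{Cibulka16}. So there is no internal proof to compare against; what follows evaluates your reconstruction on its own.

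Your block-decomposition argument is the standard Marcus--Tardos proof and its skeleton is sound, but you undersell your own pigeonhole step. Once you have $t$ tall cells in a single block-row, sitting in column-blocks $j_1<\dots<j_t$ and all occupying the \emph{same} set of rows $s_1<\dots<s_t$, you already have the $t$-grid minor: take any row division whose $a$-th part contains $s_a$ and any column division whose $\ell$-th part contains block $j_\ell$; since each of the $t$ tall cells has a $1$-entry in \emph{every} row $s_a$, each of the $t^2$ cells of the resulting $t$-division is non-empty. There is no ``$t\times t$ identity pattern'' bottleneck, no ``further density argument'' to supply, and no reason to detour through the permutation-avoidance form. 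In fact the implication you propose at the end runs in the wrong direction: the block argument directly yields the grid-minor statement, and the permutation-avoidance theorem \emph{follows} from it (a $t$-grid minor contains a copy of every $t\times t$ permutation matrix $P_\pi$, by picking one $1$-entry from each cell $(R_i, C_{\pi(i)})$), not the other way around. So the ``main obstacle'' you describe is illusory; your hesitation here is the one real gap in the writeup, and it evaporates once you read off what the pigeonhole actually produced.

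A secondary, quantitative point: with $p=2(t-1)^2$ the small-cell recursion contracts by exactly $\tfrac{(t-1)^2}{p}=\tfrac12$, which leaves no slack to absorb the additive $+1$ coming from $\lceil n/p\rceil\le n/p+1$ when $\max(n,m)$ is only slightly above $p$, and the inequality $A_t(n+m)+\tfrac12 c_t\max(n,m)+(t-1)^2 c_t\le c_t\max(n,m)$ fails for $\max(n,m)$ close to $p$ no matter how large $c_t$ is. You need either a somewhat larger block size (e.g.\ $p=4(t-1)^2$) or a base case extended to all $\max(n,m)\le Cp$ for a suitable absolute constant $C$ (so that $\lceil n/p\rceil\le (1+\tfrac1C)n/p$); either fix is routine, but as written the induction does not close.
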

In what follows \emph{$c_t$} will always denote the bound in~\cref{thm:marcustardos}.
The best upper bound for $c_t$ is currently $8/3(t+1)^22^{4t}$~\cite{Cibulka16}, that we take as the definition of $c_t$.

The following result is shown in the first paper of the series dedicated to twin-width.
\begin{theorem}[\cite{twin-width1}]\label{thm:gridtheorem}
    If, for every vertex ordering, the adjacency matrix of a graph $G$ has a $2t+2$-mixed minor, then $G$ has twin-width larger than $t$. 
\end{theorem}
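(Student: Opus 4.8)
The plan is to prove the contrapositive: if $\tww(G) \le t$, then there is a vertex ordering of $G$ whose adjacency matrix has no $(2t+2)$-mixed minor. I would fix a $t$-sequence $G = G_n, G_{n-1}, \dots, G_1$ and let $\mathcal{P}_n, \dots, \mathcal{P}_1$ be the corresponding partition sequence. Since every $\mathcal{P}_{i-1}$ coarsens $\mathcal{P}_i$, the parts occurring anywhere in the sequence form a laminar family, i.e., a rooted binary tree $T$ whose leaves are the singletons $\{v\}$. Let $\sigma$ be the order of $V(G)$ obtained by enumerating the leaves of $T$ from left to right for a fixed planar drawing of $T$. Then every part occurring in the sequence (equivalently, every vertex of every $G_i$) is an interval of $\sigma$, and, since the sequence is a $t$-sequence, the red graph of each $G_i$ has maximum degree at most $t$.

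Assume for contradiction that the adjacency matrix $A$ of $G$ with respect to $\sigma$ has a $(2t+2)$-mixed minor, witnessed by a division $(\mathcal{R}, \mathcal{C})$ with $\mathcal{R} = (R_1, \dots, R_{2t+2})$ and $\mathcal{C} = (C_1, \dots, C_{2t+2})$, where each $R_a$ (resp. $C_b$) is an interval of $\sigma$ and every cell $A[R_a, C_b]$ has at least two distinct rows and at least two distinct columns. The goal is to exhibit an index $i$ and a part $P \in \mathcal{P}_i$ that is \emph{mixed} in $G$ with at least $t+1$ other parts of $\mathcal{P}_i$ (that is, $G[P, P']$ is neither complete nor empty for $t+1$ parts $P' \ne P$), contradicting the red-degree bound of $G_i$.

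Walking through the sequence from $\mathcal{P}_n$ (singletons) towards $\mathcal{P}_1$, I would single out the first partition $\mathcal{P}_i$ in which some part contains an entire block of $\mathcal{R}$ or of $\mathcal{C}$; by the row/column symmetry we may assume it is a row block, $R_a \subseteq Q$ for the part $Q \in \mathcal{P}_i$ created at that step. Because $Q$ is the union of two parts present just before this step, and no part present before can contain a whole block, $Q$ meets only a bounded number of consecutive blocks of each of $\mathcal{R}$ and $\mathcal{C}$, and every part of $\mathcal{P}_i$ other than $Q$ meets at most two consecutive blocks of each division. For each $l \in [2t+2]$, the cell $A[R_a, C_l]$ has two distinct rows, so there are $u, u' \in R_a$ and a vertex $w_l \in C_l$ adjacent to exactly one of $u, u'$; since $R_a \subseteq Q$, the bipartite graph $G[Q, \mathrm{part}_{\mathcal{P}_i}(w_l)]$ contains both an edge and a non-edge, so $Q$ is mixed with $\mathrm{part}_{\mathcal{P}_i}(w_l)$ — unless $w_l$ happens to lie in $Q$, which can happen for only boundedly many $l$. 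Finally, because each part other than $Q$ meets at most two consecutive blocks of $\mathcal{C}$, at most two of the $w_l$ lie in the same part, so $Q$ is mixed with $\Omega(t)$ distinct parts; with the constants tuned (and the ``first engulfing'' step chosen) so that this count reaches $t+1$, we obtain the contradiction.

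The main obstacle is precisely this last point: one must organize the bookkeeping — which step to single out, and how the $\mathcal{R}$- and $\mathcal{C}$-spans of $Q$ and of the surrounding parts interact — carefully enough that the $O(1)$ losses caused by boundary effects are absorbed by the factor $2$ in ``$2t+2$''. In particular the two ``sides'' of $Q$ (its intersection with $R_a$ versus the remainder) play asymmetric roles, and controlling them is where the argument becomes genuinely technical; the remaining ingredients — the construction of $\sigma$ from the contraction tree, the passage to the contrapositive, and the appeal to the red-degree bound — are routine.
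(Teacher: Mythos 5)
First, a meta-observation: this paper does not prove \cref{thm:gridtheorem}; it cites it from \cite{twin-width1}, so there is no in-paper proof to check your attempt against. That said, your high-level plan — pass to the contrapositive, order the vertices by a left-to-right traversal of the contraction tree so that every part of every $\mathcal P_i$ is a $\sigma$-interval, then argue that a $(2t{+}2)$-mixed minor would force some trigraph in the sequence to have red degree above~$t$ — is the right and, as far as I can tell, the intended approach.

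The gap is the one you flag at the end, and it is not merely cosmetic. Fix $i$ as you do (minimal with some $Q\in\mathcal P_i$ containing a full block, say $R_a\subseteq Q$, and $Q=Q_1\cup Q_2$ with $Q_1,Q_2\in\mathcal P_{i+1}$) and carry out the count. Since $R_a\subseteq Q_1\cup Q_2$ while $R_a\not\subseteq Q_1$ and $R_a\not\subseteq Q_2$, the intervals $Q_1,Q_2$ are adjacent, so $Q$ is an interval; and because neither $Q_1$ nor $Q_2$ swallows a $\mathcal C$-block, $Q$ meets at most \emph{three} consecutive $\mathcal C$-blocks, not an unspecified ``bounded number.'' That leaves at least $2t-1$ indices $l$ with $C_l\cap Q=\emptyset$, each supplying a $w_l\in C_l$ with non-constant column over $R_a$, hence a red neighbour $\mathrm{part}(w_l)$ of $Q$; and every part other than $Q$ meets at most two consecutive $\mathcal C$-blocks, so at most two $w_l$ share a part. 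Splitting the surviving indices into the runs left and right of the blocks that $Q$ meets, a short parity computation shows the number of distinct red neighbours guaranteed this way is $\ge t$, with equality genuinely attainable (for example when $Q$ meets exactly three $\mathcal C$-blocks). Red degree $\ge t$ is perfectly compatible with $\tww(G)\le t$, so no contradiction materialises. Closing this needs an additional idea rather than ``tuning constants'': e.g.\ squeezing a red edge out of a block $C_p$ that $Q$ only partially overlaps (exploiting that a mixed cell has two distinct \emph{columns} as well as rows), or a different choice of contraction step to isolate. It is worth noting that the analogous count is clean for the \emph{asymmetric} matrix variant, where row parts and column parts live in disjoint universes; part of the difficulty here is precisely that for a symmetric adjacency matrix the engulfing part $Q$ also eats into the column division. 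As sketched, your argument proves the statement only with a larger constant in place of $2t+2$.
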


We prove that in a $0,1$-matrix of small twin-width, the number of \emph{distinct} columns is linearly bounded in the total number of rows.
\begin{theorem}\label{thm:linearneigh}
  Let $t$ be a positive integer.
  Let $M$ be a $0,1$-matrix with $s$ rows, and at least $2^{4c_{2t+2}} \cdot s$ distinct columns.
  Then $M$ has a $2t+2$-mixed minor. 
\end{theorem}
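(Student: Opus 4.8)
The plan is to reduce first to the case where \emph{all} columns of $M$ are pairwise distinct, and then to find a $(2t+2)$-mixed minor by a recursion on the number of rows, with \cref{thm:marcustardos} (Marcus--Tardos) as the workhorse. For the reduction, note that deleting duplicate columns (keeping one representative of each distinct column, in the original order) yields $M'$ with $s$ rows and $m \geq 2^{4c_{2t+2}}\cdot s$ columns, all distinct; and any $(2t+2)$-mixed minor of $M'$ lifts to one of $M$ by assigning each deleted column of $M$ to the block of $M'$ it falls into, since a cell that was mixed stays mixed after inserting extra columns (or rows) into it. So from now on $M$ has $s$ rows and $m\geq 2^{4c_{2t+2}}\cdot s$ pairwise distinct columns; in particular $2^s\geq m\geq 2^{4c_{2t+2}}s$, so $s$ is comfortably larger than $c_{2t+2}$. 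We will also repeatedly use the fact, recalled in the preliminaries from~\cite{twin-width1}, that a $p$-division in which every cell contains a corner is exactly a $p$-mixed minor, so that extracting a $p$-grid minor in the \emph{corner-indicator} matrix of a suitable contiguous sub-block of $M$ essentially produces a $p$-mixed minor of $M$.

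The heart of the argument is an induction on $s$: we show a matrix with $s$ rows and all columns distinct either has a $(2t+2)$-mixed minor or has fewer than $2^{4c_{2t+2}}s$ columns. Split the rows into a top half $T$ and a bottom half $B$. Any mixed minor of $M_{|T}$, of $M_{|B}$, or of the \emph{block-compression} of $M$ (replace each maximal run of consecutive columns sharing the same $T$-restriction by one representative) lifts to a mixed minor of $M$ — for $M_{|T}$ by appending $B$ to the last row block, for $M_{|B}$ by prepending $T$, and for the block-compression by expanding each representative back to its run of columns, all of which keep the relevant cells mixed. Hence, if none of these objects has a $(2t+2)$-mixed minor, the induction hypothesis bounds the number of distinct columns of $M_{|T}$ and $M_{|B}$ by roughly $2^{4c_{2t+2}}\cdot(s/2)$ each. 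The naïve consequence is only a \emph{quadratic} bound on $m$, because a priori the distinct $T$-restrictions can combine freely with the distinct $B$-restrictions; the point is that when they combine too freely — i.e.\ many $T$-classes are large and scattered along the column order, equivalently the block-compression is long relative to its number of distinct columns — this interleaving itself creates $\geq c_{2t+2}\cdot\max(\cdot,\cdot)$ corners inside a controlled sub-rectangle of $M$, so \cref{thm:marcustardos} applied to the corner matrix of that sub-rectangle yields a grid minor and hence a $(2t+2)$-mixed minor, and we never recurse. Arranging this dichotomy so that, whenever we do recurse, the column bound only multiplies by an absolute constant per halving (and controlling the number of halvings) is what produces the clean bound $2^{4c_{2t+2}}\cdot s$.

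Two technical points will need care. First, turning a grid minor of the corner-indicator matrix into an honest mixed minor of $M$: a corner flagged in a cell occupies two consecutive rows and two consecutive columns of $M$, which may straddle a block boundary. The standard fix is to extract the grid minor at scale $2(2t+2)$ and then merge consecutive pairs of row blocks and of column blocks, so that each corner certified by a sub-cell lies entirely inside a merged cell; this factor-$2$ loss is precisely why the target is a $(2t+2)$-mixed minor and why the bound in \cref{thm:marcustardos} appears as $c_{2t+2}$. Second, the "escape-hatch" corner count in the interleaved case must be done on a contiguous window of columns so that contiguity is preserved when invoking Marcus--Tardos — distinctness of columns gives one corner per consecutive pair, but one needs to locate a window where the density of corners (or of column-differences) reaches the $c_{2t+2}$ threshold required for the window's dimensions, which is exactly where the interleaving hypothesis is spent.

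The main obstacle, then, is not the reduction or the boundary bookkeeping but the middle step: quantifying "how interleaved the $T$- and $B$-restrictions are" by a single quantity, proving that large values of it force (via Marcus--Tardos on the corner matrix of a suitable window) a $(2t+2)$-mixed minor, and otherwise recursing with only a constant-factor loss, so that the whole induction stays \emph{linear} in $s$. Once that dichotomy is set up with the right parameters, closing the induction is routine arithmetic.
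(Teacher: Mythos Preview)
Your proposal has a genuine gap: the entire argument hinges on the ``dichotomy'' in the middle step, and you have not actually set it up. You write that the main obstacle is ``quantifying how interleaved the $T$- and $B$-restrictions are by a single quantity, proving that large values of it force \ldots\ a $(2t+2)$-mixed minor, and otherwise recursing with only a constant-factor loss,'' and then close with ``once that dichotomy is set up with the right parameters, closing the induction is routine arithmetic.'' But that dichotomy \emph{is} the proof; everything else is bookkeeping. As stated, there is no definition of the interleaving quantity, no lemma saying large interleaving yields $c_{2t+2}\cdot\max(\cdot,\cdot)$ corners in a contiguous window, and no verification that small interleaving lets you recurse with only a multiplicative constant per halving. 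Without these, the recursion collapses to the quadratic bound you yourself flag. It is also unclear that a recursive scheme of this shape can hit the specific constant $2^{4c_{2t+2}}$ rather than some tower depending on $\log s$.

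For contrast, the paper's argument is non-recursive and avoids the dichotomy entirely. It partitions the columns into $s$ consecutive blocks of size $2^{4c_{2t+2}}$ each; since the columns in a block are distinct, each block has $\mathbb{F}_2$-rank at least $4c_{2t+2}$, hence admits a row division into $2c_{2t+2}$ intervals each containing a corner. A parity trick (two fixed pairings of consecutive rows) then shows that for at least half of the column blocks, one fixed row-pairing captures at least $c_{2t+2}$ corners. Encoding ``this cell is mixed'' as a $0,1$-matrix with $\lfloor s/2\rfloor$ rows and $\geq\lceil s/2\rceil$ columns, each column having $\geq c_{2t+2}$ ones, a single application of Marcus--Tardos gives a $(2t+2)$-grid minor there, which unpacks to a $(2t+2)$-mixed minor of $M$. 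The key idea you are missing is this rank-to-corners step on fixed-width column blocks, which replaces your unresolved interleaving analysis.
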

\begin{proof}
We denote by $R$ and $C$ the sets of rows and columns, respectively, of $M$.
Without loss of generality, we may assume that all the columns of $M$ are distinct.
We consider a division of $C$ into $s$ parts $\{C_1, \ldots, C_s\}$, each $C_i$ consisting of $2^{4c_{2t+2}}$ consecutive (distinct) columns. 
Note that the submatrix of $M$ consisting of the columns $C_i$ has rank at least $4c_{2t+2}$ in the binary field $\mathbb F_2$, for all $i \in [s]$. 

Therefore there is a row division of $M[R,C_i]$ into at least $2c_{2t+2}$ (row) parts, each zone of which has rank at least~2, thus is mixed, and, by an observation in~\cite{twin-width1}, contains a corner. 
These $2c_{2t+2}$ corners in $M[R,C_i]$ are on pairwise disjoint pairs of consecutive rows.
Let us consider $\mathcal R^1$ the row division grouping each pair of rows with indices $2i-1, 2i$, and $\mathcal R^2$, grouping each pair of rows $2i, 2i+1$, for $i \in [\lceil s/2 \rceil]$.
Observe that one of the two divisions $(\mathcal R^1,\{C_i\})$, $(\mathcal R^2,\{C_i\})$ of $M[R,C_i]$ contains at least $c_{2t+2}$ zones with a corner, hence mixed.

Without loss of generality, we may assume that at least $\lceil s/2 \rceil$ column parts among $C_1, \ldots, C_s$ have at least $c_{2t+2}$ mixed zones when divided by, say, $\mathcal R^1$.
Consider the column division $\{C'_1, \ldots, C'_{s'}\}$ with $s' \geq \lceil s/2 \rceil$, coarsening of $\{C_1, \ldots ,C_s\}$ such that each part $C'_j$ contains exactly one column part $C_i$ with the property of the previous sentence. 

Let $M'$ be a $|\mathcal R^1| \times s'$ $0,1$-matrix with a 1-entry at positions where the cell of $(\mathcal R^1,\{C'_1, \ldots,$ $C'_{s'}\})$ is mixed, and a 0-entry otherwise.
Note that $M'$ has at most $\lfloor s/2 \rfloor$ rows and $s' \geq \lceil s/2 \rceil$ columns. 
Moreover each column of $M'$ contains by design at least $c_{2t+2}$ 1-entries. 
By~\cref{thm:marcustardos}, $M'$ admits a $(2t+2)$-grid minor.
Thus $M$ has a $(2t+2)$-mixed minor, which implies that $M$ has twin-width larger than $t$ by~\cref{thm:gridtheorem}. 
\end{proof}


We conclude the following.
\begin{replemma}{lem:vc-density-1}
  For every graph $G$ of twin-width~$t$ and $X \subseteq V(G)$, the number of distinct neighborhoods in $X$, $|\{N(v) \cap X~:~v \in V(G)\}|$, is at most $2^{4c_{2t+2}}|X|$.
\end{replemma}
\begin{proof}
  We assume for the sake of contradiction that $|\{N(v) \cap X~:~v \in V(G)\}|>2^{4c_{2t+2}}|X|$.
  For every vertex ordering of $G$, its adjacency matrix $M$ along this order contains an $|X| \times 2^{4c_{2t+2}}$ submatrix without two equal columns; namely the submatrix of the adjacencies between $X$ and $2^{4c_{2t+2}}$ vertices with a pairwise distinct neighborhood in $X$.
  By~\cref{thm:linearneigh}, it implies that $M$ has a $2t+2$-mixed minor.
  By~\cref{thm:gridtheorem}, this in turn implies that $G$ has twin-width more than $t$.
\end{proof}
  

\subsection{Quadratic vertex kernels}

For completeness, we state and prove the folklore reduction rule for \textsc{$k$-Vertex Cover} variants.

\begin{reduction}[Reduction Rule for \kconvc]\label{rule:convc}
Let $X$ be a vertex cover of $G$. 
If there is a set $S \subseteq V(G) \setminus X$ with the same neighborhood in $X$ and $\abs{S}>k$, delete a vertex of $S$. 
\end{reduction}

\begin{lemma}\label{lem:safeconvc}
Let $G$ be a graph and $G'$ be a graph obtained by applying~\cref{rule:convc}. Then $(G,k)$ is a \textsc{yes}-instance to \kconvc\ if and only if $(G',k)$ is. 
\end{lemma}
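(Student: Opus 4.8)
The plan is to show that Reduction Rule \ref{rule:convc} is safe by arguing that a connected vertex cover of size at most $k$ of $G$ can be transformed into one of $G'$ and vice versa. Let $S \subseteq V(G) \setminus X$ be the set with common neighborhood $N(S) \cap X =: T$ and $|S| > k$, and let $v \in S$ be the deleted vertex, so $V(G') = V(G) \setminus \{v\}$ and $G' = G - v$. Note first that since $|S| \geq k+1$ and $S$ is an independent set (all vertices of $S$ lie outside the vertex cover $X$), a connected vertex cover of size at most $k$ of either graph can contain at most $k$ vertices and hence misses at least one vertex of $S$; this vertex will serve as an "anchor" to connect $v$.

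For the forward direction, suppose $C$ is a connected vertex cover of $G$ with $|C| \leq k$. Since $G'$ is an induced subgraph of $G$, $C \setminus \{v\}$ still covers every edge of $G'$; the only worry is connectivity and whether we removed a vertex we needed. If $v \notin C$, then $C$ is already a connected vertex cover of $G' = G - v$ and we are done. If $v \in C$, pick any vertex $u \in S \setminus C$ (which exists since $|S| > k \geq |C|$). I claim $C' := (C \setminus \{v\}) \cup \{u\}$ is a connected vertex cover of $G'$: it has size at most $k$; it covers all edges of $G'$ because every edge incident to $v$ in $G$ went to $T \subseteq X$, and $u$ has exactly the same neighborhood $T$ in $X$, and $u$ covers no edges outside $X$ while $v$ also covered no such edges (both being in the independent set outside $X$) — so swapping $v$ for $u$ preserves edge coverage; and $G'[C']$ is connected because $v$ and $u$ have the same neighbors $T$ in $X \supseteq C \setminus \{v\}$, so $u$ attaches to $C \setminus \{v\}$ exactly where $v$ did. (If $C = \{v\}$, then $G$ has at most one edge-endpoint worth of structure and $k \geq 1$; this degenerate case is handled trivially, e.g. $G'$ has a connected vertex cover of size $\leq 1$ as well.)

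For the reverse direction, suppose $C'$ is a connected vertex cover of $G'$ with $|C'| \leq k$. Every edge of $G$ not incident to $v$ is an edge of $G'$ and hence covered by $C'$; the edges incident to $v$ all go to $T$, and since $|S \setminus \{v\}| \geq k \geq |C'|$ there is some $u \in (S \setminus \{v\}) \setminus C'$, whence no edge of $G'$ incident to $u$ is covered by an endpoint in $u$ — but $u$'s only neighbors are in $T$, so $T \subseteq C'$ (each edge $ut$ must be covered at $t$). Therefore every edge $vt$ of $G$ with $t \in T$ is covered by $C'$ as well, so $C'$ is a vertex cover of $G$, of size at most $k$, and it is connected in $G$ since $G'$ is an induced subgraph of $G$ and connectivity is preserved. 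This completes both directions, so $(G,k)$ and $(G',k)$ are equivalent.

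The main obstacle, and the only place requiring care, is the forward direction when $v$ lies in the solution $C$: one must verify that substituting an unused twin-like vertex $u \in S$ for $v$ simultaneously preserves edge coverage and connectivity, which hinges precisely on $S$ being an independent set outside $X$ with a common neighborhood into $X$. The counting $|S| > k$ guarantees the needed spare vertex $u$ exists. Everything else is routine bookkeeping about induced subgraphs.
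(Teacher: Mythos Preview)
Your argument follows the same approach as the paper's and is essentially correct. There is one slip in the reverse direction: the inequality chain $|S \setminus \{v\}| \geq k \geq |C'|$ does not by itself guarantee a vertex $u \in (S \setminus \{v\}) \setminus C'$, since if all three quantities are equal one could have $S \setminus \{v\} = C'$. The paper closes this gap by observing that $S \setminus \{v\} \subseteq C'$ would force $C' = S \setminus \{v\}$, an independent set of (at least) $k$ pairwise non-adjacent vertices, hence not connected---contradicting that $C'$ is a \emph{connected} vertex cover. With this one-line connectivity argument inserted, your proof matches the paper's.
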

\begin{proof}
  Let $s \in S$ be the vertex such that $G':=G-s$.
  Suppose that $T$ is a connected vertex cover of $G$ of size at most $k$.
  If $s \notin T$, then  $T$ is a connected vertex cover of $G'$.
  If $s \in T$, there is at least one vertex $s'$ in $S \setminus T$.
  Now $T':=(T \setminus \{s\})\cup \{s'\}$ is a connected vertex cover of $G'$ of size $|T|$.
  
  Conversely let $T$ be a connected vertex cover of $G'$ of size at most $k$.
  We claim that $T$ is also a connected vertex cover in $G$.
  As $G[T]$ is connected and $|T|\leq k$, we shall just check that all the edges of $G$ are covered by $T$.
  This is because $S\setminus s$ cannot be totally included in $T$; if so, $T$ is not connected because $S\setminus s$ consists of (at least) $k$ pairwise independent vertices. 
  Therefore, the common neighborhood $N(S)$ should be included in $T$.
\end{proof}

\begin{proposition}\label{prop:connkernel}
  \kconvc\ admits a kernel on $O_t(k^2)$ vertices when the input graphs have twin-width at most $t$. 
\end{proposition}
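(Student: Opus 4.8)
The plan is to feed the standard vertex-cover kernelization machinery with the linear trace bound of~\cref{lem:vc-density-1}. First I would reduce to the case of a small vertex cover: compute a maximal matching $M$ of $G$ in polynomial time. If $|M|>k$, then $G$ has no vertex cover of size at most $k$, hence no connected vertex cover of size at most $k$, and the algorithm returns a fixed no-instance of constant size. Otherwise $X:=V(M)$ is a vertex cover of $G$ with $|X|\leqslant 2k$, and $I:=V(G)\setminus X$ is an independent set.

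Next I would apply~\cref{rule:convc} with respect to $X$ until it is no longer applicable, that is, until every set of vertices of $I$ sharing a common neighborhood in $X$ has size at most $k$. By~\cref{lem:safeconvc} every application preserves the answer, and since each application only deletes a vertex of $I$, the resulting graph $G'$ is an induced subgraph of $G$; in particular $X$ is still a vertex cover of $G'$, and $\tww(G')\leqslant\tww(G)\leqslant t$ by~\cref{obs:induced-subgraph} (though we will only need the twin-width bound on $G$ itself).

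The crux is then to bound the number of neighborhood classes of $I$ in $X$. For $v\in I$ one has $N_{G'}(v)\cap X=N_G(v)\cap X$, so the number of such classes is at most $|\{N_G(v)\cap X~:~v\in V(G)\}|$, which by~\cref{lem:vc-density-1} applied to $G$ (of twin-width at most $t$) is at most $2^{4c_{2t+2}}|X|\leqslant 2^{4c_{2t+2}}\cdot 2k$. Since each class retains at most $k$ vertices after the reduction, $|V(G')|\leqslant|X|+2^{4c_{2t+2}}\cdot 2k\cdot k=O_t(k^2)$, and $(G',k)$ is the desired kernel. Every step above runs in polynomial time, so this is a genuine kernelization.

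I do not anticipate a real obstacle here: the only point requiring care is where the twin-width hypothesis is used. It enters exactly once, through~\cref{lem:vc-density-1}, to bound the number of distinct traces of $V(G)$ on $X$; \cref{rule:convc} and its safeness (\cref{lem:safeconvc}) are oblivious to twin-width, and since vertex deletions cannot increase twin-width there is no danger of the hypothesis being lost along the way. The same scheme, using the reduction rules for \kcapvc\ recalled from~\cite{CyganGH17} in place of~\cref{rule:convc}, yields the analogous $O_t(k^2)$-vertex kernel for \kcapvc, completing~\cref{thm:cnvc-cpvc}.
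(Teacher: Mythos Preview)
Your proof is correct and follows essentially the same approach as the paper: compute a 2-approximate vertex cover $X$ via a maximal matching, reject if $|X|>2k$, apply \cref{rule:convc} exhaustively, and then bound the number of modules in $V(G')\setminus X$ by $2^{4c_{2t+2}}|X|$ via \cref{lem:vc-density-1}, each of size at most $k$. The only cosmetic difference is that you explicitly note twin-width is preserved under vertex deletion and apply \cref{lem:vc-density-1} to $G$ rather than $G'$, but the argument is the same.
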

\begin{proof}
Let $(G,k)$ be an instance of \kconvc\ and let $X$ be a vertex cover obtained using a 2-approximation algorithm for \textsc{Min Vertex Cover} (given by any maximal matching). 
If $\abs{X}\geq 2k+1$, an optimal vertex cover and thus an optimal connected vertex cover has size at least $k+1$. 
Then one can correctly output a trivial \textsc{no}-instance. 
Henceforth we assume that $\abs{X}\leq 2k$.
Apply~\cref{rule:convc} exhaustively, and let $(G',k)$ be the resulting instance. 
By~\cref{lem:vc-density-1} and by the construction of $G'$, $V(G') \setminus X$ can be partitioned into at most $2^{4c_{2t+2}} \cdot \abs{X}$ modules and each module consists of at most $k+1$ vertices.
Therefore, $V(G')$ has at most $c_t \cdot (k+1)+2k$ vertices. 
That $(G',k)$ is an equivalent to $(G,k)$ is implied by~\cref{lem:safeconvc}. 
\end{proof}

\begin{reduction}[Reduction Rule for \kcapvc]\label{rule:capvc}
Let $(G,k,c:V(G) \rightarrow \mathbb{N})$ be an instance to \kcapvc\ and $X$ be a vertex cover of~$G$. 
If there is a set $S \subseteq V(G) \setminus X$ with the same neighborhood in $X$ and $\abs{S}>k+1$, delete a vertex of $s\in S$ with the minimum capacity and decrease the capacity of each neighbor of $s$ by one. 
\end{reduction}

\begin{lemma}\label{lem:safecapvc}
Let $(G,k,c)$ be an instance to \kcapvc\ and $(G',k,c')$ be an instance obtained by applying~\cref{rule:capvc}. 
Then $(G,k,c)$ is a \textsc{yes}-instance if and only if $(G',k,c')$ is. 
\end{lemma}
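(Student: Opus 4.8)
The plan is to prove both implications directly, turning a capacitated vertex cover of one instance into one of the other. Write \(S\), \(N \subseteq X\), and \(s\) as in the rule, set \(G' := G - s\), and recall \(c'(v) = c(v) - 1\) for \(v \in N\) and \(c'(v) = c(v)\) otherwise. I would first record two easy facts. Since \(X\) is a vertex cover, \(V(G) \setminus X\) is independent, so \(S\) is an independent set and every vertex of \(S\) has neighbourhood exactly \(N\) in \(G\); in particular every edge incident to a vertex of \(S\) reaches \(N\). Second, one may assume \(c(v) \geqslant 1\) for all \(v \in N\): if \(c(v) = 0\) for some \(v \in N\), then both instances are trivially \textsc{no}-instances, since \(v\) is then adjacent to more than \(k\) pairwise non-adjacent vertices (of \(S\), resp.\ of \(S \setminus \{s\}\)), each contributing an edge that must be charged to its \(S\)-endpoint, forcing a cover of size \(> k\).

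For the direction from \(G'\) to \(G\): given a capacitated vertex cover \((Y,\sigma)\) of \(G'\) with \(|Y| \leqslant k\), the set \(S \setminus \{s\}\) has more than \(k \geqslant |Y|\) elements, so some \(s' \in S \setminus \{s\}\) avoids \(Y\); since the edges \(s'v\) (\(v \in N\)) must be covered, \(N \subseteq Y\). Then \(Y\) is also a vertex cover of \(G\) (the only edges of \(G\) missing from \(G'\) are the \(sv\), all covered by \(N\)), and extending \(\sigma\) by sending each new edge \(sv\) to \(v\) raises the load of each \(v \in N\) by exactly one, which is within \(c'(v) + 1 = c(v)\); all other loads are unchanged.

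For the direction from \(G\) to \(G'\): let \((X^*,\rho)\) be a capacitated vertex cover of \(G\) with \(|X^*| \leqslant k\). If \(s \notin X^*\), every edge \(sv\) is charged to \(v\), so \(N \subseteq X^*\); then \(X^*\) with \(\rho\) restricted to \(E(G')\) works, as each \(v \in N\) simply loses the one edge \(sv\) it was charged. If \(s \in X^*\), use \(|S| > k+1 > |X^*|\) to pick \(s' \in S \setminus X^*\) (which again forces \(N \subseteq X^*\) via an edge \(s'v\)) and swap: \(X^{**} := (X^* \setminus \{s\}) \cup \{s'\}\) is a vertex cover of \(G'\) of size at most \(k\). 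Define \(\rho''\) to agree with \(\rho\) on edges not incident to \(s\) or \(s'\) (the edges at \(s\) disappear in \(G'\)); for an edge \(s'v\) with \(v \in N\) — which \(\rho\) charged to \(v\), as \(s' \notin X^*\) — reassign it to \(s'\) exactly when \(\rho(sv) = s\), and leave it on \(v\) otherwise. Then each \(v \in N\) loses exactly one charged edge (namely \(sv\) if \(\rho(sv)=v\), or \(s'v\) if \(\rho(sv)=s\)), hence fits \(c'(v) = c(v) - 1\); and \(s'\) is charged \(|\{v \in N : \rho(sv) = s\}| = |\rho^{-1}(s)| \leqslant c(s) \leqslant c(s') = c'(s')\), where the last step uses that the rule deletes a vertex \(s\) of \emph{minimum} capacity in \(S\). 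All remaining loads are untouched.

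The one genuinely delicate step is the case \(s \in X^*\): deleting \(s\) erases the edges \(\rho\) charged to \(s\), but does not by itself relieve their endpoints in \(N\); the fix is to reroute the parallel edges \(s'v\) onto the new cover vertex \(s'\), after which each \(v \in N\) sheds precisely one unit of load, and \(s'\) can absorb exactly \(|\rho^{-1}(s)|\) units — which is possible precisely because \(s\) was chosen of smallest capacity in \(S\). Everything else is routine accounting of assignment loads, and I expect no further obstacle.
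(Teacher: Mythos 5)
Your proposal is correct and takes essentially the same approach as the paper: in both directions you exploit that $|S|$ exceeds the budget to find a member of $S$ outside the cover (forcing $N \subseteq$ cover), swap $s$ out for some $s' \in S$ when necessary using $c(s) \leqslant c(s')$, and balance the one-unit drop in each $v \in N$'s capacity against the disappearance of edge $sv$. The only difference is cosmetic: the paper first massages the $G$-cover into one avoiding $s$ and then passes to $G'$, whereas you perform the swap and the pass to $G'$ in one step with an explicit reassignment $\rho''$; your side remark about $c(v) \geqslant 1$ is harmless but unnecessary, since the load-accounting already rules that case out.
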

\begin{proof}
  Again let $s$ be such that $G' := G-s$.
  We first check the forward direction.
  A $c$-capacitated vertex cover of $G$ is not necessarily a $c'$-capacitated vertex cover of $G'$ as we decreased the capacities of neighbors of $s$.
  However we claim that if $G$ admits a $c$-capacitated vertex cover $T$ of size at most $k$, then it also admits a $c$-capacitated vertex cover of $G$ of size at most $k$ not containing $s$.
  Indeed, if $s \in T$, then there exists a vertex $s' \in S \setminus T$ and not that all neighbors of $s'$ (thus $s$) are in $T$.
  Therefore, $(T\setminus \{s\})\cup \{s'\}$ is a vertex cover of $G$ of size at most $k$.
  It is further $c$-capacitated because $c(s') \geq c(s)$.
  Now we may assume that $T$ does not contains $s$.
  Set $T$ is a $c'$-capacitated vertex cover of $G'$ because the decrease of capacity of each vertex $v \in N(s)$ by one unit is canceled out by the absence of edge $sv$ in $G'$. 

  Conversely, if there is a $c'$-capacitated vertex cover $T$ of $G'$ of size at most $k$, note that all neighbors of $S \setminus \{s\}$ must be in $T$.  
  Therefore, $T$ is a $c$-capacitated vertex cover of $G$ where each edge incident with $s$ is covered by the residual capacity of the other endpoint.
\end{proof}

Therefore, \kcapvc\ admits a quadratic vertex kernel, which follows the proof of~\cref{prop:connkernel} verbatim.
\begin{proposition}\label{prop:cpvc}
  \kcapvc\ admits a kernel on $O_t(k^2)$ vertices when the input graphs have twin-width at most $t$.
\end{proposition}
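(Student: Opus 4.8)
The plan is to mirror the proof of \cref{prop:connkernel} almost verbatim, the one genuinely new ingredient being \cref{rule:capvc} together with its soundness, \cref{lem:safecapvc}, which we already have in hand. Let $(G,k,c)$ be an instance of \kcapvc with $\tww(G) \le t$. First I would compute a maximal matching of $G$ in polynomial time and let $X$ be the set of its endpoints; then $X$ is a vertex cover of $G$ whose size is at most twice the minimum vertex cover number (the usual $2$-approximation). If $|X| \ge 2k+1$, the minimum vertex cover of $G$ has size at least $k+1$, and since every capacitated vertex cover is in particular a vertex cover, $(G,k,c)$ is a \textsc{no}-instance; in that case we output a trivial \textsc{no}-instance of constant size. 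From now on we may assume $|X| \le 2k$.

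Next I would apply \cref{rule:capvc} exhaustively: each application deletes one vertex, so after at most $|V(G)|$ rounds we obtain, in polynomial time, an instance $(G',k,c')$ in which, for every subset $S \subseteq V(G') \setminus X$ whose members all have the same neighborhood in $X$, one has $|S| \le k+1$. The set $X$ is still a vertex cover of $G'$ (no vertex of $X$ is ever removed) with $|X| \le 2k$, and $G'$ is an induced subgraph of $G$, hence $\tww(G') \le t$ by \cref{obs:induced-subgraph}. Repeated application of \cref{lem:safecapvc} yields that $(G,k,c)$ is a \textsc{yes}-instance if and only if $(G',k,c')$ is.

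It then remains to bound $|V(G')|$. Partition $V(G') \setminus X$ into classes according to the trace $N(v) \cap X$. By the previous paragraph each class has at most $k+1$ vertices, and by \cref{lem:vc-density-1} applied to $G'$ and $X$ there are at most $2^{4 c_{2t+2}} \, |X| \le 2^{4 c_{2t+2}} \cdot 2k$ classes. Therefore $|V(G')| \le |X| + 2^{4 c_{2t+2}} \cdot 2k (k+1) \le 2k + 2^{4 c_{2t+2}} \cdot 2k(k+1) = O_t(k^2)$. Together with the polynomial running time and the equivalence established above, this is a kernelization with $O_t(k^2)$ vertices.

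The only place where the argument is not literally identical to \cref{prop:connkernel} is the soundness of the reduction rule, because \cref{rule:capvc} also modifies the capacity function: deleting the minimum-capacity twin $s$ of a large class and decrementing $c$ on $N(s)$ must neither lose a solution (a solution using $s$ can be rerouted to a surviving twin $s'$ with $c(s') \ge c(s)$) nor manufacture one (in a solution of the reduced instance every vertex of the common neighborhood of the class is forced in, so the edges incident to $s$ can be served by those residual capacities). This bookkeeping is exactly the content of \cref{lem:safecapvc}, already proved above, so I expect no real obstacle here; the size analysis is then word-for-word that of \cref{prop:connkernel}.
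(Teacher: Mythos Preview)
Your proposal is correct and matches the paper's approach exactly: the paper simply states that the proof ``follows the proof of~\cref{prop:connkernel} verbatim,'' using \cref{rule:capvc} and \cref{lem:safecapvc} in place of \cref{rule:convc} and \cref{lem:safeconvc}. Your write-up is in fact more explicit than the paper's, e.g.\ in noting that $G'$ is an induced subgraph so \cref{lem:vc-density-1} still applies, and in tracking the $k+1$ bound per class coming from the threshold $|S|>k+1$ in \cref{rule:capvc}.
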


\cref{thm:cnvc-cpvc} is a direct consequence of~\cref{prop:connkernel,prop:cpvc}.

\subsection{Improved kernel for \kconvc}

We present here a kernelization algorithm for \kconvc\ on bounded twin-width graphs which leads to an instance on $O(k^{1.5})$ vertices, and a simple linear kernel when the input class is further restricted to be sparse, that is, $K_{s,s}$-free. 

Let $X$ be a vertex cover of $G$, and let $X^b$ (resp.  $X^s$) be the subsets of $X$ containing all vertices of $X$ with at least $k+1$, respectively at most $k$, neighbors in $V(G) \setminus X$. 
Let $Y_1, \ldots, Y_q$ be the partition of $V(G) \setminus X$ into maximal modules. 
For each $i \in [q]$, let $X_i$ be the neighbors of $Y_i$ in $X^s$.

\begin{reduction}\label{rule:convc2}
If there is $i\in [q]$ with $X_i\neq \emptyset$ and $\abs{Y_i}\geq \abs{X_i}+2$, then delete a vertex of $Y_i$. 
\end{reduction}

\begin{lemma}\label{lem:safeconvc2}
  Let $G$ be a graph and $G'$ be a graph obtained by applying~\cref{rule:convc2}.
  Then $(G,k)$ is a \textsc{yes}-instance to \kconvc\ if and only if $(G',k)$ is. 
\end{lemma}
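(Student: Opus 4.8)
The plan is to prove each direction of the equivalence by a solution‑surgery argument, driven by two structural facts about the configuration to which Reduction Rule~\ref{rule:convc2} applies. Write $y$ for the deleted vertex, so $G'=G-y$ with $y\in Y_i$. \emph{Fact 1:} $Y_i\subseteq V(G)\setminus X$ is an independent set and, being a module, its vertices are false twins with common neighbourhood $N(Y_i)$; hence the vertices of $Y_i$ are freely interchangeable inside any vertex cover, and any path of $G[T]$ that passes through a vertex of $Y_i$ enters and leaves through $N(Y_i)$. \emph{Fact 2:} every vertex of $X^b$ lies in \emph{every} vertex cover of $G$ of size at most $k$, since it has at least $k+1$ neighbours in the independent set $V(G)\setminus X$, so omitting it would push all of them into the cover. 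Because $X_i\neq\emptyset$ we also have $\abs{Y_i}\ge\abs{X_i}+2\ge 3$, which provides exactly the slack needed below.

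For the forward direction I would let $T$ be a connected vertex cover of $G$ with $\abs{T}\le k$. If $y\notin T$ then $T$ already works for $G'$. If $y\in T$ but some $y'\in Y_i\setminus T$ exists, covering the edges from $y'$ forces $N(Y_i)\subseteq T$, and by Fact 1 the swap $(T\setminus\{y\})\cup\{y'\}$ is a connected vertex cover of $G'$ of the same size. The remaining case $Y_i\subseteq T$ is where the rule's inequality enters: set $A:=N(Y_i)\setminus T$; by Fact 2 we get $A\subseteq X_i$, hence $\abs{A}\le\abs{X_i}\le\abs{Y_i}-2$. Fix $y^\star\in Y_i\setminus\{y\}$, pick $W$ with $y\in W\subseteq Y_i\setminus\{y^\star\}$ and $\abs{W}=\abs{A}+1$, and set $T':=(T\setminus W)\cup A$. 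Then $\abs{T'}=\abs{T}-1\le k$; $T'$ is a vertex cover because every edge incident to $W\subseteq Y_i$ reaches $N(Y_i)=(N(Y_i)\cap T)\cup A\subseteq T'$; and $G[T']$ is connected because $y^\star\in T'$ is adjacent to all of $N(Y_i)$, so every path through a removed vertex of $W$ reroutes through $y^\star$ and each added vertex of $A$ attaches to $y^\star$. Since $y\in W$, $T'$ is a connected vertex cover of $G'$.

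For the backward direction I would let $T$ be a connected vertex cover of $G'$ with $\abs{T}\le k$ (so $y\notin T$). If $N(Y_i)\subseteq T$ then $T$ also covers the new edges $yx$ of $G$ and stays connected, so it works for $G$. Otherwise pick $x_0\in N(Y_i)\setminus T$; covering its edges to $Y_i\setminus\{y\}$ forces $Y_i\setminus\{y\}\subseteq T$, an independent set of size $\ge 2$, so $G'[T]$ contains an edge and $N(Y_i)\cap T\neq\emptyset$. Then Fact 2 adapted to $G'$ applies (if $x\in N(Y_i)\cap X^b$ were outside $T$, its at least $k$ remaining neighbours in $V(G')\setminus X$ would fill $T$, making $G'[T]$ edgeless — contradiction), giving $A:=N(Y_i)\setminus T\subseteq X_i$ with $\abs{A}\le\abs{X_i}\le\abs{Y_i}-2$. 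Choose $W\subseteq Y_i\setminus\{y\}$ with $\abs{W}=\abs{A}+1$ and set $T'':=(T\setminus W)\cup A\cup\{y\}$: then $\abs{T''}=\abs{T}\le k$, $T''$ is a vertex cover of $G$ (edges incident to $W$ or to $y$ reach $N(Y_i)\cup\{y\}\subseteq T''$, the rest untouched), and $G[T'']$ is connected because $y$ is adjacent to the nonempty set $N(Y_i)\cap T$, paths through $W$ reroute through $y$, and the vertices of $A$ attach to $y$.

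The hard part will be the size accounting in the two ``bad'' subcases ($Y_i\subseteq T$ forward, $N(Y_i)\not\subseteq T$ backward): re‑legalising the solution after touching $y$ forces all of $N(Y_i)$ into the cover, and the only reason this does not overshoot $k$ is that the expensive part $N(Y_i)\cap X^b$ is already present in every connected vertex cover of size at most $k$, so the number of genuinely new vertices is at most $\abs{X_i}$ — which, by the rule's hypothesis, is strictly smaller than $\abs{Y_i}$ minus the single twin we must keep for connectivity. The secondary difficulty, and the reason one cannot simply quote the plain \textsc{$k$-Vertex Cover} folklore, is that we must preserve \emph{connectivity} rather than only edge coverage; this I would handle uniformly by the rerouting trick above, which is available precisely because the slack $\abs{Y_i}\ge\abs{X_i}+2$ always leaves a spare twin in $Y_i$.
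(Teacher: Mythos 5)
Your proof is correct and takes essentially the same route as the paper: both directions rest on the observations that $X^b$ is forced into any small connected vertex cover and that one can swap vertices of the module $Y_i$ for the missing part of its neighbourhood, preserving connectivity by rerouting through a retained twin. The only difference is in the bookkeeping of the swap (you add only $A=N(Y_i)\setminus T\subseteq X_i$ and delete a matching-sized subset of $Y_i$, while the paper's \cref{claim:replaceY} always substitutes $X_i\cup\{y'\}$ for $Y'_i$), which does not change the substance of the argument.
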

\begin{proof}
  Let $Y_i \subseteq V(G) \setminus X$ be a module within which~\cref{rule:convc2} is applied, and $y \in Y_i$ be the deleted vertex, that is, such that $G':=G-y$.
  Set $Y'_i := Y_i \setminus \{y\}$.
  We first observe that the vertices of $X^b$ are mandatory for any feasible solution to $(G',k)$.
  Let $T$ be an arbitrary connected vertex cover of $G'$ of size at most $k$.
 
\begin{claim}\label{claim:mandatory}
$X^b \subseteq T$.
\end{claim}
\begin{proofofclaim}
Suppose some vertex $w\in X^b$ is not contained in $T$ and thus all its neighbors belong to $T$. 
Note that $w$ has at least $k$ neighbors outside $X$ in $G'$, even when the deleted vertex $y$ was a neighbor of $w$ in $G$. 
We conclude that $T=N_{G'}(w)\setminus X$, 
which contradicts that $G'[T]$ is connected.
\end{proofofclaim}

Because there are all possible edges between $Y'_i$ and $X_i$, $T$ must contain at least one of $Y'_i$ and $X_i$ entirely. 
The next claim says that $T$ can be assumed to fully contain $X_i$.

\begin{claim}\label{claim:replaceY}
Let $y'$ be a vertex of $Y'_i$. 
If $Y'_i\subseteq T$, then $T':=(T\setminus Y'_i)\cup (X_i\cup \{y'\})$  is a connected vertex cover of $G'$ of size at most $k$.
\end{claim}
\begin{proofofclaim}
Notice that $X^b\cup X_i$ is  the neighborhood of $Y'_i$ in $G'$. 
By~\cref{claim:mandatory}, $T'$ is clearly a vertex cover of $G'$  and $y'$ provides any connection between a pair of vertices in $T'$ that $Y'_i$ used to provide. 
Finally, that $Y'_i$ is obtained from $Y_i$ after~\cref{rule:convc2} means that $\abs{Y'_i}\geq \abs{X_i}+1=\abs{X_i\cup \{y'\}}$, and 
thus $\abs{T'}\leq \abs{T}$. 
\end{proofofclaim}

Due to~\cref{claim:replaceY}, we may assume that $T$ contains $X_i$. Now for any edge $xy$ incident with the deleted vertex $y$, 
$x$ is either in $X^b$ or $X_i$, and thus the edge $xy$ is covered by $T$ by~\cref{claim:mandatory} and the assumption $X_i\subseteq T$. 
It  follows that $T$ is a feasible solution to $(G,k)$.

Conversely, let $T$ be a connected vertex cover of $(G,k)$ of size at most $k$. Following the same arguments as  the above  claims, one can easily check 
that $T$ must contain both $X^b$ and $X_i$, and contains at most one vertex of $Y_i$. As we can modify $T$ so that it does not contain $y$, 
$T$ is a feasible solution to $(G',k)$. 
\end{proof}

\begin{proposition}\label{prop:connkernel2}
  \kconvc\ admits a kernel on $O_t(k^{1.5})$ vertices when the input graphs have twin-width at most $t$. 
\end{proposition}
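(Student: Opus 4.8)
The plan is to recycle the skeleton of \cref{prop:connkernel}. First I would compute a vertex cover $X$ of $G$ from a maximal matching and, if $\abs{X}\ge 2k+1$, output a trivial \textsc{no}-instance (the minimum vertex cover, hence the minimum connected vertex cover, then exceeds $k$); so assume $\abs{X}\le 2k$. Recall that the maximal modules $Y_1,\dots,Y_q$ of $V(G)\setminus X$ are precisely the classes of vertices of $V(G)\setminus X$ sharing a common neighbourhood in $X$, so \cref{lem:vc-density-1} immediately gives $q\le O_t(1)\cdot\abs{X}=O_t(k)$. All the work is then to bound $\sum_{i\in[q]}\abs{Y_i}$ by $O_t(k^{1.5})$; the $O_t(k^2)$ bound of \cref{prop:connkernel} does not suffice, and \cref{rule:convc2} alone does not either, since it leaves untouched every module $Y_i$ with $X_i=\emptyset$ (i.e.\ $N(Y_i)\subseteq X^b$), and such a module may have $\Theta(k)$ vertices while there may be $\Theta(k)$ of them.

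Hence I would first add the natural counterpart of \cref{rule:convc2} for those modules: \emph{if $X_i=\emptyset$, $N(Y_i)\neq\emptyset$ and $\abs{Y_i}\ge 2$, delete a vertex of $Y_i$}. Its safety is proved along the lines of \cref{lem:safeconvc2}: any connected vertex cover $T$ of size at most $k$ (of $G$, or of the reduced graph) must contain $X^b$ by the argument proving \cref{claim:mandatory} — a vertex $w\in X^b\setminus T$ would force $N(w)\setminus X\subseteq T$, an independent set of at least $k$ vertices, which cannot induce a connected subgraph for $k\ge 2$; then $N(Y_i)\subseteq X^b\subseteq T$ already covers all edges incident to $Y_i$, and since the vertices of $Y_i$ are pairwise false twins with their whole neighbourhood inside $T$, a single one of them carries any connection $Y_i$ used to provide, which gives the equivalence in both directions. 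Vertices of $V(G)\setminus X$ with empty neighbourhood are isolated in $G$ and can be discarded (a connected vertex cover of a graph with an edge has no isolated vertex; the edgeless case is trivial). After this rule, \cref{rule:convc2}, and \cref{rule:convc}, every surviving module satisfies $\abs{Y_i}\le\abs{X_i}+1$ if $X_i\neq\emptyset$ and $\abs{Y_i}\le 1$ if $X_i=\emptyset$.

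Then I would close with a threshold argument at $\tau=\lceil\sqrt k\rceil$. Modules with $\abs{Y_i}\le\tau$ contribute in total at most $q\tau=O_t(k)\cdot O(\sqrt k)=O_t(k^{1.5})$ vertices. A module with $\abs{Y_i}>\tau$ has $X_i\neq\emptyset$, so $\abs{Y_i}\le\abs{X_i}+1\le 2\abs{X_i}$ and therefore $\sqrt k\,\abs{Y_i}\le\abs{Y_i}^2\le 2\abs{Y_i}\abs{X_i}$. Since each vertex of $X^s$ has at most $k$ neighbours outside $X$ and $\abs{X^s}\le\abs{X}\le 2k$, there are at most $2k^2$ edges between $V(G)\setminus X$ and $X^s$, and $Y_i$ (completely joined to $X_i$) accounts for $\abs{Y_i}\abs{X_i}$ of them; hence
\[
\sqrt k\sum_{i:\,\abs{Y_i}>\tau}\abs{Y_i}\ \le\ 2\sum_{i:\,\abs{Y_i}>\tau}\abs{Y_i}\abs{X_i}\ \le\ 2\sum_{i\in[q]}\abs{Y_i}\abs{X_i}\ \le\ 4k^2,
\]
so $\sum_{i:\,\abs{Y_i}>\tau}\abs{Y_i}\le 4k^{1.5}$. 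Adding the two contributions, $\abs{V(G)\setminus X}=O_t(k^{1.5})$, and together with $\abs{X}\le 2k$ the reduced instance has $O_t(k^{1.5})$ vertices; equivalence with $(G,k)$ follows from \cref{lem:safeconvc,lem:safeconvc2} and the safety of the extra rule.

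The main obstacle is the bookkeeping that turns the sparse-side intuition into the $k^{1.5}$ bound: recognising that the modules hanging entirely on $X^b$ need their own twin rule, and that the remaining modules cannot all be large because they jointly spend only $O(k^2)$ incidences into the at most $2k$ low-degree vertices of $X^s$ — a budget a $\sqrt k$ threshold converts into $O(k^{1.5})$. As a byproduct, if $G$ is $K_{s,s}$-free then a large module $Y_i$ with $X_i\neq\emptyset$, being completely joined to $X_i$, forces $\min(\abs{Y_i},\abs{X_i})<s$; combined with $\abs{Y_i}\le\abs{X_i}+1$ this yields $\abs{Y_i}\le s$, whence $\sum_i\abs{Y_i}\le qs=O_{t,s}(k)$ and a linear kernel.
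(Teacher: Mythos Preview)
Your proof is correct and follows the same overall skeleton as the paper: compute a $2$-approximate vertex cover $X$ with $\abs{X}\le 2k$, bound the number $q$ of modules outside $X$ by $O_t(k)$ via \cref{lem:vc-density-1}, apply \cref{rule:convc2} so that $\abs{Y_i}\le\abs{X_i}+1$, and then exploit the budget of at most $2k^2$ edges between $Y$ and $X^s$.

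Two points differ. First, the paper closes with a Cauchy--Schwarz-type inequality, writing
\[
\sum_i \abs{Y_i}\abs{X_i}\;\ge\;\sum_i(\abs{Y_i}-1)^2\;\ge\;\frac{1}{q}\,(\abs{Y}-q)^2,
\]
and deriving a contradiction when $\abs{Y}-q$ exceeds $O_t(k^{1.5})$; your threshold argument at $\tau=\lceil\sqrt{k}\rceil$ is an equally valid and slightly more elementary way to extract the same bound from the same edge budget. Second, your additional rule for modules with $X_i=\emptyset$ (neighbours only in $X^b$) is genuinely needed for the argument to go through cleanly: the termwise inequality $\abs{Y_i}\abs{X_i}\ge(\abs{Y_i}-1)^2$ that the paper invokes fails when $X_i=\emptyset$ and $\abs{Y_i}>1$, a case \cref{rule:convc2} does not touch. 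Your safety proof for that rule (via $X^b\subseteq T$ and the interchangeability of false twins whose whole neighbourhood is already in $T$) is correct, so in this respect your write-up is actually more careful than the paper's.
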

\begin{proof}
  Let $(G,k)$ be the input instance of \kconvc.
  We can safely remove any isolated vertex, and assume that every connected component of $G$ contains at least one edge.
If $G$ contains more than one connected component, then clearly $(G,k)$ is a \textsc{no}-instance and we output the 4-vertex graph with two isolated edges. 
Therefore, we can assume that $G$ is connected.
With a 2-approximation algorithm for \textsc{Vertex Cover}, one can find a vertex cover $X$ of $G$ and assume that $\abs{X}\leq 2k$.
Indeed if this is not the case, we can correctly output a trivial \textsc{no}-instance because $G$ does not admit a connected vertex cover of size at most $k$. 

Note that~\cref{rule:convc2} does not disconnect the given graph as we remove a vertex only when it has a twin. 
Let $(G',k)$ be an instance obtained by exhaustively applying~\cref{rule:convc2} with the vertex cover $X$ at hand. 
We classify $X$ into $X^b$ and $X^s$ as before, and  $Y_1, \ldots, Y_q$ denote the partition of $Y:= V(G') \setminus X$ into maximal modules. 
For each $i \in [q]$, $X_i$ is the neighbors of $Y_i$ in $X^s$.
By~\cref{lem:vc-density-1}, we have $q \leq 2 \cdot 2^{4c_{2t+2}} k$.
Because the edge set between $X^s$ and $Y$ is decomposed into the edge sets of complete bipartite graphs on $(Y_i,X_i)$ over $i\in [q]$, the number of edges between $X^s$ and $Y$ is at least 
$$\sum_{i=1}^q \abs{Y_i}\cdot \abs{X_i}\geq  \sum_{i=1}^q (\abs{Y_i}-1)^2 \geq \frac{1}{q}\cdot \left(\sum_{i=1}^q (\abs{Y_i}-1) \right)^2 \geq \frac{1}{q}\cdot (\abs{Y}-q)^2.$$
Suppose that $\abs{Y}-q> 2\cdot 2^{2c_{2t+2}}\cdot k^{1.5}$. 
Now, 
$$\frac{1}{q}\cdot (\abs{Y}-q)^2 > \frac{4\cdot 2^{4c_{2t+2}} \cdot k^3}{2\cdot 2^{4c_{2t+2}}\cdot k}=2k^2,$$
and hence there are more than $2k^2$ edges between $X^s$ and $Y$.
With $\abs{X^s}\leq 2k$, this implies that  
there exists a vertex in $X^s$ which has more than $k$ neighbors in $Y$, contradicting the definition of $X^s$. 
To conclude, $G'$ has at most 
$$\abs{X}+\abs{Y}\leq 2k + 2\cdot 2^{2c_{2t+2}}\cdot k^{1.5} + q \leq   2k + 2\cdot 2^{2c_{2t+2}}\cdot k^{1.5} + 2\cdot 2^{4c_{2t+2}} \cdot k = O_t(k^{1.5})$$ vertices as claimed. 
\end{proof}

Note that the proof of~\cref{prop:connkernel2} only uses the fact that the input graphs have VC density at most~1, so we in fact established~\cref{thm:cnvc-improved}.

\section{Graphs of twin-width~1}\label{sec:poly}

Cycles on at least five vertices and their complements have twin-width~2.
Thus graphs of twin-width 1 do not have induced cycles of length at least 5, nor their complements.
In particular, they are perfect.
Hence computing the independence, clique, and coloring numbers can be done in polynomial-time~\cite{Grotschel84}.

Graphs of twin-width~0 are cographs, and can be recognized in linear time~\cite{Habib05}. 
It turns out that twin-width~1 graphs are also efficiently recognizable.
For that, we need the following technical lemma.
\begin{lemma}\label{lem:one-red-edge}
  Let $G$ be a prime graph of twin-width 1, and let $G=G_n, \ldots, G_2, G_1$ be a 1-sequence.
  Then the trigraphs $G_i$ have exactly one red edge, except for $G_n$ and $G_1$ which have no red edge.
\end{lemma}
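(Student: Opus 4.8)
The plan is to track the number of red edges along any $1$-sequence and show it cannot drop to $0$ in the middle, nor exceed $1$, except at the two endpoints. First I would observe the trivial bounds: $G_n = G$ and $G_1 = K_1$ have no red edges by definition of a trigraph sequence (the starting graph is edge-colored all black, and the one-vertex graph has no edges). For the intermediate trigraphs, since the sequence is a $1$-sequence, every $G_i$ has maximum red degree at most $1$; hence the red graph of $G_i$ is a (partial) matching, and in particular each $G_i$ has \emph{at most} some number of red edges — but a matching can a priori have many edges, so the real content is showing that at most \emph{one} red edge is present and at least one is present for $2 \le i \le n-1$.

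The key structural input is that $G$ is \emph{prime}. I would argue as follows. Suppose toward a contradiction that some $G_i$ with $i < n$ has no red edge. Then $G_i$ is an ordinary graph (no red edges), obtained from $G$ by a sequence of contractions, each of which merges a pair of vertices. But a contraction of $u,v$ produces only black edges incident to the new vertex $z$ precisely when $u$ and $v$ have the same neighborhood outside $\{u,v\}$ in the current trigraph — i.e., when the contracted parts form a module (relative to the rest). More carefully: if $G_i$ has no red edge, consider the last contraction performed to reach $G_i$, say merging parts $u(G),v(G)$ into $P$. For the resulting vertex $P$ to have red degree $0$, every vertex $x \notin u(G) \cup v(G)$ must be black-complete or black-anticomplete to both $u(G)$ and $v(G)$ consistently — and since $G_{i+1}$ also had at most one red edge, propagating this back I would show that $u(G) \cup v(G)$ is a module of $G$ with at least two vertices and is not all of $V(G)$ (as $i+1 \le n$, so $G$ has at least two vertices, and $i \ge 2$ wait — I should be careful: if $i \ge 2$ then $P$ is not the whole vertex set). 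This contradicts primality of $G$. Symmetrically, $G_i$ cannot have two or more red edges: since the red graph is a matching, two red edges are vertex-disjoint, say $\{a,b\}$ and $\{c,d\}$; contracting toward $G_1$ must eventually eliminate all red edges, and I would trace the first moment a red edge disappears to again extract a nontrivial module of $G$, using that a red edge $\{a,b\}$ of $G_j$ can only be destroyed by contracting $a$ with $b$ (contracting them with other vertices keeps a red edge, by the maximum-red-degree-$1$ constraint and the fact that the symmetric difference of their neighborhoods is nonempty as long as the red edge persists), and that such a contraction merges parts forming a module relative to the rest.

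The main obstacle I expect is the bookkeeping of the second claim (at most one red edge): one must handle the interaction of two disjoint red edges simultaneously and argue that whichever is destroyed first does so by contracting its own two endpoints, yielding a module. The delicate point is verifying that while a red edge $\{a,b\}$ is alive, the two parts $a(G), b(G)$ are each modules \emph{relative to} $V(G) \setminus (a(G) \cup b(G) \cup (\text{the other red edge's parts}))$, so that their union is a genuine nontrivial module of $G$; this requires observing that at the moment $\{c,d\}$ is not yet destroyed, $c(G)$ and $d(G)$ are each homogeneous to $a(G)$ and to $b(G)$ (else $a$ or $b$ would have red degree $\ge 2$), so the ``other red edge'' does not obstruct homogeneity. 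I would also record, as a small lemma along the way, that no $G_i$ ($1 \le i \le n$) can be a trigraph with a red edge but on only one vertex — vacuous — and handle the base cases $i = n$ (all black by fiat) and $i = 2$ (two vertices, one contraction away from $K_1$; a single red or black edge, and red-edge case forced because otherwise $V(G)$ splits into two modules, contradicting primality for $n \ge 3$; for $n = 2$ a prime graph does not exist, so the statement is vacuous). Assembling these, every intermediate $G_i$ has exactly one red edge.
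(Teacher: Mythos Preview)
Your argument for ``at least one red edge'' is fine (indeed: if $G_i$ has no red edge for some $2 \le i \le n-1$, then every part of $G_i$ is a module of $G$, and since $i<n$ some part has size at least $2$ while since $i\ge 2$ it is a proper subset of $V(G)$, contradicting primality; no need to inspect the last contraction).

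The genuine gap is in your ``at most one red edge'' argument. You claim that when a red edge $\{a,b\}$ is destroyed by contracting $a$ with $b$, the union $a(G)\cup b(G)$ is a module of $G$. This is false in general. Any part $P$ of $G_j$ with $P\notin\{a,b\}$ has no red edge to $a$ or $b$, so $P$ is homogeneous to $a(G)$ and to $b(G)$ \emph{separately}; but $P$ may be black-adjacent to $a$ and non-adjacent to $b$, hence not homogeneous to $a(G)\cup b(G)$. In that case contracting $a$ with $b$ creates a new red edge to $P$, the merged vertex has red degree $1$ (not $0$), and no module appears. Your remark that $c(G),d(G)$ are homogeneous to $a(G)$ and to $b(G)$ is correct but beside the point: the obstruction comes from ordinary red-degree-$0$ parts, not from the other red edge.

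Your forward approach can be repaired by tracking the red-edge \emph{count} rather than a specific edge: a case analysis shows that, whenever $r_j\ge 2$, every contraction either keeps $r_{j-1}\ge r_j$ or produces a merged vertex of red degree $0$ and hence a nontrivial module; since $r_2\le 1$, the count must drop somewhere, forcing the contradiction. The paper instead \emph{rewinds} from $G_3$ to $G_n$ and proves the stronger invariant that each $G_i$ has exactly one red edge \emph{and} every other vertex is a singleton of $G$. Besides being cleaner, this stronger property is exactly what drives the polynomial-time recognition algorithm in the next theorem, so it is worth proving directly.
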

\begin{proof}
  The idea of the proof is to start with the 3-vertex trigraph $G_3$ and to rewind the contraction sequence back to $G_n=G$.
  Let $a, b, c$ be the three vertices of $G_3$.
  Since $G_3$ is a 1-trigraph, there is at most one red edge in $G_3$.
  Since $G$ is a prime graph, there has to be at least one red edge in $G_3$, say between $b$ and $c$.
  Furthermore, $a$ has to be a vertex in $G$, otherwise $a(G)$ is a module of size at least two in $G$.
  Moreover, $a$ is adjacent to, say, $b$, but non-adjacent to the other vertex, $c$ since otherwise $b(G)\cup c(G)$ is a module.
  
  We show by induction the following property $(\mathsf P)$ for $i \in [2,n-1]$: 
  \begin{center}
  {\sf Property P}: in  $G_i$, there is exactly one red edge, say, $uv$ ($u$ and $v$ can depend on $i$) and all the vertices that are not $u$ nor $v$ are original vertices of $G$.
  \end{center}
  We already observed that  {\sf Property P} holds  for $G_3$. It  also holds for $G_2$ because $G$ contains no non-trivial module. 
  Assuming that $G_i$ satisfies {\sf Property P} for some $i \in [3,n-2]$, 
  let us prove that $G_{i+1}$ also satisfies {\sf Property P}.
  Let $uv$ be the red edge in $G_i$.
  By assumption, all the other vertices are vertices in $G$.
  Hence, the graph $G_{i+1}$ is obtained from $G_i$ by splitting $u$ or $v$, say $u$, into two vertices $u_1$ and $u_2$.
  Now, there is at most one red edge in $G_{i+1}[\{u_1,u_2,v\}]$ since otherwise a vertex shall be incident with two  red edges. 
  Recall that at least one vertex of $G_{i+1}$, thus one of $\{u_1,u_2,v\}$ is not an original vertex of $G$. Such a vertex must  
  be incident with a red edge because it is not a module. Therefore, $G_{i+1}[\{u_1,u_2,v\}]$ has a unique red edge. 
  Without loss of generality, assume that the red edge is $u_2v$.
  Again, $u_1$ should be an original vertex of $G$, i.e., $|u_1(G)|=1$, because it has no incident red edge and therefore $u_1(G)$ forms a module.
  So {\sf Property P} holds for $G_{i+1}$.
\end{proof}

We can efficiently detect graphs of twin-width~1 using the previous lemma and an inductive scheme.
\begin{theorem}\label{thm:tww-one}
  Twin-width 1 graphs can be recognized in polynomial time.
  Moreover, a 1-sequence where each trigraph has at most one red edge can be constructed in polynomial time.
\end{theorem}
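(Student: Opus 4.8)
The plan is to reduce, via modular decomposition, to the case of prime graphs, and there to exploit~\cref{lem:one-red-edge} so that the search for a $1$-sequence becomes a bounded-branching exploration of a polynomial-size family of trigraphs.

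\textbf{Reduction to prime graphs.} First I would compute the modular decomposition tree of $G$ in polynomial time. By iterating~\cref{lem:modular}, $\tww(G)\le 1$ if and only if every quotient graph attached to a node of the tree has twin-width at most~$1$; the degenerate nodes contribute complete or edgeless quotients, of twin-width~$0$, so only the prime quotients matter. Conversely, given $1$-sequences of all the quotients, one assembles a $1$-sequence of $G$ by processing the tree bottom-up: contracting a module entirely before touching anything else only creates red edges inside that module (as in the proof of~\cref{lem:modular}), so at every moment the red edges live inside a single ``active'' part of the decomposition; hence if each quotient's $1$-sequence keeps at most one red edge throughout, so does the assembled $1$-sequence of $G$. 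Graphs on at most two vertices are cographs and are handled directly. It thus remains to decide, for a \emph{prime} graph $H$ on $n\ge 3$ vertices, whether $\tww(H)\le 1$, and if so to output a $1$-sequence with at most one red edge throughout.

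\textbf{The prime case.} Here I would use~\cref{lem:one-red-edge}: in any $1$-sequence $H=H_n,\dots,H_1$, every $H_i$ with $2\le i\le n-1$ has \emph{exactly} one red edge, and all of its vertices but the two endpoints of that red edge are original vertices of $H$. So every trigraph reachable from $H$ by a partial $1$-sequence that maintains at most one red edge is described by an unordered pair $\{A,B\}$ of disjoint nonempty subsets of $V(H)$ — the $H$-origins of the two endpoints of its red edge — all remaining vertices being singletons; call such a pair a \emph{configuration}. The first contraction of a valid sequence merges two vertices $a,b$ with $|N(a)\triangle N(b)|=1$ (the difference is exactly one vertex, since $H$ is prime, hence twin-free), giving the configuration $\{\{a,b\},\{c\}\}$ with $c$ that unique vertex; there are at most $\binom n2$ such starting configurations. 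I would then run a breadth-first search over reachable configurations: from $\{A,B\}$ one may contract the two endpoints $A$ and $B$, or grow one endpoint by absorbing a singleton, keeping a move only when it leaves exactly one red edge (equivalently, when the relevant symmetric difference of black-neighbourhoods has size exactly~$1$; size~$0$ would exhibit a module of $H$ of size $<n$, contradicting primeness). If $K_1$ is reached, $H$ has twin-width at most~$1$ and a suitable $1$-sequence is read off by backtracking the search and prepended to the modular-decomposition sequences above; otherwise $\tww(H)>1$.

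\textbf{Keeping the search small, and the main obstacle.} To make this polynomial I would first observe that contracting a pair of (black) twins is always a valid move preserving twin-width, so one may always pass to a twin-reduced trigraph; in such a trigraph with unique red edge $zw$, any valid contraction must involve $z$ or $w$ (contracting two background singletons would create a second red edge unless they were twins, which is excluded), and the partner of, say, $z$ must be an \emph{almost-twin} of $z$ (same black-neighbourhood except possibly at $w$), of which there are at most two (two of them would be mutual twins), plus the single move contracting $z$ with $w$. Moreover primeness of $H$ rules most of these out — absorbing a ``wrong'' singleton would force a twin of an original vertex of $H$ — so the continuation of a valid sequence is essentially forced once the first pair is chosen. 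The delicate point, which I expect to be the main obstacle, is precisely to turn this into a polynomial bound on the number of reachable configurations (rather than settling for the a priori exponential branching): one must combine~\cref{lem:one-red-edge} with primeness to show the endpoint-sets $A,B$ stay structured along every valid sequence, so that only $\mathrm{poly}(n)$ pairs $\{A,B\}$ ever occur, while also treating the small base cases carefully (the case $n=3$, and the last two contractions $H_3\to H_2\to H_1$ where the red edge may vanish). Once this is in place, the breadth-first search visits $\mathrm{poly}(n)$ configurations, each processed in polynomial time, giving the recognition algorithm and, by backtracking, the required $1$-sequence.
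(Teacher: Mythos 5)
Your high-level plan coincides with the paper's: reduce via modular decomposition to the prime case, and there exploit \cref{lem:one-red-edge} so that every intermediate trigraph carries a unique red edge whose two endpoints are the only non-singleton parts. You also correctly identify the three kinds of candidate moves at a generic step (absorb an almost-twin into one endpoint, absorb into the other, or contract the red edge), and correctly argue that no move can involve two background singletons. Up to this point you match the paper.

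The gap is exactly where you flag it, and it is genuine: a breadth-first search over configurations $\{A,B\}$ has, a priori, no polynomial bound on the number of states, since $A$ and $B$ can be arbitrary disjoint subsets. You hope that primeness alone forces the continuation, but the mechanism you invoke (``absorbing a wrong singleton would force a twin of an original vertex'') is not what eliminates the branching, and you do not supply the missing bound. What actually closes the argument in the paper is a \emph{safety} observation, not an impossibility one: whenever an endpoint $u$ of the red edge is contracted with a background vertex $w$ without creating a second red edge, the resulting trigraph is literally the induced subtrigraph $G-\{w\}$, and by \cref{obs:induced-subgraph} passing to an induced subtrigraph can only decrease twin-width. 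Hence \emph{every} such absorption move is safe and can be performed greedily, in any order, without ever having to undo it. Once no absorption is available, the only remaining legal move is to contract the red edge itself, and that move is therefore forced. Consequently, after the (at most quadratically many) choices for the first contraction, the continuation is deterministic: no search tree, no configuration BFS, just $n-1$ forced or safe steps, each checkable in polynomial time. This greediness argument is the missing idea; with it, your "at most two almost-twins per endpoint" observation becomes irrelevant to the complexity, because one never needs to choose between them. Your proposal as written therefore does not yet prove the theorem, but it is one observation away from the paper's proof.

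A couple of smaller points. The quantity governing the first contraction should be $|(N(a)\triangle N(b))\setminus\{a,b\}|$ rather than $|N(a)\triangle N(b)|$, since the contracted pair is removed from the vertex set; primeness then gives that this quantity is at least $1$ for every pair. Also, the base cases $G_3,G_2,G_1$ are unproblematic once \cref{lem:one-red-edge} is in hand: for a prime $G$ on $n\ge 3$ vertices, $G_2$ has exactly one red edge and $G_1$ none, so the red edge ``vanishing early'' cannot occur and need not be handled separately.
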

\begin{proof}
  Let $G$ be a given graph. 
  We show by induction on $n$, the number of vertices of $G$, how to decide whether or not it has twin-width 1.
  If $G$ has one vertex, it has twin-width 1 (even 0).
  If $G$ is not a prime graph, then each maximal module of $G$ and the quotient graph of $G$ by its (pairwise disjoint) maximal modules all have at most $n-1$ vertices.
  So \cref{lem:modular} and the inductive step allow us to conclude.
  We can therefore assume that $G$ is a prime graph.
  
  As a technicality, it is more convenient to assume that $G$ is a trigraph with exactly one red edge.
  We can in quadratic time guess the first contraction (which creates exactly one red edge since the contractions of twins are safe by~\cref{obs:induced-subgraph}).
  Recall that $G$ has exactly one red edge, say $uv$.
  Let us establish the next contraction.
  By \cref{lem:one-red-edge}, we can limit ourselves to contractions which keep the number of red edges to at most one.
  Therefore, the next contraction cannot be between two vertices in $V(G) \setminus \{u,v\}$.
  Indeed, that would either add at least one new red edge to the trigraph or the two contracted vertices would be twins, and as such, in a module of size at least two.

  Any contraction between $w \in V(G) \setminus \{u,v\}$ and $u$ (or $v$) that creates a new red edge can be disregarded again by \cref{lem:one-red-edge}.
  The only way that such a contraction does not increase the number of red edges is if the contraction actually results in the induced subgraph $G - \{w\}$.
  By~\cref{obs:induced-subgraph}, this contraction can be safely performed.
  When no such safe contraction is available, we know that only the current red edge can possibly be contracted.
  Hence, we can safely make this contraction, and we obtain a new graph with $n-1$ vertices.
  Observe that in this process, we never branch on two possible contractions.
  
  The second statement immediately follows by concatenating the 1-sequence of the modules and of the quotient graph, each of twin-width at most 1, as in~\cref{lem:modular}.
\end{proof}

We remark that polynomial time algorithms are not hard to obtain for, say, \kconvc, \kds\ or \textsc{Connected $k$-DS}, when a 1-sequence as in~\cref{thm:tww-one} 
is given. Consider \kconvc\ for instance. For each red component, we "trace" a partial solution $X$, i.e., a vertex set covering all original edges whose endpoints belong to the said red component, 
by recording for each vertex $u$ of the component whether 
$u(G)$ is fully contained in $X$, intersecting (but not fully contained in) $X$, or not intersecting at all. Among all partial solutions leaving the same trace over the red component, we remember the minimum size of a solution. 
One can readily check that the optimal value of a partial solution per trace can be updated in constant time for each newly created red component. 

Readers may 
further notice that this dynamic programming works not only when there is at most one red edge, but more generally when a red component size is bounded by a constant.
In a manuscript under preparation, the authors show that MSOL$_1$-expressible problems 
can be solved in $f(d)\cdot n^{O(1)}$-time if the input graph admits a contraction sequence in which each red component has size at most $d$. 

\paragraph*{Acknowledgments}
We thank Noga Alon and Bart M.~P.~Jansen for independently asking whether \kds admits a polynomial kernel on classes of bounded twin-width, an interesting question that led to our main result.

\end{document}